\title{Solving connectivity problems parameterized by treedepth in single-exponential time and polynomial space}
\titlerunning{Solving connectivity problems parameterized by treedepth}
\author{Falko Hegerfeld}{Humboldt-Universit\"at zu Berlin, Germany}{hegerfeld@informatik.hu-berlin.de}{https://orcid.org/0000-0003-2125-5048}{}
\author{Stefan Kratsch}{Humboldt-Universit\"at zu Berlin, Germany}{kratsch@informatik.hu-berlin.de}{https://orcid.org/0000-0002-0193-7239}{}
\authorrunning{F. Hegerfeld and S. Kratsch}
\keywords{Parameterized Complexity, Connectivity, Treedepth, Cut\&Count, Polynomial Space}
\theoremstyle{definition}
\newtheorem{dfn}{Definition}[section]
\theoremstyle{plain}
\newtheorem{thm}[dfn]{Theorem}
\newtheorem{cor}[dfn]{Corollary}
\newtheorem{lem}[dfn]{Lemma}
\crefname{dfn}{Definition}{Definitions}
\crefname{thm}{Theorem}{Theorems}
\crefname{cor}{Corollary}{Corollaries}
\crefname{lem}{Lemma}{Lemmata}
\crefname{prop}{Proposition}{Propositions}
\crefname{rem}{Remark}{Remarks}
\crefname{algorithm}{Algorithm}{Algorithms}
\crefname{algocf}{Algorithm}{Algorithms}
\newenvironment{problem}[2][]{%
  \def\problem@arg{#1}%
  \def\problem@framed{framed}%
  \def\problem@lined{lined}%
  \def\problem@doublelined{doublelined}%
  \ifx\problem@arg\@empty%
    \def\problem@hline{}%
  \else%
    \ifx\problem@arg\problem@doublelined%
      \def\problem@hline{\hline\hline}%
    \else%
      \def\problem@hline{\hline}%
    \fi%
  \fi%
  \ifx\problem@arg\problem@framed%
    \def\problem@table{\tabularx{\textwidth}{|>{\bfseries}lX|c}}%
    \def\problem@title{\multicolumn{2}{|l|}{%
        \raisebox{-\fboxsep}{\textsc{\Large #2}}%
      }}%
  \else
    \def\problem@table{\tabularx{\textwidth}{>{\bfseries}lXc}}%
    \def\problem@title{\multicolumn{2}{l}{%
        \raisebox{-\fboxsep}{\textsc{\Large #2}}%
      }}%
  \fi%
  \bigskip\par\noindent%
    \problem@table%
      \problem@hline%
      \problem@title\\[2\fboxsep]%
}{%
      \\\problem@hline%
    \endtabularx%
  \medskip\par%
}
\newcommand{\Oh}{\mathcal{O}} 
\newcommand{\NP}{\ensuremath{\textsf{NP}}\xspace}
\newcommand{\FPT}{\ensuremath{\textsf{FPT}}\xspace}
\newcommand{\family}{\mathcal{F}}
\newcommand{\solutions}{\mathcal{S}}
\newcommand{\candidates}{\mathcal{R}}
\newcommand{\cuts}{\mathcal{C}}
\newcommand{\cutsols}{\mathcal{Q}}
\newcommand{\tree}{\mathtt{tree}}
\newcommand{\tail}{\mathtt{tail}}
\newcommand{\child}{\mathtt{child}}
\newcommand{\broom}{\mathtt{broom}}
\newcommand{\ZZ}{\mathbb{Z}}
\newcommand{\PP}{\mathbb{P}}
\newcommand{\NN}{\mathbb{N}}
\newcommand{\FF}{\mathbb{F}}
\newcommand{\TT}{\mathcal{T}}
\newcommand{\one}{\mathbf{1}}
\newcommand{\zero}{\mathbf{0}}
\newcommand{\wfct}{\mathbf{w}}
\newcommand{\wtarget}{w}
\newcommand{\countc}{\mathtt{CountC}}
\newcommand{\cc}{\mathtt{cc}}
\newcommand{\Pin}[2]{P_{[#1]}(#2)}
\newcommand{\Pex}[2]{P_{(#1)}(#2)}
\newcommand{\partialin}[2]{\mathcal{P}_{[#1]}(#2)}
\newcommand{\partialex}[2]{\mathcal{P}_{(#1)}(#2)}
\newcommand{\td}{\mathrm{td}}
\newcommand{\tw}{\mathrm{tw}}
\newcommand{\pw}{\mathrm{pw}}
\newcommand{\bip}{\mathbf{bip}}
\newcommand{\forest}{\mathbf{F}}
\newcommand{\marked}{\mathbf{M}}
\newcommand{\deleted}{\mathbf{X}}
\newcommand{\remain}{\mathbf{A}}
\newcommand{\states}{\mathtt{states}}
\newcommand{\sep}{\mid}
\newcommand{\cupdot}{\mathbin{\mathaccent\cdot\cup}}
\begin{document}

\maketitle

\begin{abstract}
A breakthrough result of Cygan et al.\ (FOCS 2011) showed that connectivity problems parameterized by treewidth can be solved much faster than the previously best known time $\Oh^*(2^{\Oh(\tw\log \tw)})$. Using their inspired Cut\&Count technique, they obtained $\Oh^*(\alpha^\tw)$ time algorithms for many such problems. Moreover, they proved these running times to be optimal assuming the Strong Exponential-Time Hypothesis. Unfortunately, like other dynamic programming algorithms on tree decompositions, these algorithms also require \emph{exponential space}, and this is widely believed to be unavoidable. In contrast, for the slightly larger parameter called treedepth, there are already several examples of algorithms matching the time bounds obtained for treewidth, but using only polynomial space. Nevertheless, this has remained open for connectivity problems.

In the present work, we close this knowledge gap by applying the Cut\&Count technique to graphs of small treedepth. While the general idea is unchanged, we have to design novel procedures for counting consistently cut solution candidates using only polynomial space. Concretely, we obtain time $\Oh^*(3^d)$ and polynomial space for \textsc{Connected Vertex Cover}, \textsc{Feedback Vertex Set}, and \textsc{Steiner Tree} on graphs of treedepth $d$. Similarly, we obtain time $\Oh^*(4^d)$ and polynomial space for \textsc{Connected Dominating Set} and \textsc{Connected Odd Cycle Transversal}.
\end{abstract}

\newpage

\section{Introduction}

The goal of parameterized complexity is to reign in the combinatorial explosion present in \NP-hard problems with the help of a secondary parameter. This leads us to the search for fixed-parameter tractable (\FPT) algorithms, i.e., algorithms with running time $\Oh(f(k) n^c)$ where $n$ is the input size, $k$ is the secondary parameter, $f$ is a computable function, and $c$ is a constant. There are several books giving a broad overview of parameterized complexity \cite{CyganFKLMPPS15, DowneyF13, FlumG06, Niedermeier06}. One of the success stories of parameterized complexity is a graph parameter called treewidth. A large swath of graph problems admit \FPT-algorithms when parameterized by treewidth as witnessed by, amongst other things, Courcelle's theorem \cite{Courcelle90}. However, the function $f$ resulting from Courcelle's theorem is non-elementary \cite{FrickG04}. Thus, a natural goal is to find algorithms with a smaller, or ideally minimal, dependence on the treewidth in the running time, i.e., algorithms where $f$ is as small as possible. Problems only involving local constraints usually permit a single-exponential dependence on the treewidth ($\tw$) in the running time, i.e., time $\Oh^*(\alpha^{\tw})$ for some small constant $\alpha$,\footnote{The $\Oh^*$-notation hides polynomial factors in the input size.} by means of dynamic programming on tree decompositions~\cite{AlberN02, TelleP93, RooijBR09, RooijBLRV18}. For many of these problems we also know the optimal base $\alpha$ if we assume the Strong Exponential-Time Hypothesis (SETH) \cite{LokshtanovMS18}. For a long time a single-exponential running time seemed to be out of reach for problems involving global constraints, in particular for connectivity constraints. This changed when Cygan et al.~\cite{CyganNPPRW11} introduced the Cut\&Count technique, which allowed them to obtain single-exponential-time algorithms for many graph problems involving connectivity constraints. Again, many of the resulting running times can be shown to be optimal assuming SETH \cite{CyganNPPRW11}. 

The issue with treewidth-based algorithms is that dynamic programming on tree decompositions seems to inherently require exponential space. In particular, Chen et al.~\cite{ChenRRV18} devised a model for single-pass dynamic programming algorithms on tree decompositions and showed that such algorithms require exponential space for \textsc{Vertex Cover} and 3-\textsc{Coloring}. Algorithms requiring exponential time and exponential space usually run out of available space before they hit their time limit \cite{Woeginger04}. Hence, it is desirable to reduce the space requirement while maintaining the running time. As discussed, this seems implausible for treewidth. Instead, we consider a different, but related, parameter called treedepth. Treedepth is a slightly larger parameter than treewidth and of great importance in the theory of sparse graphs \cite{NesetrilM06, NesetrilO12, NesetrilM15}. It has been studied under several names such as minimum elimination tree  height \cite{BodlaenderGHK95}, ordered chromatic number \cite{KatchalskiMS95}, and vertex ranking \cite{BodlaenderDJKKMT98}. F\"urer and Yu~\cite{FurerY17} established an explicit link between treedepth and tree decompositions, namely that treedepth is obtained by minimizing the maximum number of forget nodes in a root-leaf-path over all nice tree decompositions (see \cite{Kloks94} for a definition). Many problems parameterized by treedepth allow branching algorithms on \emph{elimination forests}, also called \emph{treedepth decompositions}, that match the running time of the treewidth-algorithms, but replacing the dependence on treewidth by treedepth, while only requiring polynomial space \cite{ChenRRV18, FurerY17, PilipczukW18}. 

\subparagraph*{Our contribution.}
The Cut\&Count technique reduces problems with connectivity constraints to counting problems of certain cuts, called \emph{consistent cuts}. We show that for several connectivity problems the associated problem implied by the Cut\&Count technique can be solved in time $\Oh^*(\alpha^d)$ and polynomial space, where $\alpha$ is a constant and $d$ is the depth of a given elimination forest. Furthermore, the base $\alpha$ matches the base in the running time of the corresponding treewidth-algorithm. Concretely, given an elimination forest of depth $d$ for a graph $G$ we prove the following results:
\begin{itemize}
 \item \textsc{Connected Vertex Cover}, \textsc{Feedback Vertex Set}, and \textsc{Steiner Tree} can be solved in time $\Oh^*(3^d)$ and polynomial space.
 \item \textsc{Connected Dominating Set} and \textsc{Connected Odd Cycle Transversal} can be solved in time $\Oh^*(4^d)$ and polynomial space.
\end{itemize}

\subparagraph*{Related work.}
The Cut\&Count technique leads to randomized algorithms as it relies on the Isolation Lemma. At the cost of a worse base in the running time, Bodlaender et al.~\cite{BodlaenderCKN15} present a generic method, called the rank-based approach, to obtain deterministic single-exponential-time algorithms for connectivity problems parameterized by treewidth; the rank-based approach is also able to solve counting variants of several connectivity problems. Fomin et al.~\cite{FominLPS16} use matroid tools to, amongst other results, reobtain the deterministic running times of the rank-based approach. In a follow-up paper, Fomin et al.~\cite{FominLPS17} manage to improve several of the deterministic running times using their matroid tools. Multiple papers adapt the Cut\&Count technique and rank-based approach to graph parameters different from treewidth. Bergougnoux and Kant\'e~\cite{BergougnouxK19a} apply the rank-based approach to obtain single-exponential-time algorithms for connectivity problems parameterized by cliquewidth. The same authors~\cite{BergougnouxK19b} generalize, incurring a loss in the running time, this approach to a wider range of parameters including rankwidth and mim-width. Furthermore, Bergougnoux~\cite{Bergougnoux19} also applies the Cut\&Count technique in this more general setting. Pino et al.~\cite{PinoBR16} use the Cut\&Count technique and rank-based approach to obtain fast deterministic and randomized algorithms for connectivity problems parameterized by branchwidth. 

Lokshtanov and Nederlof~\cite{LokshtanovN10} present a framework using algebraic techniques, such as Fourier, Möbius, and Zeta transforms, to reduce the space usage of certain dynamic programming algorithms from exponential to polynomial. F\"urer and Yu~\cite{FurerY17} adapt this framework to the setting where the underlying set (or graph) is dynamic instead of static, in particular for performing dynamic programming along the bags of a tree decomposition, and obtain a $\Oh^*(2^d)$-time, where $d$ is the depth of a given elimination forest, and polynomial-space algorithm for counting perfect matchings. Using the same approach, Belbasi and F\"urer~\cite{BelbasiF19} design an algorithm counting the number of Hamiltonian cycles in time $\Oh^*((4k)^d)$, where $k$ is the width and $d$ the depth of a given tree decomposition, and polynomial space. Furthermore, they also present an algorithm for the traveling salesman problem with the same running time, but requiring pseudopolynomial space.

\subparagraph*{Follow-up work.}\label{followup}
Nederlof et al.~\cite{NederlofPSW20} solve the open problem stated in the conclusion regarding a polynomial-space algorithm for $\textsc{Hamiltonian Cycle}$ and $\textsc{Hamiltonian Path}$ parameterized by treedepth. They obtain essentially a single algorithm running in time $\Oh^*(5^d)$ and polynomial space for $\textsc{Hamiltonian Cycle}$, $\textsc{Hamiltonian Path}$, $\textsc{Long Cycle}$, $\textsc{Long Path}$ and $\textsc{Minimum Cycle Cover}$. In this case the obtained base is worse than the base for these problems parameterized by treewidth; they can all be solved in time $\Oh^*(4^\tw)$ \cite{CyganNPPRW11} and $\textsc{Hamiltonian Cycle}$ and $\textsc{Hamiltonian Path}$ can even be solved in time $\Oh^*((2 + \sqrt{2})^\pw)$ parameterized by pathwidth \cite{CyganKN18}.

\subparagraph*{Organization.}
We describe the preliminary definitions and notations in \cref{sec:prelims}. In \cref{sec:cutandcount} we first discuss the Cut\&Count setup and give a detailed exposition for \textsc{Connected Vertex Cover}. Afterwards, we explain what general changes can occur for the other problems and then discuss the remaining problems \textsc{Feedback Vertex Set}, \textsc{Connected Dominating Set}, \textsc{Steiner Tree}, and \textsc{Connected Odd Cycle Transversal}. We conclude in \cref{sec:conclusion}.

\section{Preliminaries}
\label{sec:prelims}

\subsection{Notation}

Let $G = (V, E)$ be an undirected graph. We denote the number of vertices by $n$ and the number of edges by $m$. For a vertex set $X \subseteq V$, we denote by $G[X]$ the subgraph of $G$ that is induced by $X$. The \emph{open neighborhood} of a vertex $v$ is given by $N(v) = \{u \in V \sep \{u,v\} \in E\}$, whereas the \emph{closed neighborhood} is given by $N[v] = N(v) \cup \{v\}$. We extend these notations to sets $X \subseteq V$ by setting $N[X] = \bigcup_{v \in X} N[v]$ and $N(X) = N[X] \setminus X$. Furthermore, we denote by $\cc(G)$ the number of connected components of $G$. For two disjoint sets $A,B \subseteq V$, we let $E(A,B) = \{\{a,b\} \in E \sep a \in A, b \in B\}$ denote the set of edges with one endpoint in $A$ and one endpoint in $B$.

A \emph{cut} of a set $X \subseteq V$ is a pair $(X_L, X_R)$ with $X_L \cap X_R = \emptyset$ and $X_L \cup X_R = X$, we also use the notation $X = X_L \cupdot X_R$. We refer to $X_L$ and $X_R$ as the \emph{left} and \emph{right side} of the cut, respectively. Note that either side may be empty, although usually the left side is nonempty. A \emph{cut} of a graph $G=(V,E)$ is a cut of the vertex set $V$.

For two integers $a,b$ we write $a \equiv b$ to indicate equality modulo 2, i.e., $a$ is even if and only if $b$ is even. We use Iverson's bracket notation: for a predicate $p$, we have that $[p]$ is $1$ if $p$ is true and $0$ otherwise. Given a function $f$, we denote by $f[v \mapsto \alpha]$ the function $(f \setminus \{(v, f(v))\}) \cup \{(v, \alpha)\}$. By $\FF_2$ we denote the field of two elements. For a field or ring $\FF$ we denote by $\FF[Z_1, Z_2, \ldots, Z_t]$ the ring of polynomials in the indeterminates $Z_1, Z_2, \ldots, Z_t$ with coefficients in $\FF$. The $\Oh^*$-notation hides polynomial factors in the input size. For a natural number $n$, we denote by $[n]$ the set of integers from $1$ to $n$.

\subsection{Treedepth}
\label{sec:treedepth}

\begin{dfn}
 An \emph{elimination forest} of an undirected graph $G = (V, E)$ is a rooted forest $\TT = (V, E_\TT)$ such that for every edge $\{u, v\} \in E$ either $u$ is an ancestor of $v$ in $\TT$ or $v$ is an ancestor of $u$ in $\TT$. The \emph{depth} of a rooted forest is the largest number of nodes on a path from a root to a leaf. The \emph{treedepth} of $G$ is the minimum depth over all elimination forests of $G$ and is denoted by $\td(G)$. 
\end{dfn}
 
 We slightly extend the notation for elimination forests used by Pilipczuk and Wrochna~\cite{PilipczukW18}. For a rooted forest $\TT = (V, E_\TT)$ and a node $v \in V$ we denote by $\tree[v]$ the set of nodes in the subtree rooted at $v$, including $v$. By $\tail[v]$ we denote the set of all ancestors of $v$, including $v$. Furthermore, we define $\tree(v) = \tree[v] \setminus \{v\}$, $\tail(v) = \tail[v] \setminus \{v\}$, and $\broom[v] = \{v\} \cup \tail(v) \cup \tree(v)$. By $\child(v)$ we denote the children of $v$.

Note that an elimination forest $\TT$ of a connected graph consists only of a single tree.

\subsection{Isolation Lemma}

\begin{dfn}
 A function $\wfct \colon U \rightarrow \ZZ$ \emph{isolates} a set family $\family \subseteq 2^U$ if there is a unique $S' \in \family$ with $\wfct(S') = \min_{S \in \family} \wfct(S)$, where for subsets $X \subseteq U$ we define $\wfct(X) = \sum_{u \in X} \wfct(u)$.
\end{dfn}

\begin{lem}[Isolation Lemma, \cite{MulmuleyVV87}]
 \label{thm:isolation}
 Let $\family \subseteq 2^U$ be a nonempty set family over a universe $U$. Let $N \in \NN$ and for each $u \in U$ choose a weight $\wfct(u) \in [N]$ uniformly and independently at random. Then
  $\PP[\wfct \text{ isolates } \family] \geq 1 - |U|/N$.
\end{lem}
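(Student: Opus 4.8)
The plan is to follow the classical argument via \emph{tied} elements. Say that an element $u \in U$ is \emph{tied} with respect to a fixed weight function $\wfct$ if there exist two minimum-weight sets $S, S' \in \family$ (i.e.\ $\wfct(S) = \wfct(S') = \min_{T \in \family} \wfct(T)$) with $u \in S$ and $u \notin S'$. The proof rests on two observations: (i) if no element of $U$ is tied, then the minimum-weight set in $\family$ is unique, so $\wfct$ isolates $\family$; and (ii) each fixed element $u$ is tied with probability at most $1/N$. A union bound over $U$ then finishes the argument.

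For (i) I would argue by contrapositive: suppose $\wfct$ does not isolate $\family$, so there are distinct sets $S_1 \neq S_2 \in \family$ both attaining the minimum weight. Since $S_1 \neq S_2$, their symmetric difference $(S_1 \setminus S_2) \cup (S_2 \setminus S_1)$ is nonempty; any element $u$ in it lies in exactly one of $S_1, S_2$, and hence $u$ is tied. Thus, some element of $U$ is tied whenever $\wfct$ fails to isolate $\family$.

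The main step is (ii), which I would prove by a deferred-randomness (principle of deferred decisions) argument. Fix $u \in U$ and condition on the values $\wfct(u')$ for all $u' \neq u$, leaving only $\wfct(u)$ to be sampled. Define $m_0 = \min\{\wfct(S) \sep S \in \family,\ u \notin S\}$ (with $\min \emptyset = +\infty$) and $m_1 = \min\{\wfct(S \setminus \{u\}) \sep S \in \family,\ u \in S\}$; both quantities depend only on the already-revealed weights, not on $\wfct(u)$. The minimum weight among sets of $\family$ containing $u$ is then exactly $m_1 + \wfct(u)$, so $u$ can be tied only if the best set through $u$ ties the best set avoiding $u$, i.e.\ only if $m_0 = m_1 + \wfct(u)$, which forces $\wfct(u)$ to equal the single value $m_0 - m_1$. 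Since $\wfct(u)$ is drawn uniformly from $[N]$ independently of all other weights, and $m_0 - m_1$ is determined by those other weights, this event has probability at most $1/N$; averaging over the conditioning preserves the bound.

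Putting the pieces together, $\PP[\wfct \text{ does not isolate } \family] \leq \PP[\exists u \in U : u \text{ is tied}] \leq \sum_{u \in U} \PP[u \text{ is tied}] \leq |U|/N$ by observations (i) and (ii) and the union bound, which is the claimed inequality. I expect the only delicate point to be the setup of step (ii): one must be careful that the threshold $m_0 - m_1$ is genuinely independent of $\wfct(u)$, which is why the contribution of $u$ is stripped off in the definition of $m_1$; the rest is a routine union bound.
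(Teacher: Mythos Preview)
Your argument is the standard and correct proof of the Isolation Lemma. Note, however, that the paper does not supply its own proof of this statement: it merely states the lemma with a citation to Mulmuley, Vazirani, and Vazirani, so there is nothing in the paper to compare against. Your tied-element analysis and deferred-randomness step are exactly the classical argument from that source, and all the details (in particular stripping $u$ from $S$ in the definition of $m_1$ so that $m_0 - m_1$ is independent of $\wfct(u)$) are handled correctly.
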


When counting objects modulo 2 the Isolation Lemma allows us to avoid unwanted cancellations by ensuring with high probability that there is a unique solution. In our applications, we will choose $N$ so that we obtain an error probability of less than $1/2$.

\section{Cut\&Count}
\label{sec:cutandcount}

In this section $G = (V, E)$ always refers to a connected undirected graph. For the sake of a self-contained presentation, we state the required results for the Cut\&Count technique again, mostly following the presentation of Cygan et al.~\cite{CyganNPPRW11}. Our approach only differs from that of Cygan et al.~\cite{CyganNPPRW11} in the counting sub-procedure. 

We begin by describing the Cut\&Count setup and then present the counting sub-procedure for \textsc{Connected Vertex Cover}. Afterwards we explain how to adapt the counting sub-procedure for the other problems. Our exposition is the most detailed for \textsc{Connected Vertex Cover}, whereas the analogous parts of the other problems will not be discussed in such detail.

\subsection{Setup}

Suppose that we want to solve a problem on $G$ involving connectivity constraints, then we can make the following general definitions. The solutions to our problem are subsets of a universe $U$ which is related to $G$. Let $\solutions \subseteq 2^U$ denote the set of solutions and we want to determine whether $\solutions$ is empty or not. The Cut\&Count technique consists of two parts:
\begin{itemize}
 \item \textbf{The Cut part:} We relax the connectivity constraints to obtain a set $\solutions \subseteq \candidates \subseteq 2^U$ of possibly connected solutions. The set $\cutsols$ will contain pairs $(X, C)$ consisting of a candidate solution $X \in \candidates$ and a consistent cut $C$  of $X$, which is defined in \cref{dfn:cons_cut}.
 \item \textbf{The Count part:} We compute $|\cutsols|$ modulo 2 using a sub-procedure. The consistent cuts are defined so that disconnected candidate solutions $X \in \candidates \setminus \solutions$ cancel, because they are consistent with an even number of cuts. Hence, only connected candidates $X \in \solutions$ remain.
\end{itemize}
If $|\solutions|$ is even, then this approach does not work, because the connected solutions would cancel out as well when counting modulo 2. To circumvent this difficulty, we employ the Isolation Lemma (\cref{thm:isolation}). By sampling a weight function $\wfct \colon U \rightarrow [N]$, we can instead count pairs with a fixed weight and it is likely that there is a weight with a unique solution if a solution exists at all. Formally, we compute $|\cutsols_\wtarget|$ modulo 2 for every possible weight $\wtarget$, where $\cutsols_\wtarget = \{(X, C) \in \cutsols \sep \wfct(X) = \wtarget\}$, instead of computing $|\cutsols|$ modulo 2.
\begin{dfn}[\cite{CyganNPPRW11}]
 \label{dfn:cons_cut}
 A cut $(V_L, V_R)$ of an undirected graph $G = (V, E)$ is \emph{consistent} if $u \in V_L$ and $v \in V_R$ implies $\{u,v\} \notin E$, i.e., $E(V_L, V_R) = \emptyset$. A \emph{consistently cut subgraph} of $G$ is a pair $(X, (X_L, X_R))$ such that $X \subseteq V$ and $(X_L, X_R)$ is a consistent cut of $G[X]$. For $V' \subseteq V$, we denote the set of consistently cut subgraphs of $G[V']$ by $\cuts(V')$.
\end{dfn}

To ensure that connected solutions are not compatible with an even number of consistent cuts, we will usually force a single vertex to the left side of the consistent cut. This results in the following fundamental property of consistent cuts.
\begin{lem}[\cite{CyganNPPRW11}]
 \label{thm:cons_cut}
 Let $X$ be a subset of vertices such that $v_1 \in X \subseteq V$. The number of consistently cut subgraphs $(X, (X_L, X_R))$ such that $v_1 \in X_L$ is equal to $2^{\cc(G[X]) - 1}$.
\end{lem}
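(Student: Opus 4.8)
The plan is to establish a bijection between the consistently cut subgraphs $(X,(X_L,X_R))$ of $G$ with $v_1\in X_L$ and the $2$-colorings of the connected components of $G[X]$ that color the component of $v_1$ with a fixed color, and then simply count the latter. First I would observe that if $(X_L,X_R)$ is a consistent cut of $G[X]$, then no connected component of $G[X]$ can have vertices on both sides: if $C$ is a component with $a\in C\cap X_L$ and $b\in C\cap X_R$, then since $C$ is connected there is a path from $a$ to $b$ inside $C$, and along this path some edge must have one endpoint in $X_L$ and the other in $X_R$, contradicting consistency (i.e.\ $E(X_L,X_R)=\emptyset$). Hence each component of $G[X]$ lies entirely in $X_L$ or entirely in $X_R$.

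Conversely, I would note that \emph{any} assignment of each connected component of $G[X]$ to "left" or "right" yields a consistent cut: each edge of $G[X]$ lies within a single component, hence its two endpoints receive the same side, so $E(X_L,X_R)=\emptyset$. This shows that consistent cuts of $G[X]$ are in bijection with functions from the set of components of $G[X]$ to $\{L,R\}$, and therefore there are exactly $2^{\cc(G[X])}$ consistently cut subgraphs with ground set $X$.

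Finally I would impose the constraint $v_1\in X_L$. Since $v_1\in X$, it lies in exactly one component $C_{v_1}$ of $G[X]$, and by the previous step the cut places all of $C_{v_1}$ on the left. Thus the constraint is equivalent to fixing the value of the corresponding function to $L$ on the single component $C_{v_1}$, while the other $\cc(G[X])-1$ components may be assigned freely. This gives $2^{\cc(G[X])-1}$ consistently cut subgraphs $(X,(X_L,X_R))$ with $v_1\in X_L$, as claimed.

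There is no serious obstacle here; the only point requiring a little care is the forward direction of the bijection, namely the path argument showing that a consistent cut cannot split a connected component — once that is in place, the count is immediate. (In particular, note the statement does not require $G[X]$ to be connected, and the formula correctly degenerates: if $G[X]$ is connected it yields $1$, matching \cref{thm:cons_cut}'s use in forcing a unique cut for connected candidates.)
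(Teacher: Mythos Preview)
Your proposal is correct and follows essentially the same approach as the paper: both argue that every connected component of $G[X]$ must lie entirely on one side of a consistent cut, fix the component containing $v_1$ to the left, and count the $2^{\cc(G[X])-1}$ free choices for the remaining components. The paper's proof is simply a terser version of yours, omitting the explicit path argument and the converse direction that you spell out.
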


\begin{proof}
 By the definition of a consistently cut subgraph $(X, (X_L, X_R))$, we have for every connected component $C$ of $G[X]$ that either $C \subseteq X_L$ or $C \subseteq X_R$. The connected component $C$ that contains $v_1$ must satisfy $C \subseteq X_L$ and for all other connected components we have 2 choices. Hence, we obtain $2^{\cc(G[X]) - 1}$ different consistently cut subgraphs $(X, (X_L, X_R))$ with $v_1 \in X_L$.
\end{proof}

With \cref{thm:cons_cut} we can distinguish disconnected candidates from connected candidates by determining the parity of the number of consistent cuts for the respective candidate. We determine this number not for a single candidate but we determine the total for all candidates with a fixed weight. \cref{thm:general_setup} encapsulates the Cut\&Count technique for treedepth.

\begin{algorithm}[h]
      \KwIn{Set $U$, elimination forest $\TT$, procedure $\countc$ accepting $\wfct \colon U \rightarrow [N]$, $\wtarget \in \NN$}
      \For{$v \in U$}
      {
	Choose $\wfct(v) \in [2|U|]$ uniformly at random\;
      }
      \For{$\wtarget = 1, \ldots, 2|U|^2$}
      {
	\lIf{$\countc(\wfct, \wtarget, \TT) \equiv 1$}
	{
	  \!\!\Return \True
	}
      }
      \Return \False\;
      \caption{Cut\&Count}
      \label{algo:cutncount}
\end{algorithm}

\begin{cor}
 \label{thm:general_setup}
 Let $\solutions \subseteq 2^U$ and $\cutsols \subseteq 2^{U \times (V \times V)}$ such that the following two properties hold for every weight function $\wfct \colon U \rightarrow [2|U|]$ and target weight $\wtarget \in \NN$:
 \begin{enumerate}
  \item $|\{(X,C) \in \cutsols \sep \wfct(X) = \wtarget\}| \equiv |\{X \in \solutions \sep \wfct(X) = \wtarget\}|$,
  \item There is an algorithm $\countc(\wfct, \wtarget, \TT)$ accepting weights $\wfct \colon U \rightarrow [N]$, a target weight $\wtarget$, and an elimination forest $\TT$, such that $\countc(\wfct, \wtarget, \TT) \equiv |\{(X, C) \in \cutsols \sep \wfct(X) = \wtarget\}|$.
 \end{enumerate}
 Then \cref{algo:cutncount} returns $\mathbf{false}$ if $\solutions$ is empty and $\mathbf{true}$ with probability at least $1/2$ otherwise. 
\end{cor}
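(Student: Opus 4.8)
The plan is to derive the statement directly from the two hypothesized properties together with the Isolation Lemma (\cref{thm:isolation}), treating separately the cases in which $\solutions$ is empty or nonempty.

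If $\solutions = \emptyset$, then for every weight function $\wfct \colon U \to [2|U|]$ and every target $\wtarget$ we have $|\{X \in \solutions \sep \wfct(X) = \wtarget\}| = 0$; property~1 then forces $|\{(X,C) \in \cutsols \sep \wfct(X) = \wtarget\}| \equiv 0$, and property~2 gives $\countc(\wfct, \wtarget, \TT) \equiv 0$ for each $\wtarget$ in the loop. Hence the test in \cref{algo:cutncount} never succeeds and the algorithm returns $\mathbf{false}$, independently of the random choices. Conversely, suppose $\solutions \neq \emptyset$. I would invoke \cref{thm:isolation} with universe $U$, family $\family = \solutions$, and $N = 2|U|$, which matches the distribution of the weights sampled in the first loop; then with probability at least $1 - |U|/(2|U|) = 1/2$ the resulting $\wfct$ isolates $\solutions$. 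On this event there is a unique minimizer $S' \in \solutions$, and setting $\wtarget^\star := \wfct(S') = \min_{S \in \solutions}\wfct(S)$ we obtain $|\{X \in \solutions \sep \wfct(X) = \wtarget^\star\}| = 1$. By property~1 this count has the same parity as $|\{(X,C) \in \cutsols \sep \wfct(X) = \wtarget^\star\}|$, which is therefore odd, and by property~2 we conclude $\countc(\wfct, \wtarget^\star, \TT) \equiv 1$, so \cref{algo:cutncount} returns $\mathbf{true}$.

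The only bookkeeping is to check that $\wtarget^\star$ actually occurs among the values $1, \dots, 2|U|^2$ scanned by the second loop: since every $\wfct(v) \in [2|U|]$, any $X \subseteq U$ satisfies $\wfct(X) \le |X|\cdot 2|U| \le 2|U|^2$, and $\wfct(X) \ge 1$ whenever $X \neq \emptyset$; as $\emptyset \notin \solutions$ for all problems treated here, we get $1 \le \wtarget^\star \le 2|U|^2$. I do not expect any real difficulty in this argument: it is a mechanical combination of the Isolation Lemma with the assumed parity identity, and essentially all of the work of the Cut\&Count technique is concentrated in establishing properties~1 and~2 for each concrete connectivity problem, which is carried out in the following subsections.
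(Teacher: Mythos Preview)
Your proof is correct and follows essentially the same approach as the paper: both apply the Isolation Lemma with $N=2|U|$ in the nonempty case and chain properties~1 and~2 in the empty case. Your version is in fact slightly more careful, since you explicitly verify that the isolating weight $\wtarget^\star$ lies in the range $\{1,\dots,2|U|^2\}$ scanned by the loop, a detail the paper leaves implicit.
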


\begin{proof}
 By setting $\family = \solutions$ and $N = 2|U|$ in \cref{thm:isolation}, we see that there exists a weight $\wtarget$ such that $|\{X \in \solutions \sep \wfct(X) = \wtarget\}| = 1$ with probability at least $1/2$, unless $\solutions$ is empty. Thus, \cref{algo:cutncount} returns $\mathbf{true}$ with probability at least $1/2$.
 
 If $\solutions$ is empty, then $\countc(\wfct, \wtarget, \TT) \equiv 0$ for all choices of $\wfct$, $\wtarget$, and $\TT$ by property 1. and 2., hence \cref{algo:cutncount} returns $\mathbf{false}$.
\end{proof}

We will use the same definitions as Cygan et al.~\cite{CyganNPPRW11} for $\cutsols$ and $\solutions$, hence it follows from their proofs that Condition 1 in \cref{thm:general_setup} is satisfied. Our contribution is to provide the counting procedure $\countc$ for problems parameterized by treedepth.  

Given the sets $\solutions$, $\candidates$, and $\cutsols$, and a weight function $\wfct \colon U \rightarrow [N]$, we will define for every weight $\wtarget$ the sets $\solutions_\wtarget = \{ X \in \solutions \sep \wfct(X) = \wtarget\}$, $\candidates_\wtarget = \{ X \in \candidates \sep \wfct(X) = \wtarget\}$, and $\cutsols_\wtarget = \{ (X, C) \in \cutsols \sep \wfct(X) = \wtarget\}$.

\subsection{Connected Vertex Cover}
\label{sec:cvc}
\begin{problem}[doublelined]{Connected Vertex Cover}
 Input: & An undirected graph $G = (V, E)$ and an integer $k$.
 \\
 Question: & Is there a set $X \subseteq V$, $|X| = k$, such that $G[X]$ is connected and $X$ is a vertex cover of $G$, i.e., $e \cap X \neq \emptyset$ for all $e \in E$?
\end{problem}

In the considered problems, one usually seeks a solution of size at most $k$. For convenience we choose to look for a solution of size exactly $k$ and solve the other case in the obvious way. We define the objects needed for Cut\&Count in the setting of \textsc{Connected Vertex Cover}. We let $U = V$ and define the candidate solutions by $\candidates = \{ X \subseteq V \sep X \text{ is a vertex cover of } G \text{ and } |X| = k\}$, and the solutions are given by $\solutions = \{ X \in \candidates \sep G[X] \text{ is connected} \}$.

To ensure that a connected solution is consistent with an odd number of cuts, we choose a vertex $v_1$ that is always forced to the left side of the cut (cf.\ \cref{thm:cons_cut}). As we cannot be sure that there is a minimum connected vertex cover containing $v_1$, we take an edge $\{u, v\} \in E$ and run \cref{algo:cutncount} once for $v_1 := u$ and once for $v_1 := v$. Hence, for a fixed choice of $v_1$ we define the candidate-cut-pairs by $\cutsols = \{(X, (X_L, X_R)) \in \cuts(V) \sep X \in \candidates \text{ and } v_1 \in X_L\}$. We must check that these definitions satisfy the requirements of \cref{thm:general_setup}. 
  
\begin{lem}[\cite{CyganNPPRW11}]
 \label{thm:cvc_correct}
 Let $\wfct \colon V \rightarrow [N]$ be a weight function, and let $\cutsols$ and $\solutions$ be as defined above. Then we have for every $\wtarget \in \NN$ that $|\solutions_\wtarget| \equiv |\cutsols_\wtarget|$.
\end{lem}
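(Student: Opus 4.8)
The plan is to evaluate $|\cutsols_\wtarget|$ directly by grouping its elements according to the candidate in the first coordinate and then invoking \cref{thm:cons_cut} on each group; this is essentially the argument of Cygan et al., reproduced for completeness. Recall that, for the fixed choice of $v_1$, a pair $(X,(X_L,X_R))$ lies in $\cutsols$ exactly when $X$ is a size-$k$ vertex cover with $v_1\in X$ and $(X_L,X_R)$ is a consistent cut of $G[X]$ with $v_1\in X_L$. Hence $\cutsols_\wtarget$ is the disjoint union, over all $X\in\candidates_\wtarget$, of the sets $\{(X,(X_L,X_R))\in\cuts(V) \sep v_1\in X_L\}$. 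For a candidate $X$ with $v_1\notin X$ this set is empty, since $X_L\subseteq X$ forbids $v_1\in X_L$; so only candidates containing $v_1$ contribute (consistent with $v_1$ being forced into every candidate).

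First I would fix such an $X$ with $v_1\in X$ and count the admissible cuts of $G[X]$. Here \cref{thm:cons_cut} applies verbatim with this $X$ and this $v_1$, giving exactly $2^{\cc(G[X])-1}$ consistently cut subgraphs $(X,(X_L,X_R))$ with $v_1\in X_L$. Summing over the candidates then yields
\[
  |\cutsols_\wtarget| \;=\; \sum_{X\in\candidates_\wtarget} 2^{\cc(G[X])-1},
\]
where summands coming from $X$ with $v_1\notin X$ are simply $0$ in this identity.

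The key step — and the only place the argument goes beyond bookkeeping — is the reduction modulo $2$: the term $2^{\cc(G[X])-1}$ is odd precisely when $\cc(G[X])=1$, i.e.\ when $G[X]$ is connected, and even otherwise. Thus every disconnected candidate contributes an even number and cancels, while each connected candidate containing $v_1$ contributes $1$, so
\[
  |\cutsols_\wtarget| \;\equiv\; \bigl|\{X\in\candidates_\wtarget \sep v_1\in X \text{ and } G[X]\text{ connected}\}\bigr| \;=\; |\solutions_\wtarget|,
\]
the last equality using that in the fixed-$v_1$ setting the solutions $\solutions$ carry the constraint $v_1\in X$. I do not expect a real obstacle: the entire content is the observation $2^{\cc(G[X])-1}\equiv[\cc(G[X])=1]$, and the only point requiring care is that \cref{thm:cons_cut} needs $v_1\in X$, which is exactly why $v_1$ is built into the candidate set and why \cref{algo:cutncount} is run once for each endpoint of a fixed edge, so that every connected vertex cover (each containing at least one such endpoint) is captured by some run.
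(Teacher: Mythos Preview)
Your argument is correct and mirrors the paper's proof exactly: group $\cutsols_\wtarget$ by the underlying candidate, apply \cref{thm:cons_cut} to each group, and reduce modulo~$2$. One small quibble: in the paper's definitions $\solutions$ does \emph{not} carry the constraint $v_1\in X$, so your last equality is really an equality with $\{X\in\candidates_\wtarget : v_1\in X,\ \cc(G[X])=1\}$ rather than with $\solutions_\wtarget$ as defined (the paper's own one-line proof elides the same point); relatedly, your displayed sum should carry an indicator $[v_1\in X]$, since $2^{\cc(G[X])-1}\geq 1$ is never literally zero.
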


\begin{proof}
 \cref{thm:cons_cut} implies that $|\cutsols_\wtarget| = \sum_{X \in \candidates_\wtarget} 2^{\cc(G[X]) - 1}$. Hence, $|\cutsols_\wtarget| \equiv |\{ X \in \candidates_\wtarget \sep \cc(G[X]) = 1 \}| = |\solutions_\wtarget|$.
\end{proof}

Next, we describe the procedure $\countc$ for \textsc{Connected Vertex Cover}.

\begin{center}
    \begin{algorithm}[H]
      \KwIn{Elimination forest $\TT$, weights $\wfct \colon V \rightarrow [2n]$, target weight $\wtarget \in [2n^2]$}
      Let $r$ denote the root of $\TT$\;
      $P := \texttt{calc\_poly\_inc}(r, \emptyset)$\;
      \Return the coefficient of $Z_W^\wtarget Z_X^k$ in $P$\;
      \caption{$\countc$ for \textsc{Connected Vertex Cover}}
      \label{algo:cvc_countc}
    \end{algorithm}
    \begin{algorithm}[H]
      \KwIn{Elimination forest $\TT$, weights $\wfct \colon V \rightarrow [2n]$, vertex $v \in V$, previous choices $f \colon \tail[v] \rightarrow \{\zero, \one_L, \one_R\}$}
      \lIf{$v$ is a leaf of $\TT$}
      {
        \!\!\Return the result of equation \cref{eq:cvc_pex_leaf}\!\!
      }
      \Else
      {
        $P := 1$\;
        \For(\tcp*[f]{cf.\ equation \cref{eq:cvc_pex_branch}}){$u \in \child(v)$}
        {
          $P := P \cdot \texttt{calc\_poly\_inc}(v, f)$\;
        }
        \Return $P$\;
      }
      \caption{$\texttt{calc\_poly\_exc}(v, f)$}
      \label{algo:cvc_calc_poly_exc}
    \end{algorithm}
    \begin{algorithm}[H]
      \KwIn{Elimination forest $\TT$, weights $\wfct \colon V \rightarrow [2n]$, vertex $v \in V$, previous choices $g \colon \tail(v) \rightarrow \{\zero, \one_L, \one_R\}$}
      \For{$s \in \{\zero, \one_L, \one_R\}$}
      {
        $P_s := \texttt{calc\_poly\_exc}(v, g[v \mapsto s])$\;
      }
      \Return{$P_{\one_L} Z_W^{\wfct(v)} Z_X + P_{\one_R} Z_W^{\wfct(v)} Z_X + P_{\zero}$}\tcp*{cf.\ equation \cref{eq:cvc_pin}}
      \caption{$\texttt{calc\_poly\_inc}(v, g)$}
      \label{algo:cvc_calc_poly_inc}
    \end{algorithm}
\end{center}    
\begin{lem}
 \label{thm:cvc_count}
 Given a connected graph $G = (V, E)$, a vertex $v_1 \in V$, an integer $k$, a weight function $\wfct \colon V \rightarrow [2n]$, and an elimination forest $\TT$ of $G$ of depth $d$, we can determine $|\cutsols_\wtarget|$ modulo 2 for every $0 \leq \wtarget \leq 2n^2$ in time $\Oh^*(3^d)$ and polynomial space. In particular, \cref{algo:cvc_countc} determines $|\cutsols_\wtarget|$ modulo 2 for a specified target weight $\wtarget$ in the same time and space.
\end{lem}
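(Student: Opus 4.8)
The plan is to recast the Cut\&Count dynamic program as a space‑efficient top‑down recursion on the elimination tree $\TT$ (a single tree, since $G$ is connected, with root $r$): rather than tabulating, for each node $v$, the $3^{|\tail[v]|}$ states of $v$ together with its ancestors and the achievable weights, we branch over the ancestor states (recomputing instead of memoizing) and carry a generating polynomial in $Z_W$ (recording the $\wfct$‑weight) and $Z_X$ (recording the size). A map $f\colon\tail[v]\to\{\zero,\one_L,\one_R\}$ encodes, for $v$ and each ancestor, a choice of ``outside $X$'', ``in $X_L$'', or ``in $X_R$''; call $f$ \emph{locally consistent at $v$} if every edge $\{u,v\}\in E$ with $u\in\tail(v)$ satisfies $\{f(u),f(v)\}\neq\{\one_L,\one_R\}$ (the cut is not crossed here) and $\{f(u),f(v)\}\cap\{\one_L,\one_R\}\neq\emptyset$ ($\{u,v\}$ is covered). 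The first thing I would establish is the structural fact that, by the elimination‑forest property, each edge of $G$ has a unique endpoint that is deeper in $\TT$, and local consistency at that endpoint inspects exactly that edge; consequently a global map $g\colon V\to\{\zero,\one_L,\one_R\}$ is locally consistent at every vertex if and only if $X_g:=g^{-1}(\{\one_L,\one_R\})$ is a vertex cover of $G$ and $(g^{-1}(\one_L),g^{-1}(\one_R))$ is a consistent cut of $G[X_g]$, i.e.\ if and only if $(X_g,(g^{-1}(\one_L),g^{-1}(\one_R)))$ is a consistently cut subgraph with $X_g$ a vertex cover.

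Writing $\Pex{v}{f}$ for $\texttt{calc\_poly\_exc}(v,f)$ and $\Pin{v}{g}$ for $\texttt{calc\_poly\_inc}(v,g)$, the recurrences, all over $\FF_2[Z_W,Z_X]$, are
\begin{align}
 \Pex{v}{f} &= [f \text{ locally consistent at } v] \quad (v \text{ a leaf of } \TT), \label{eq:cvc_pex_leaf}\\
 \Pex{v}{f} &= [f \text{ locally consistent at } v]\cdot{\textstyle\prod_{u\in\child(v)}}\Pin{u}{f} \quad (v \text{ internal}), \label{eq:cvc_pex_branch}\\
 \Pin{v}{g} &= \bigl(\Pex{v}{g[v\mapsto\one_L]}+\Pex{v}{g[v\mapsto\one_R]}\bigr)Z_W^{\wfct(v)}Z_X + \Pex{v}{g[v\mapsto\zero]}, \label{eq:cvc_pin}
\end{align}
which is well formed because $\tail(u)=\tail[v]$ for $u\in\child(v)$; the requirement $v_1\in X_L$ built into $\cutsols$ is imposed by dropping the $s\in\{\one_R,\zero\}$ summands when $v=v_1$. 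The core of the proof is the invariant, shown by induction on $\TT$ from the leaves: for every $v$ and $f\colon\tail[v]\to\{\zero,\one_L,\one_R\}$,
\[
 \Pex{v}{f}=\sum_{h}Z_W^{\wfct(X_h)}Z_X^{|X_h|}\in\FF_2[Z_W,Z_X],
\]
the sum running over all $h\colon\tree(v)\to\{\zero,\one_L,\one_R\}$ such that $f\cup h$ is locally consistent at every vertex of $\tree[v]$, with $X_h:=(f\cup h)^{-1}(\{\one_L,\one_R\})\cap\tree(v)$; the analogue for $\Pin{v}{g}$ runs over extensions to $\tree[v]$. The leaf case is immediate from \cref{eq:cvc_pex_leaf} (then $\tree(v)=\emptyset$). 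For internal $v$ with $f$ locally consistent at $v$, the subtrees $\{\tree[u]\}_{u\in\child(v)}$ partition $\tree(v)$ and $G$ has no edge between two of them (endpoints of an edge are $\TT$‑comparable), so a map on $\tree(v)$ is counted if and only if each of its restrictions to a $\tree[u]$ is counted in $\Pin{u}{f}$; since the generating polynomial of a Cartesian product of such extension sets is the product of their generating polynomials (exponents adding) and reduction modulo $2$ is multiplicative, the right‑hand side equals $\prod_{u\in\child(v)}\Pin{u}{f}$, matching \cref{eq:cvc_pex_branch}, and \cref{eq:cvc_pin} merely prepends the contribution of $v$'s own state. Applying the invariant at $r$ (where $\tail(r)=\emptyset$, $\tree[r]=V$) together with the structural fact and the $v_1$‑modification, the coefficient of $Z_W^\wtarget Z_X^k$ in $\Pin{r}{\emptyset}$ equals, modulo $2$, the number of consistently cut subgraphs $(X,(X_L,X_R))$ with $X$ a vertex cover, $|X|=k$, $\wfct(X)=\wtarget$, and $v_1\in X_L$ --- that is, $|\cutsols_\wtarget|$, since $\cutsols$ already fixes $|X|=k$. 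This is exactly what \cref{algo:cvc_countc} returns, and reading off the coefficients of $Z_W^\wtarget Z_X^k$ for all $0\leq\wtarget\leq 2n^2$ produces all the required parities simultaneously.

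For the bounds: the chain of recursive calls leading to a given invocation of $\texttt{calc\_poly\_inc}(v,\cdot)$ is determined by its argument $f\colon\tail(v)\to\{\zero,\one_L,\one_R\}$, so there are at most $3^{|\tail(v)|}\leq 3^{d-1}$ of them, and at most $3^{|\tail[v]|}\leq 3^{d}$ invocations of $\texttt{calc\_poly\_exc}(v,\cdot)$; summing over the $\leq n$ vertices of $\TT$ gives $\Oh^*(3^d)$ invocations overall --- no more than the memoized dynamic program on $\TT$ would take, the point being that we pay this recomputation in order not to store the tables. Each invocation does $\mathrm{poly}(n)$ work, as the polynomials have $Z_W$‑degree $\leq 2n^2$ and $Z_X$‑degree $\leq n$, hence $\Oh(n^3)$ bits over $\FF_2$, and the $\leq n-1$ successive products per call cost $\mathrm{poly}(n)$. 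For space, the recursion alternates $\texttt{calc\_poly\_inc}$ and $\texttt{calc\_poly\_exc}$ calls along one root‑to‑node path of $\TT$, so the call stack has depth $\Oh(d)$; each frame stores $\Oh(1)$ polynomials of $\mathrm{poly}(n)$ size and one partial assignment, and the product in \cref{eq:cvc_pex_branch} is accumulated child by child, so only one returned polynomial is live at a time --- hence polynomial space. The step I expect to need the most care is arranging local consistency so that each cut‑ and cover‑constraint is charged to exactly one vertex (the deeper‑endpoint observation), and then confirming that this un‑memoized recursion still runs in $\Oh^*(3^d)$ time; the cancellation of disconnected candidates is not an issue here, being already provided by \cref{thm:cvc_correct} and \cref{thm:cons_cut}.
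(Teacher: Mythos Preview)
Your argument is sound for the recurrences you write down, but those recurrences are not what Algorithms~3 and~4 actually compute, so as written the proof does not certify \cref{algo:cvc_countc}.

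Concretely, at a leaf $v$ the paper's \texttt{calc\_poly\_exc} returns the value of equation~\cref{eq:cvc_pex_leaf}, which is a \emph{global} check on $\tail[v]$: it tests that $f^{-1}(\{\one_L,\one_R\})$ is a vertex cover of all of $G[\tail[v]]$, that the cut is consistent there, and that $f(v_1)=\one_L$ whenever $v_1\in\tail[v]$. At internal nodes Algorithm~3 performs \emph{no} check, and Algorithm~4 never special‑cases $v=v_1$. Your recurrences instead distribute the checks per vertex (local consistency at each node, and dropping summands at $v_1$). These two schemes agree at the root --- any violation inside $\tail(v)$ zeroes every leaf below $v$ in the paper's scheme, so your extra factor at internal nodes is harmless, and conversely the paper's redundant full check at leaves absorbs what you spread over the ancestors --- but they do \emph{not} agree at intermediate nodes. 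In particular, if $f$ violates an edge between two strict ancestors of a leaf $v$, then \texttt{calc\_poly\_exc}$(v,f)=0$ while your leaf formula returns~$1$; hence your stated invariant (the sum over $h$ with $f\cup h$ locally consistent on $\tree[v]$) is simply false for the paper's procedure at leaves.

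The paper's own proof avoids this by phrasing the invariant globally from the start: it defines $\partialex{v}{f}$ and $\partialin{v}{g}$ so that the extension by $f$ (resp.\ $g$) lies in $\widehat{\cutsols}(\broom[v])$, i.e.\ the vertex‑cover, consistent‑cut, and $v_1$ constraints on all of $\broom[v]$ are baked into the partial‑solution sets, and then verifies equations \cref{eq:cvc_pex_leaf}--\cref{eq:cvc_pin} against those sets. Your per‑vertex formulation is arguably cleaner (each edge is charged exactly once, to its deeper endpoint) and does establish the first sentence of the lemma via an alternative algorithm; but to prove the ``in particular'' clause about \cref{algo:cvc_countc} you must either redo the induction with the paper's leaf formula and invariant, or add the short argument that your recurrences and the paper's yield the same polynomial at the root.
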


\begin{proof} 
 For the discussion of the algorithm, it is convenient to drop the cardinality constraint in $\candidates$ and $\cutsols$ and to define these sets for every induced subgraph $G[V']$ of $G$. Hence, we define for every $V' \subseteq V$ the set $\widehat{\candidates}(V') = \{ X \subseteq V' \sep X \text{ is a vertex cover of } G[V'] \}$ and the set $\widehat{\cutsols}(V') = \{ (X, (X_L, X_R)) \in \cuts(V') \sep X \in \candidates(V') \text{ and } (v_1 \in V' \rightarrow v_1 \in X_L) \}$.
 
 Similar to Pilipczuk and Wrochna~\cite{PilipczukW18}, our algorithm will compute a multivariate polynomial in the formal variables $Z_W$ and $Z_X$, where the coefficient of $Z_W^\wtarget Z_X^i$ is the cardinality of $\widehat{\cutsols}_{\wtarget}^i(V) = \{(X, C) \in \widehat{\cutsols}(V) \sep \wfct(X) = \wtarget, |X| = i \}$ modulo 2, i.e., the formal variables track the weight and size of candidate solutions. In particular, we have that $\widehat{\cutsols}_{\wtarget}^k(V) = \cutsols_\wtarget$ for every $\wtarget$. Polynomials act as an appropriate data structure, because addition and multiplication of polynomials naturally updates the weight and size trackers correctly.
 
 The output polynomial is computed by a branching algorithm (see \cref{algo:cvc_countc}) that starts at the root $r$ of the elimination forest $\TT$ and proceeds downwards to the leaves. At every vertex we branch into several states, denoted $\states = \{\zero, \one_L, \one_R\}$. The interpretation of the state $\zero$ is that the vertex does not belong to the vertex cover. The states $\one_L$ and $\one_R$ indicate that the vertex is inside the vertex cover and the subscript denotes to which side of the consistent cut it belongs. 
 
 For each vertex $v$ there are multiple subproblems on $G[\broom[v]]$. When solving a subproblem, we need to take into account the choices that we have already made, i.e., the branching decisions for the ancestors of $v$. At each vertex we compute two different types of polynomials, which correspond to two different kinds of partial solutions. Those that are subsets of $\tree(v)$ and respect the choices made on $\tail[v]$ and those that are subsets of $\tree[v]$ and respect the choices made on $\tail(v)$. Distinguishing these two types of partial solutions is important when $v$ has multiple children in $\TT$. Formally, the previous branching decisions are described by assignments $f$ or $g$ from $\tail[v]$ or $\tail(v)$ to $\{\zero, \one_L, \one_R\}$ respectively.  
 
 For every vertex $v$ and assignment $f \colon \tail[v] \rightarrow \{\zero, \one_L, \one_R\}$ we define the partial solutions at $v$, but excluding $v$, that respect $f$ by
 \begin{equation*}
  \begin{aligned}
   \partialex{v}{f} = \{ & (X,(X_L, X_R)) \in \cuts(\tree(v)) \sep\, X' = X \cup f^{-1}(\{\one_L, \one_R\}), \\
                         & C' = (X_L \cup f^{-1}(\one_L), X_R \cup f^{-1}(\one_R)), 
			  (X', C') \in \widehat{\cutsols}(\broom[v]) \}.
  \end{aligned}
 \end{equation*}

 So, $\partialex{v}{f}$ consists of consistently cut subgraphs $(X, (X_L, X_R))$ of $G[\tree(v)]$ that are extended by $f$ to valid candidate-cut-pairs $(X', C')$ for $G[\broom[v]]$, meaning that $X'$ is a vertex cover of $G[\broom[v]]$ and $C'$ is a consistent cut of $G[X']$. 
 
 Very similarly, for every vertex $v$ and assignment $g \colon \tail(v) \rightarrow \{\zero, \one_L, \one_R\}$ we define the partial solutions at $v$, possibly including $v$, that respect $g$ by
 \begin{align*}
  \partialin{v}{g} = \{ & (X,(X_L, X_R)) \in \cuts(\tree[v]) \sep\, X' = X \cup g^{-1}(\{\one_L, \one_R\}), \\
                        & C' = (X_L \cup g^{-1}(\one_L), X_R \cup g^{-1}(\one_R)), 
                          (X', C') \in \widehat{\cutsols}(\broom[v]) \}.
 \end{align*}
 Thus, for the root $r$ of $\TT$ we have $\partialin{r}{\emptyset} = \widehat{\cutsols}(V)$.
 
 We keep track of the partial solutions $\partialex{v}{f}$ and $\partialin{v}{g}$ using polynomials which we define now.
 For every vertex $v$ and assignment $f \colon \tail[v] \rightarrow \{\zero, \one_L, \one_R\}$ we will compute a polynomial $\Pex{v}{f} \in \FF_2[Z_W, Z_X]$ where $\Pex{v}{f} = \sum_{\wtarget = 0}^{2n^2} \sum_{i = 0}^n c_{\wtarget, i} Z_W^\wtarget Z_X^i $ and
 \begin{align*}
  c_{\wtarget, i} = |\{(X, C) \in \partialex{v}{f} \sep \wfct(X) = \wtarget \text{ and } |X| = i\}| \mod 2.
 \end{align*}
 Similarly, for every vertex $v$ and assignment $g \colon \tail(v) \rightarrow \{\zero, \one_L, \one_R\}$ we will compute a polynomial $\Pin{v}{g} \in \FF_2[Z_W, Z_X]$ where $\Pin{v}{g} = \sum_{\wtarget = 0}^{2n^2} \sum_{i = 0}^n c'_{\wtarget, i} Z_W^\wtarget Z_X^i$ and 
 \begin{align*}
  c'_{\wtarget, i} = |\{(X, C) \in \partialin{v}{g} \sep \wfct(X) = \wtarget \text{ and } |X| = i\}| \mod 2.
 \end{align*}
 
 \cref{algo:cvc_countc} computes the polynomial $P = \Pin{r}{\emptyset}$ and extracts the appropriate coefficient of $P$. To compute $P$ we employ recurrences for $\Pex{v}{f}$ and $\Pin{v}{g}$. We proceed by describing the recurrence for $\Pex{v}{f}$.
 
 In the case that $v$ is a leaf node in $\TT$, i.e., $\tree(v) = \emptyset$, we can compute $\Pex{v}{f}$ by
 \begin{equation}
  \label{eq:cvc_pex_leaf}
  \begin{array}{lcl}
   \Pex{v}{f} & = & [f^{-1}(\{\one_L, \one_R\}) \text{ is a vertex cover of } G[\tail[v]]] \\
              & \cdot & [(f^{-1}(\one_L), f^{-1}(\one_R)) \text{ is a consistent cut of } G[f^{-1}(\{\one_L, \one_R\})]] \\
              & \cdot & [v_1 \in \tail[v] \rightarrow f(v_1) = \one_L],
  \end{array}
 \end{equation}
 which checks whether the assignment $f$ induces a valid partial solution. This is the only step in which we explicitly ensure that we are computing only vertex covers; in all other steps this will not be required. If $v$ is not a leaf, then $\Pex{v}{f}$ is computed by the recurrence
 \begin{equation}
  \label{eq:cvc_pex_branch}
  \Pex{v}{f} = \prod_{u \in \child(v)} \Pin{u}{f},
 \end{equation}
 which combines disjoint partial solutions. 
 The equations \cref{eq:cvc_pex_leaf} and \cref{eq:cvc_pex_branch} are used by \cref{algo:cvc_calc_poly_exc} to compute the polynomial $\Pex{v}{f}$. 
 
 We proceed by giving the recurrence that is used by \cref{algo:cvc_calc_poly_inc} to compute the polynomial $\Pin{v}{g}$:
 \begin{equation}
  \label{eq:cvc_pin}
  \Pin{v}{g} = \Pex{v}{g[v \mapsto \zero]} + \Pex{v}{g[v \mapsto \one_L]}\, Z_W^{\wfct(v)} Z_X + \Pex{v}{g[v \mapsto \one_R]}\, Z_W^{\wfct(v)} Z_X. 
 \end{equation}
 Equation \cref{eq:cvc_pin} tests all three possible states for $v$ in a candidate-cut-pair and multiplies by $Z_W^{\wfct(v)} Z_X$ if $v$ is in the vertex cover to update the weight and size of the partial solutions.
 
 \subparagraph*{Correctness.} We will now prove the correctness of equations \cref{eq:cvc_pex_leaf} through \cref{eq:cvc_pin}.  First of all, observe that when $v_1 \in \tail[v]$ but $f(v_1) \neq \one_L$ then we must have that $\Pex{v}{f} = 0$; similarly, we must have $\Pin{v}{g}=0$ when $g(v_1)\neq \one_L $ for $v_1\in \tail(v)$.
 This property is ensured by equation \cref{eq:cvc_pex_leaf} and preserved by the recurrences \cref{eq:cvc_pex_branch} and \cref{eq:cvc_pin}. To see that equation \cref{eq:cvc_pex_leaf} is correct, notice that when $v$ is a leaf node in $\TT$ we have that $\tree(v) = \emptyset$ and hence the only consistently cut subgraph of $\tree(v)$ is $(\emptyset, (\emptyset, \emptyset))$. Therefore, we only need to verify whether this is a valid partial solution in $\Pex{v}{f}$, which reduces to checking the predicate on the right-hand side of \cref{eq:cvc_pex_leaf}.
 
 For equations \cref{eq:cvc_pex_branch} and \cref{eq:cvc_pin}, we will establish bijections between the objects counted on either side of the respective equation and argue that size and weight are updated correctly. We proceed by proving the correctness of equation \cref{eq:cvc_pex_branch}, which is the only equation where the proof of correctness requires the special properties of elimination forests. We consider any $(X, (X_L, X_R)) \in \partialex{v}{f}$. We can uniquely partition $X$ into subsets $X^u$ of $\tree[u]$ for each $u \in \child(v)$ by setting $X^u = X \cap \tree[u]$. Furthermore, by setting $X^u_L = X_L \cap \tree[u]$ and $X^u_R = X_R \cap \tree[u]$ we obtain $(X^u, (X_L^u, X_R^u)) \in \partialin{u}{f}$, because we are only restricting the vertex cover $X' = X \cup f^{-1}(\{\one_L, \one_R\})$ and consistent cut $(X_L \cup f^{-1}(\one_L), X_R \cup f^{-1}(\one_R))$ to the induced subgraph $G[\broom[u]]$ of $G[\broom[v]]$. Vice versa, any combination of partial solutions $(X^u, (X_L^u, X_R^u)) \in \partialin{u}{f}$ for each $u \in \child(v)$ yields a partial solution $(X, (X_L, X_R)) \in \partialex{v}{f}$ as there are no edges in $G$ between $\tree[u]$ and $\tree[u']$ for $u \neq u' \in \child(v)$ by the properties of an elimination forest. Since the sets $X^u$ partition $X$, we obtain the size and weight of $X$ by summing over the sizes and weights of the sets $X^u$ respectively. Hence, these values are updated correctly by polynomial multiplication. 
 
 It remains to prove the correctness of \cref{eq:cvc_pin}. This time, consider any $(X, (X_L, X_R)) \in \partialin{v}{g}$. Now, there are three possible cases depending on the state of $v$ in this partial solution. 
 \begin{enumerate}
  \item If $v \notin X$, then we have that $(X, (X_L, X_R)) \in \partialex{v}{f}$, where $f = g[v \mapsto \zero]$. Vice versa, any $(X, (X_L, X_R)) \in \partialex{v}{f}$ must also be in $\partialin{v}{g}$. Since $X$ does not change, we do not need to update size or weight and do not multiply by further formal variables in this case. 
 
  \item If $v \in X_L \subseteq X$, then we claim that $(X \setminus \{v\}, (X_L \setminus \{v\}, X_R)) \in \partialex{v}{f}$, where $f = g[v \mapsto \one_L]$. This is true due to the identities $(X \setminus \{v\}) \cup f^{-1}(\{\one_L, \one_R\}) = X \cup g^{-1}(\{\one_L, \one_R\})$, and $(X_L \setminus \{v\}) \cup f^{-1}(\one_L) = X_L \cup g^{-1}(\one_L)$, and $X_R \cup f^{-1}(\one_R) = X_R \cup g^{-1}(\one_R)$,
  which mean that this implicitly defined mapping preserves the definition of $X'$ and $C'$ in the predicates of $\partialin{v}{g}$ and $\partialex{v}{f}$. Vice versa, any partial solution in $\partialex{v}{f}$ can be extended to such a partial solution in $\partialex{v}{g}$ by adding $v$ to $X_L$. Since $|X| - |X \setminus \{v\}| = 1$ and $\wfct(X) - \wfct(X \setminus \{v\}) = \wfct(v)$, multiplication by $Z_W^{\wfct(v)} Z_X$ updates size and weight correctly.
  
  \item If $v \in X_R \subseteq X$, the proof is analogous to case 2.
 \end{enumerate}
 If $v = v_1$, then equation \cref{eq:cvc_pin} simplifies to $\Pin{v}{g} = \Pex{v}{g[v \mapsto \one_L]} Z_W^{\wfct(v)} Z_X$, because $\partialex{v}{g[v \mapsto \zero]} = \partialex{v}{g[v \mapsto \one_R]} = \emptyset$ and hence only the second case occurs. Note that by establishing these bijections in the proofs of correctness, we have actually shown that equations \cref{eq:cvc_pex_leaf} through \cref{eq:cvc_pin} are also correct when working in $\ZZ$ instead of $\FF_2$.
 
 \subparagraph*{Time and Space Analysis.}\label{sec:cvc_time_and_space} We finish the proof by discussing the time and space requirement. Observe that the coefficients of our polynomials are in $\FF_2$ and hence can be added and multiplied in constant time. Furthermore, all considered polynomials consist of at most polynomially many monomials as the weight and size of a candidate solution are polynomial in $n$. Therefore, we can add and multiply the polynomials in polynomial time and hence compute recurrences \cref{eq:cvc_pex_leaf}, \cref{eq:cvc_pex_branch}, and \cref{eq:cvc_pin} in polynomial time. Every polynomial $\Pex{v}{f}$ and $\Pin{v}{g}$ is computed at most once, because $\Pex{v}{f}$ is only called by $\Pin{v}{g}$ where $f$ is an extension of $g$, i.e., $f = g[v \mapsto s]$ for some $s \in \states$, and $\Pin{v}{g}$ is only called by $\Pex{w}{g}$ where $w$ is the parent of $v$. Hence, the recurrences only make disjoint calls and no polynomial is computed more than once. For a fixed vertex $v$ there are at most $3^d$ choices for $f$ and $g$. Thus, \cref{algo:cvc_countc} runs in time $\Oh^*(3^d)$ for elimination forests of depth $d$. Finally, \cref{algo:cvc_countc} requires only polynomial space, because it has a recursion depth of $2d + 1$ and every recursive call needs to store at most a constant number of polynomials and each call only requires polynomial space by the previous discussion.
\end{proof}

\begin{thm}
 There is a Monte-Carlo algorithm that given an elimination forest of depth $d$ for a graph $G$ solves \textsc{Connected Vertex Cover} on $G$ in time $\Oh^*(3^d)$ and polynomial space. The algorithm cannot give false positives and may give false negatives with probability at most 1/2.
\end{thm}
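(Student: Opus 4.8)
\subparagraph*{Proof proposal.}
The plan is to assemble the pieces already in place: the Cut\&Count framework of \cref{thm:general_setup}, its Condition~1 verified for \textsc{Connected Vertex Cover} in \cref{thm:cvc_correct}, and its Condition~2 verified by the counting procedure of \cref{thm:cvc_count}. Around this we only need a thin outer layer handling two minor points: the problem asks for a solution of size \emph{at most} $k$ whereas the machinery is set up for size \emph{exactly} $k$, and we must legitimately pick the vertex $v_1$ that \cref{thm:cons_cut} forces to the left side of the cut.

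First I would dispose of the degenerate cases directly: if $G$ is edgeless the answer is ``yes'' iff $G$ has a vertex and $k \geq 1$. Otherwise fix any edge $\{u,v\}\in E$. Every vertex cover contains $u$ or $v$, so a connected vertex cover of size exactly $k'$, if one exists, lies in $\solutions$ for the choice $v_1:=u$ or for $v_1:=v$, where $\solutions$, $\candidates$, and $\cutsols$ are instantiated with the size constraint $|X|=k'$. I would therefore run \cref{algo:cutncount} with $U=V$, the given elimination forest $\TT$, and the procedure $\countc$ of \cref{algo:cvc_countc}, once for each of the $\Oh(n)$ pairs $(k',v_1)$ with $k'\in\{1,\dots,k\}$ and $v_1\in\{u,v\}$, and report a connected vertex cover of size at most $k$ iff at least one invocation returns $\mathbf{true}$.

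For a single invocation, \cref{thm:cvc_correct} supplies Condition~1 of \cref{thm:general_setup} (for every weight function $\wfct\colon V\to[2n]$ and every $\wtarget$ we have $|\cutsols_\wtarget|\equiv|\solutions_\wtarget|$) and \cref{thm:cvc_count} supplies Condition~2, namely the procedure $\countc$ returning $|\cutsols_\wtarget|\bmod 2$ for all $\wtarget$ within time $\Oh^*(3^d)$ and polynomial space. Hence \cref{thm:general_setup} guarantees that each invocation returns $\mathbf{false}$ when its $\solutions$ is empty and returns $\mathbf{true}$ with probability at least $1/2$ otherwise. Consequently there are no false positives, and if $G$ has a connected vertex cover of size at most $k$ --- say of size $k'$ and, without loss of generality, containing $u$ --- then the invocation for $(k',u)$ has $\solutions\neq\emptyset$ and succeeds with probability at least $1/2$, so the overall false-negative probability is at most $1/2$.

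Finally, for the resource bounds each invocation samples $\wfct$ and then calls $\countc$, costing $\Oh^*(3^d)$ time and polynomial space by \cref{thm:cvc_count}; the $\Oh(n)$ invocations are executed sequentially and reuse space, so the total time remains $\Oh^*(3^d)$ and the total space is polynomial. I do not expect any genuinely hard step here beyond \cref{thm:cvc_count} itself --- the main content of the argument is already proven there; the only points demanding a little care are the ``at most $k$'' versus ``exactly $k'$'' reduction and the branching over the two endpoints of an edge that legitimizes forcing $v_1$ into the cut, and both cost merely a polynomial factor and thus leave the stated bounds unchanged.
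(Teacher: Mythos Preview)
Your proposal is correct and follows essentially the same approach as the paper: pick an edge $\{u,v\}$, branch on $v_1\in\{u,v\}$, invoke \cref{algo:cutncount} with $U=V$ and the counting procedure of \cref{thm:cvc_count}, and appeal to \cref{thm:general_setup} together with \cref{thm:cvc_correct} for correctness and to \cref{thm:cvc_count} for the resource bounds. The only addition is your explicit loop over $k'\le k$ to handle the ``at most $k$'' variant, which the paper had already disposed of in the surrounding text as ``the obvious way''; this is harmless and costs only a polynomial factor.
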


\begin{proof}
 We pick an edge $\{u, v\} \in E$ and branch on $v_1 := u$ and $v_1 := v$. We run \cref{algo:cutncount} with $U = V$ and $\countc$ as given by \cref{algo:cvc_countc}. Correctness follows from \cref{thm:general_setup} and \cref{thm:cvc_correct}. The running time and space bound follow from \cref{thm:cvc_count}.
\end{proof}

We remark that calling \cref{algo:cvc_countc} for each target weight $\wtarget \in [2n^2]$ (as in \cref{algo:cutncount}) would redundantly compute the polynomial $P = \Pin{r}{\emptyset}$ several times, although it suffices to compute $P$ once and then look up the appropriate coefficient depending on $\wtarget$.

If one is interested in solving \textsc{Weighted Connected Vertex Cover}, then it is straightforward to adapt our approach to polynomially-sized weights: instead of using $Z_X$ to track the size of the vertex covers, we let it track their cost and change recurrence \cref{eq:cvc_pin} accordingly.

\subsection{Adapting the Algorithm to Other Problems}

The high-level structure of the counting procedure for the other problems is very similar to that of \cref{algo:cvc_countc} for \textsc{Connected Vertex Cover}. One possible difference is that we might have to consider the solutions over a more complicated universe $U$ than just the vertex set $V$. Also, we might want to keep track of more data of the partial solutions and hence use more than just two formal variables for the polynomials. Both of these changes occur for \textsc{Feedback Vertex Set}, which is presented in the next section. The equation for the base case (cf.\ equation \cref{eq:cvc_pex_leaf}) and the recurrence for $\Pin{v}{g}$ (cf.\ equation \cref{eq:cvc_pin}) are also problem-dependent.

\subparagraph*{Time and Space Analysis.}\label{sec:gen_time_and_space} The properties that we require of the polynomials and equations in the time and space analysis, namely that the equations can be evaluated in polynomial time and every polynomial is computed at most once, remain true by the same arguments as for \textsc{Connected Vertex Cover}. The running time essentially follows from the number of computed polynomials, which increases when we use more states for the vertices. Again denoting the set of states by $\states$, we obtain a running time of $\Oh^*(|\states|^d)$ on elimination forests of depth $d$. The space analysis also remains valid, because the recursion depth remains $2d + 1$ and for each call we need to store only a constant number of polynomials each using at most polynomial space.

\subsection{Feedback Vertex Set}
\label{sec:fvs}

\begin{problem}[doublelined]{Feedback Vertex Set}
 Input: & An undirected graph $G = (V, E)$ and an integer $k$.
 \\
 Question: & Is there a set $X \subseteq V$, $|X| = k$, such that $G - X$ is a forest?
\end{problem}

\textsc{Feedback Vertex Set} differs from the other problems in that we do not have a positive connectivity requirement, but a negative connectivity requirement, i.e., we need to ensure that the remaining graph is badly connected in the sense that it contains no cycles. Cygan et al.~\cite{CyganNPPRW11} approach this via the well-known \cref{thm:forest_lemma}.

\begin{lem}
 \label{thm:forest_lemma}
 A graph with $n$ vertices and $m$ edges is a forest if and only if it has at most $n - m$ connected components.
\end{lem}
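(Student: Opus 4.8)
The plan is to prove both directions by comparing the number of vertices, edges, and connected components, building on the elementary fact that inserting one edge into a graph drops the number of connected components by at most one.

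First I would establish the auxiliary inequality that \emph{every} graph $H$ on $n$ vertices with $m$ edges has at least $n-m$ connected components, by induction on $m$. The edgeless graph on $n$ vertices has exactly $n$ components; inserting an edge $\{u,v\}$ either merges two distinct components into one, decreasing the count by exactly one, or joins two vertices already in a common component, leaving the count unchanged. Hence after all $m$ insertions the number of components is at least $n-m$, with equality if and only if every inserted edge merged two distinct components, equivalently if and only if no inserted edge ever closed a cycle, equivalently if and only if $H$ is acyclic.

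For the forward direction, if $G$ is a forest with components $T_1,\dots,T_c$, where $T_i$ is a tree on $n_i$ vertices, then $T_i$ has $n_i-1$ edges, so $m=\sum_i(n_i-1)=n-c$ and thus $c=n-m\le n-m$. For the converse, assume $G$ has at most $n-m$ connected components; together with the auxiliary inequality $c\ge n-m$ this forces $c=n-m$. If $G$ contained a cycle, deleting one edge of that cycle would give a graph on $n$ vertices with $m-1$ edges and still $c$ components, contradicting the auxiliary inequality $c\ge n-(m-1)=c+1$. Hence $G$ is acyclic, i.e., a forest.

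The whole argument is elementary; there is no real obstacle. The only place that deserves a moment's care is the equality characterization in the auxiliary inequality (equivalently, the edge-deletion step in the converse), since that is precisely where acyclicity is used, and one should be careful that deleting a \emph{cycle} edge genuinely leaves the component count unchanged.
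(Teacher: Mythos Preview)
Your proof is correct and entirely standard. The paper itself does not supply a proof of this lemma; it is stated as well-known and used immediately afterwards, so there is nothing to compare against beyond noting that your argument is the expected elementary one.
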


Applying \cref{thm:forest_lemma} requires that we count how many vertices and edges remain after deleting a set $X \subseteq V$ from $G$. We do not need to count exactly how many connected components remain, it suffices to enforce that there are not too many connected components. We will achieve this, like Cygan et al.~\cite{CyganNPPRW11}, by the use of \emph{marker vertices}. In this case, our solutions are pairs $(Y, M)$ with $M \subseteq Y$, where we interpret $Y$ as the forest that remains after removing a feedback vertex set $X$ and the marked vertices are represented by the set $M$. To bound the number of connected components, we want that every connected component of $G[Y]$ contains at least one marked vertex. By forcing the marked vertices to the left side of the cut, we ensure that candidates $(Y, M)$ where $G[Y]$ has a connected component not containing a marked vertex, in particular those with more than $|M|$ connected components, cancel modulo 2. The formal definitions are $\candidates = \{ (Y, M) \sep M \subseteq Y \subseteq V \text{ and } |Y| = n - k \}$, and $\solutions = \{ (Y, M) \in \candidates \sep G[Y] \text{ is a forest, every connected component of $G[Y]$ intersects $M$} \}$, and $\cutsols = \{ ((Y, M), (Y_L, Y_R)) \sep (Y, M) \in \candidates \text{ and } (Y, (Y_L, Y_R)) \in \cuts(V) \text{ and } M \subseteq Y_L \}$.

Since our solutions $(Y, M)$ are pairs of two vertex sets, we need a larger universe to make the Isolation Lemma, \cref{thm:isolation}, work. We use $U = V \times \{\forest, \marked\}$, hence a weight function $\wfct \colon U \rightarrow [N]$ assigns two different weights $\wfct(v, \forest)$ and $\wfct(v, \marked)$ to a vertex $v$ depending on whether $v$ is marked or not. To make these definitions compatible with \cref{thm:general_setup} we associate to each pair $(Y, M)$ the set $(Y \times \{\forest\}) \cup (M \times \{\marked\}) \subseteq U$, which also allows us to extend the weight function to such pairs $(Y, M)$, i.e.\ $\wfct(Y,M) = \wfct((Y \times \{\forest\}) \cup (M \times \{\marked\}))$.

\begin{lem}[\cite{CyganNPPRW11}]
 \label{thm:fvs_cut}
 Let $(Y, M)$ be such that $M \subseteq Y \subseteq V$. The number of consistently cut subgraphs $(Y, (Y_L, Y_R))$ such that $M \subseteq Y_L$ is equal to $2^{\overline{\cc}_{M}(G[Y])}$, where $\overline{\cc}_{M}(G[Y])$ is the number of connected components of $G[Y]$ that do not contain any vertex from $M$.
\end{lem}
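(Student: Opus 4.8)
The plan is to reduce the statement to the component-by-component structure of consistent cuts, exactly as in the proof of \cref{thm:cons_cut}, of which this lemma is the natural generalization (the case $M=\{v_1\}$). First I would observe that if $(Y,(Y_L,Y_R))$ is a consistently cut subgraph, then $(Y_L,Y_R)$ is a consistent cut of $G[Y]$, so $E(Y_L,Y_R)=\emptyset$; hence no edge of $G[Y]$ joins $Y_L$ and $Y_R$, which forces every connected component $C$ of $G[Y]$ to lie entirely in $Y_L$ or entirely in $Y_R$. Conversely, any assignment of a side in $\{L,R\}$ to each connected component of $G[Y]$ yields a consistently cut subgraph $(Y,(Y_L,Y_R))$. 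Thus the consistently cut subgraphs on vertex set $Y$ are in bijection with functions from the set of connected components of $G[Y]$ to $\{L,R\}$.

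Next I would incorporate the constraint $M\subseteq Y_L$. Under the above bijection this constraint says precisely that every connected component $C$ of $G[Y]$ with $C\cap M\neq\emptyset$ must be sent to $L$, whereas a component $C$ with $C\cap M=\emptyset$ may be sent to either side. Since the choices for distinct components are independent, the number of valid assignments is $2^{t}$, where $t$ is the number of connected components of $G[Y]$ disjoint from $M$, i.e.\ $t=\overline{\cc}_{M}(G[Y])$. This yields the claimed count.

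I do not expect a genuine obstacle here: the argument is a verbatim adaptation of the proof of \cref{thm:cons_cut}, and the only points worth stating explicitly are the boundary cases — if every component of $G[Y]$ meets $M$ the count is $2^{0}=1$ (the cut $(Y,\emptyset)$ being the unique one), and if $M=\emptyset$ it is $2^{\cc(G[Y])}$ — both of which fall out of the same formula. The one place to be slightly careful is to keep the bijection ``on vertex set $Y$'' fixed throughout, so that $Y$ itself is not a degree of freedom and only the partition of $Y$ into $Y_L,Y_R$ is being counted.
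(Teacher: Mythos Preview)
Your proposal is correct and follows essentially the same argument as the paper's proof: both observe that each connected component of $G[Y]$ must lie entirely on one side of a consistent cut, that components meeting $M$ are forced to the left, and that the remaining $\overline{\cc}_{M}(G[Y])$ components each admit two independent choices. Your write-up is in fact a bit more explicit about the bijection and the boundary cases than the paper's terse version.
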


\begin{proof}
 For a consistently cut subgraph $(Y, (Y_L, Y_R))$ with $M \subseteq Y_L$ any connected component $C$ of $G[Y]$ that contains a vertex of $M$ must be completely contained in $Y_L$. For all other connected components $C$ of $G[Y]$, namely those counted by $\overline{\cc}_{M}(G[Y])$, we have that either $C \subseteq Y_L$ or $C \subseteq Y_R$. Thus, we arrive at the claimed number of consistently cut subgraphs $(Y, (Y_L, Y_R))$ with $M \subseteq Y_L$.
\end{proof}

To apply \cref{thm:forest_lemma}, we need to distinguish candidates by the number of edges, and markers, in addition to the weight, hence we make the following definitions for $j, \ell, \wtarget \in \NN$:
\begin{equation*}
\begin{array}{llllll}
 \candidates_\wtarget^{j, \ell} & = \{(Y, M) \in \candidates & \sep & \wfct(Y,M) = \wtarget, & |E(G[Y])| = j, & |M| = \ell\}, \\
 \solutions_\wtarget^{j, \ell} & = \{(Y, M) \in \solutions & \sep & \wfct(Y,M) = \wtarget, & |E(G[Y])| = j, & |M| = \ell\}, \\
 \cutsols_\wtarget^{j, \ell} & = \{(Y, M, (Y_L, Y_R)) \in \cutsols & \sep & \wfct(Y,M) = \wtarget, & |E(G[Y])| = j, & |M| = \ell\}.
\end{array}
\end{equation*}

\begin{lem}[\cite{CyganNPPRW11}]
 \label{thm:fvs_correct}
 Let $\wfct \colon U \rightarrow [N]$ be a weight function, and $\cutsols$ and $\solutions$ as defined above. Then we have for every $\wtarget \in \NN$ and $j \in \{0, \ldots, n - k - 1\}$ that $|\solutions_\wtarget^{j, n-k-j}| \equiv |\cutsols_\wtarget^{j, n-k-j}|$.
\end{lem}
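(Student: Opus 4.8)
The plan is to mirror the proof of \cref{thm:cvc_correct}, using \cref{thm:fvs_cut} in place of \cref{thm:cons_cut} and then invoking \cref{thm:forest_lemma} to identify which candidates survive the modulo-2 count. First I would apply \cref{thm:fvs_cut} to each candidate $(Y,M) \in \candidates_\wtarget^{j,n-k-j}$: the number of consistent cuts $(Y,(Y_L,Y_R))$ of $G[Y]$ with $M \subseteq Y_L$ equals $2^{\overline{\cc}_{M}(G[Y])}$, so
\[
 |\cutsols_\wtarget^{j,n-k-j}| = \sum_{(Y,M) \in \candidates_\wtarget^{j,n-k-j}} 2^{\overline{\cc}_{M}(G[Y])}.
\]
Reducing modulo 2, every term with $\overline{\cc}_{M}(G[Y]) \geq 1$ vanishes, so $|\cutsols_\wtarget^{j,n-k-j}| \equiv |\{(Y,M) \in \candidates_\wtarget^{j,n-k-j} \sep \overline{\cc}_{M}(G[Y]) = 0\}|$; that is, only those candidates in which every connected component of $G[Y]$ contains a marked vertex contribute.

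It then remains to show that this surviving set is exactly $\solutions_\wtarget^{j,n-k-j}$. The inclusion $\solutions_\wtarget^{j,n-k-j} \subseteq \{(Y,M) \in \candidates_\wtarget^{j,n-k-j} \sep \overline{\cc}_{M}(G[Y]) = 0\}$ is immediate from the definition of $\solutions$. For the converse, take a candidate $(Y,M)$ with $|Y| = n-k$, $|E(G[Y])| = j$, $|M| = n-k-j$, and $\overline{\cc}_{M}(G[Y]) = 0$. Since every connected component of $G[Y]$ meets $M$, we get $\cc(G[Y]) \leq |M| = n-k-j = |Y| - |E(G[Y])|$. Applying \cref{thm:forest_lemma} to $G[Y]$ in the direction ``at most $n'-m'$ components implies forest'' forces $G[Y]$ to be a forest, so $(Y,M) \in \solutions$; as the weight, edge count, and marker count are already the prescribed ones, $(Y,M) \in \solutions_\wtarget^{j,n-k-j}$. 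Combining the two directions yields $|\cutsols_\wtarget^{j,n-k-j}| \equiv |\solutions_\wtarget^{j,n-k-j}|$, as desired.

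The real content of the argument is the bookkeeping choice $\ell = n-k-j$: coupling the marker count to the edge count is precisely what makes the inequality $\cc(G[Y]) \leq |Y| - |E(G[Y])|$ available, and hence what lets \cref{thm:forest_lemma} convert ``every component is marked'' into ``$G[Y]$ is a forest''. I do not expect a genuine obstacle here; the only points needing a little care are that \cref{thm:forest_lemma} is being used in the ``few components $\Rightarrow$ forest'' direction, and that the three tracked parameters (weight, $j$, $\ell$) are indeed preserved when passing between $\cutsols$, $\candidates$, and $\solutions$, which holds because the cut coordinate $(Y_L,Y_R)$ affects none of them.
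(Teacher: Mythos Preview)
Your proposal is correct and follows essentially the same argument as the paper: apply \cref{thm:fvs_cut} to express $|\cutsols_\wtarget^{j,n-k-j}|$ as a sum of powers of $2$, reduce modulo $2$ to isolate candidates with $\overline{\cc}_M(G[Y])=0$, and then use the choice $\ell=n-k-j$ together with \cref{thm:forest_lemma} to identify these with $\solutions_\wtarget^{j,n-k-j}$. The paper's version states the intermediate inequality as $\cc(G[Y]) \leq |M| + \overline{\cc}_M(G[Y])$ before specializing, but this is exactly your ``every component meets $M$ so $\cc(G[Y]) \leq |M|$'' step.
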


\begin{proof}
 \cref{thm:fvs_cut} implies that $|\cutsols_\wtarget^{j, \ell}| = \sum_{(Y,M) \in \candidates_\wtarget^{j, \ell}} 2^{\overline{\cc}_{M}(G[Y])}$ for all $j, \ell, \wtarget \in \NN$. Hence, we have for all $j, \ell, \wtarget \in \NN$ that 
 \begin{equation*}
  |\cutsols_\wtarget^{j, \ell}| \equiv |\{(Y,M) \in \candidates_\wtarget^{j, \ell} \sep \overline{\cc}_{M}(G[Y]) = 0\}|.
 \end{equation*}
 We certainly have $\solutions_\wtarget^{j, \ell} \subseteq \{ (Y, M) \in \candidates_\wtarget^{j, \ell} \sep \overline{cc}_M(G[Y]) = 0\}$ by definition of $\solutions_\wtarget^{j, \ell}$. To see the other direction of inclusion for $\ell = (n - k) - j$, observe that $\cc(G[Y]) \leq |M| + \overline{\cc}_M(G[Y])$ for all $M \subseteq Y \subseteq V$, and hence a pair $(Y,M) \in \candidates_\wtarget^{j, n-k-j}$ with $\overline{\cc}_{M}(G[Y]) = 0$ must satisfy
 \begin{equation*}
  \cc(G[Y]) \leq |M| = (n-k) - j = |Y| - |E(G[Y])|.
 \end{equation*}
 Finally, \cref{thm:forest_lemma} implies that $G[Y]$ is a forest and this finishes the other direction of inclusion. Thus, we have that $|\cutsols_\wtarget^{j, n-k-j}| \equiv |\solutions_\wtarget^{j, n-k-j}|$.
\end{proof}

Note that by \cref{thm:forest_lemma} a \textsc{Feedback Vertex Set} instance has a solution $X$ if and only if there is a choice of $\wtarget \in \NN$, $j \in \{0\} \cup [n - k - 1]$ such that $\solutions_\wtarget^{j, n-k-j} \neq \emptyset$.

\begin{lem}
 \label{thm:fvs_count}
 Given a connected graph $G = (V, E)$, an integer $k$, a weight function $\wfct \colon U \rightarrow [4n]$ and an elimination forest $\TT$ of $G$ of depth $d$, we can determine $|\cutsols_\wtarget^{j, n - k - j}|$ modulo 2 for every $0 \leq \wtarget \leq 4n^2$, $0 \leq j \leq \min(m, n-k-1)$, in time $\Oh^*(3^d)$ and polynomial space.
\end{lem}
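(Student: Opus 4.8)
The plan is to follow the proof of \cref{thm:cvc_count} almost verbatim, running the same branching algorithm down the elimination forest $\TT$ with the three states $\states = \{\zero, \one_L, \one_R\}$ per vertex (which keeps the running time at $\Oh^*(3^d)$), where now $\zero$ means ``$v \notin Y$'' while $\one_L$ and $\one_R$ mean ``$v \in Y$ and $v$ lies on the left, resp.\ right, side of the cut''. The differences are exactly the two anticipated in the discussion of adapting the algorithm: the universe is the larger set $U = V \times \{\forest, \marked\}$, and instead of two formal variables we track four, computing polynomials in $\FF_2[Z_W, Z_Y, Z_E, Z_M]$ in which the coefficient of $Z_W^{\wtarget} Z_Y^{i} Z_E^{j} Z_M^{\ell}$ equals, modulo $2$, the number of partial solutions of weight $\wtarget$ with $|Y| = i$, $|E(G[Y])| = j$, and $|M| = \ell$ — so the $Z_M$-grading implicitly ranges over all admissible marker sets $M \subseteq Y_L$ instead of branching on them. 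The families $\partialex{v}{f}$ and $\partialin{v}{g}$ and their polynomials $\Pex{v}{f}$, $\Pin{v}{g}$ are defined as in \cref{thm:cvc_count} but relative to the \textsc{Feedback Vertex Set} objects; extracting from $\Pin{r}{\emptyset}$ (for the root $r$) the coefficients of $Z_W^{\wtarget} Z_Y^{n-k} Z_E^{j} Z_M^{n-k-j}$ then yields all the requested numbers $|\cutsols_\wtarget^{j, n-k-j}|$ modulo $2$.

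Two ingredients are genuinely new. \emph{Markers are kept implicit}: a separate ``marked'' state would cost $4^d$, so instead we use that a marked vertex always lies on the left and fold the marked/unmarked choice of a left vertex into the polynomial arithmetic. \emph{Edges are counted at their deeper endpoint}: by the defining property of elimination forests, for every edge $\{u,v\} \in E$ one endpoint, say $u$, is an ancestor of the other, so each edge of $G[Y]$ is accounted for exactly once, namely when its lower endpoint is placed into $Y$. Concretely, writing $e(v,g) = |\{\, u \in \tail(v) \sep \{u,v\} \in E \text{ and } g(u) \neq \zero \,\}|$ for the number of already-chosen ancestors of $v$ adjacent to $v$, the recurrence for $\Pin{v}{g}$ becomes
\begin{align*}
 \Pin{v}{g} ={}& \Pex{v}{g[v \mapsto \zero]}
   + Z_W^{\wfct(v,\forest)}\, Z_Y\, Z_E^{e(v,g)}\, \bigl(1 + Z_M Z_W^{\wfct(v,\marked)}\bigr)\, \Pex{v}{g[v \mapsto \one_L]} \\
  &{}+ Z_W^{\wfct(v,\forest)}\, Z_Y\, Z_E^{e(v,g)}\, \Pex{v}{g[v \mapsto \one_R]},
\end{align*}
where the factor $1 + Z_M Z_W^{\wfct(v,\marked)}$ sums over ``$v$ unmarked'' and ``$v$ marked'' (a marker adds one to $|M|$ and $\wfct(v,\marked)$ to the weight), and no marker term appears in the $\one_R$-branch since markers sit on the left. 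If some ancestor-neighbour of $v$ lies on the side opposite to the one chosen for $v$, then the corresponding $\partialex{v}{\cdot}$ is empty and $\Pex{v}{\cdot} = 0$, so the value of $Z_E^{e(v,g)}$ there is immaterial and no explicit consistency test is needed in this recurrence. The branching recurrence $\Pex{v}{f} = \prod_{u \in \child(v)} \Pin{u}{f}$ is unchanged, and the leaf case reduces to a single indicator $\Pex{v}{f} = [\, E(f^{-1}(\one_L), f^{-1}(\one_R)) = \emptyset \,]$: there is no vertex-cover predicate and no forced vertex, because ``$G[Y]$ is a forest'' and ``every component of $G[Y]$ meets $M$'' are not checked here but are recovered afterwards through \cref{thm:forest_lemma} and \cref{thm:fvs_correct}.

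Correctness follows the template of \cref{thm:cvc_count}: one establishes bijections between the objects counted on the two sides of each recurrence and checks that the exponents of $Z_W, Z_Y, Z_E, Z_M$ add up, so that polynomial addition and multiplication track weight, $|Y|$, $|E(G[Y])|$, and $|M|$ correctly; in particular the product over children is sound because an elimination forest has no edge between $\tree[u]$ and $\tree[u']$ for distinct children $u, u'$ (hence the edge sets counted in the different factors are disjoint) and because an edge incident to $v$ and one of its ancestors is counted at $v$ and nowhere else, while an edge incident to a deeper vertex of $\tree(v)$ is counted inside the corresponding $\Pex{v}{\cdot}$. As in \cref{thm:cvc_count} the recurrences are valid over $\ZZ$ and the reduction modulo $2$ happens only at the end. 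The time and space analysis is identical to that of \cref{thm:cvc_count}: there are at most $3^d$ assignments $f$, resp.\ $g$, for each vertex, every polynomial is computed at most once, all polynomials have only polynomially many monomials (the four exponents are bounded by $\Oh(n^2)$, $n$, $m$, $n$), so the $\FF_2$-arithmetic is polynomial, the recursion depth is $2d + 1$, and each stack frame keeps only a constant number of such polynomials; this gives $\Oh^*(3^d)$ time and polynomial space.

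The step I expect to require the most care is the edge accounting: proving that ``count each edge at its deeper endpoint'' really adds up to $|E(G[Y])|$ with no omission and no double counting, and that it meshes correctly with cut-consistency, with the product over children, and with the implicit handling of markers. This is precisely where the elimination-forest property enters, playing the same role as in the single place it is used for \textsc{Connected Vertex Cover}, namely equation \cref{eq:cvc_pex_branch}.
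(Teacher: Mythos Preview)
Your proposal is correct and follows essentially the same approach as the paper. The only cosmetic difference is that the paper names the three states $\{\zero_L,\zero_R,\one\}$ (with $\zero_L,\zero_R$ meaning ``in the forest $Y$, on the left/right'' and $\one$ meaning ``in the feedback vertex set''), whereas you use $\{\zero,\one_L,\one_R\}$ with the opposite convention; under this relabelling your recurrence for $\Pin{v}{g}$, your edge count $e(v,g)$, and your leaf predicate coincide verbatim with the paper's equations \cref{eq:fvs_pex_leaf}--\cref{eq:fvs_pin}, and your factor $(1+Z_M Z_W^{\wfct(v,\marked)})$ is simply the two $\zero_L$-summands of \cref{eq:fvs_pin} written as a product. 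One small imprecision worth tightening in a full write-up: for intermediate polynomials the exponent of $Z_E$ must track $|E(G[Y])|+|E(Y,f^{-1}(\{\one_L,\one_R\}))|$ rather than just $|E(G[Y])|$, which is exactly what your ``count each edge at its deeper endpoint'' rule achieves and what the paper states explicitly.
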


\begin{proof}
 Again, we drop the cardinality constraints from $\candidates$ and $\cutsols$ and define for induced subgraphs $G[V']$ the variants $\widehat{\candidates}(V') = \{ (Y, M) \sep M \subseteq Y \subseteq V' \}$ and $\widehat{\cutsols}(V') = \{ ((Y, M), (Y_L, Y_R)) \sep (Y, M) \in \widehat{\candidates}(V') \text{ and } (Y, (Y_L, Y_R)) \in \cuts(V') \text{ and } M \subseteq Y_L\}$.
 
 We will compute a multivariate polynomial in the formal variables $Z_W, Z_Y, Z_E, Z_M$, where the coefficient of $Z_W^\wtarget Z_Y^i Z_E^j Z_M^\ell$ is the cardinality modulo 2 of
 \begin{equation*}
  \widehat{\cutsols}_{\wtarget}^{i, j, \ell} = \{((Y,M),C) \in \widehat{\cutsols}(V) \sep \wfct(Y,M) = \wtarget, |Y| = i, |E(G[Y])| = j, |M| = \ell \}.
 \end{equation*}
 The coefficients of $Z_W^\wtarget Z_Y^{n - k} Z_E^j Z_M^{n-k-j}$ for every $\wtarget$ and $j$ then yield the desired numbers.
 
 For \textsc{Feedback Vertex Set} we require three states which are given by $\states = \{\zero_L, \zero_R, \one\}$. The states $\zero_L$ and $\zero_R$ represent vertices inside the remaining forest and the subscript denotes to which side of the consistent cut a vertex belongs; the state $\one$ represents vertices inside the feedback vertex set. Perhaps surprisingly, there is no state to represent marked vertices. It turns out that it is not important which vertices are marked; it is sufficient to know the number of marked vertices.
 
 For every vertex $v$ and assignment $f \colon \tail[v] \rightarrow \{\zero_L, \zero_R, \one\}$, we define the partial solutions at $v$, but excluding $v$, that respect $f$ by
 \begin{align*}
  \partialex{v}{f} = \{ & ((Y,M), (Y_L, Y_R)) \in \widehat{\cutsols}(\tree(v)) \sep\, Y' = Y \cup f^{-1}(\{\zero_L, \zero_R\}), \\
			& C' = (Y_L \cup f^{-1}(\zero_L), Y_R \cup f^{-1}(\zero_R)), 
			  ((Y', M), C') \in \widehat{\cutsols}(\broom[v]) \}.
 \end{align*}
 The partial solutions in $\partialex{v}{f}$ are consistently cut subgraphs $(Y, (Y_L, Y_R))$ of $G[\tree(v)]$ where a subset $M$ of the left side is marked and the extension to $(Y', C')$ by $f$ is a consistently cut subgraph of $G[\broom[v]]$.
 
 Similarly, for every vertex $v$ and assignment $g \colon \tail(v) \rightarrow \{\zero_L, \zero_R, \one\}$, we define the partial solutions at $v$, possibly including $v$, that respect $g$ by
 \begin{align*}
  \partialin{v}{g} = \{ & ((Y,M), (Y_L, Y_R)) \in \widehat{\cutsols}(\tree[v]) \sep\, Y' = Y \cup g^{-1}(\{\zero_L, \zero_R\}), \\
			& C' = (Y_L \cup g^{-1}(\zero_L), Y_R \cup g^{-1}(\zero_R)), 
			  ((Y', M), C') \in \widehat{\cutsols}(\broom[v]) \}.
 \end{align*}
 
 For every vertex $v$ and assignment $f \colon \tail[v] \rightarrow \{\zero_L, \zero_R, \one\}$, we will compute a polynomial $\Pex{v}{f} \in \FF_2[Z_W, Z_Y, Z_E, Z_M]$ where the coefficient of $Z_W^\wtarget Z_Y^i Z_E^j Z_M^\ell$ in $\Pex{v}{f}$ is given by
 \begin{align*}
  |\{ ((Y,M), C) \in \partialex{v}{f} \sep \quad & \wfct(Y,M) = \wtarget, |Y| = i, \\
   & |E(G[Y])| + |E(Y,f^{-1}(\{\zero_L, \zero_R\}))| = j, |M| = \ell\}| \mod 2.
 \end{align*}
 For every vertex $v$ and assignment $g \colon \tail(v) \rightarrow \{\zero_L, \zero_R, \one\}$, we will compute a polynomial $\Pin{v}{g} \in \FF_2[Z_W, Z_Y, Z_E, Z_M]$ where the coefficient of $Z_W^\wtarget Z_Y^i Z_E^j Z_M^\ell$ in $\Pin{v}{g}$ is given by
 \begin{align*}
  |\{ ((Y,M), C) \in \partialin{v}{g} \sep \quad & \wfct(Y,M) = \wtarget, |Y| = i, \\
   & |E(G[Y])| + |E(Y,g^{-1}(\{\zero_L, \zero_R\}))| = j, |M| = \ell\}| \mod 2.
 \end{align*}
 The exponents of the monomials $Z_W^\wtarget Z_Y^i Z_E^j Z_M^\ell$ in $\Pex{v}{f}$ and $\Pin{v}{g}$ range between $0 \leq \wtarget \leq 4n^2$, $0 \leq i \leq n$, $0 \leq j \leq m$, and $0 \leq \ell \leq n$. Observe that the term $|E(G[Y])| + |E(Y,g^{-1}(\{\zero_L, \zero_R\}))|$ degenerates to $|E(G[Y])|$ when $v = r$ is the root node. Hence, $\Pin{r}{\emptyset}$ yields the desired polynomial.
 
 We now present the recurrences used to compute the polynomials $\Pex{v}{f}$ and $\Pin{v}{g}$. If $v$ is a leaf node in $\TT$, then we can compute $\Pex{v}{f}$ by
 \begin{equation}
  \label{eq:fvs_pex_leaf}
  \Pex{v}{f} = [(f^{-1}(\zero_L), f^{-1}(\zero_R)) \text{ is a consistent cut of } G[f^{-1}(\{\zero_L, \zero_R\})]].
 \end{equation}
 If $v$ is not a leaf node, then we compute $\Pex{v}{f}$ by
 \begin{equation}
  \label{eq:fvs_pex_branch}
  \Pex{v}{f} = \prod_{u \in \child(v)} \Pin{u}{f}.
 \end{equation}
 To compute $\Pin{v}{g}$ we use the recurrence
 \begin{equation}
  \label{eq:fvs_pin}
 \begin{array}{llllll}
  \Pin{v}{g} 	& = \Pex{v}{g[v \rightarrow \zero_L]} & Z_W^{\wfct(v,\forest)} & Z_Y & Z_E^{|N(v) \cap g^{-1}(\{\zero_L, \zero_R\})|} & \\
		& + \,\Pex{v}{g[v \rightarrow \zero_L]} & Z_W^{\wfct(v,\forest) + \wfct(v,\marked)} & Z_Y & Z_E^{|N(v) \cap g^{-1}(\{\zero_L, \zero_R\})|} & Z_M \\
		& + \,\Pex{v}{g[v \rightarrow \zero_R]} & Z_W^{\wfct(v,\forest)} & Z_Y & Z_E^{|N(v) \cap g^{-1}(\{\zero_L, \zero_R\})|}. & \\
		& + \,\Pex{v}{g[v \rightarrow \one]} & & & & \\
 \end{array}  
 \end{equation}
 This recurrence tests all three possible states for the vertex $v$ and whether it is marked. In the second case $v$ has state $\zero_L$ and is marked, so the formal variables $Z_W$ and $Z_M$ must be updated differently from the case where $v$ is not marked.
 
 We will now prove the correctness of the equations \cref{eq:fvs_pex_leaf} to \cref{eq:fvs_pin}. For the correctness of equation \cref{eq:fvs_pex_leaf}, notice that $\tree(v) = \emptyset$ when $v$ is a leaf. Hence, $\widehat{\cutsols}(\tree(v))$ degenerates to $\{ ((\emptyset, \emptyset), (\emptyset, \emptyset)) \}$ and we must check whether $((\emptyset, \emptyset), (\emptyset, \emptyset)) \in \partialex{v}{f}$ which means that $((f^{-1}(\{\zero_L, \zero_R\}), \emptyset), (f^{-1}(\zero_L), f^{-1}(\zero_R))) \in \widehat{\cutsols}(\broom[v])$ and checking the consistency of the cut is the only nontrivial requirement in this case and all trackers are zero. 
 
 The proof of correctness for equation \cref{eq:fvs_pex_branch} is similar to the proof for equation \cref{eq:cvc_pex_branch} of \textsc{Connected Vertex Cover}. Any solution $((Y, M), C) \in \partialex{v}{f}$ uniquely partitions into solutions $((Y^u, M^u), C^u) \in \partialin{u}{f}$ for each $u \in \child(v)$. Vice versa, any combination of solutions for the children $u$ of $v$ yields a unique solution in $\partialex{v}{f}$. The properties of an elimination forest are needed to show that the union of consistent cuts remains a consistent cut. Observe that the sets $E(Y^u, f^{-1}(\{\zero_L, \zero_R\}))$, $u \in \child(v)$, must be pairwise disjoint as well, hence polynomial multiplication also updates the edge tracker correctly. We omit further details.
 
 To prove the correctness of equation \cref{eq:fvs_pin} we consider a partial solution $((Y, M), (Y_L, Y_R)) \in \partialin{v}{g}$ and distinguish between four cases depending on the state of $v$.
 \begin{enumerate}
  \item If $v \in Y_L \subseteq Y$ and $v \notin M$, then $((Y \setminus \{v\}, M),(Y_L \setminus \{v\}, Y_R)) \in \partialex{v}{f}$, where $f = g[v \mapsto \zero_L]$, because the definition of $Y'$ and $C'$ in the predicate in the definition of $\partialin{v}{g}$ and $\partialex{v}{f}$ do not change. Hence, we can also extend any partial solution of $\partialex{v}{f}$ to such a partial solution of $\partialin{v}{g}$ by adding $v$ to $Y_L$. The number of vertices in $Y$ increase by $1$ and the weight increases by $\wfct(v, \forest)$. Let $Y' = Y \setminus \{v\}$ and $A = g^{-1}(\{\zero_L, \zero_R\})$, then $A \cup \{v\} = f^{-1}(\{\zero_L, \zero_R\})$ and consider the following computation to see that the edge count increases by $|N(v) \cap g^{-1}(\{\zero_L, \zero_R\})|$:
  \begin{align*}
		  & (|E(G[Y])| - |E(G[Y'])| ) + |E(Y,A)| - |E(Y',A \cup \{v\})| \\
		  = \quad & |E(Y', \{v\})| + (|E(Y', A)| + |E(\{v\}, A)|) - (|E(Y', A)| + |E(Y', \{v\})|) \\
		  = \quad & |E(\{v\}, A)| = |N(v) \cap g^{-1}(\{\zero_L, \zero_R\})|.
  \end{align*}  
   Therefore, multiplication with $ Z_W^{\wfct(v,\forest)} Z_Y Z_E^{|N(v) \cap g^{-1}(\{\zero_L, \zero_R\})|}$ is the correct update.
  \item If $v \in M \subseteq Y_L \subseteq Y$, then $((Y \setminus \{v\}, M \setminus \{v\}),(Y_L \setminus \{v\}, Y_R)) \in \partialex{v}{f}$, where $f = g[v \mapsto \zero_L]$. The argument is similar to case 1. Note that, again, the definitions of $Y'$ and $C'$ do not change in the predicates. The set of marked vertices $M$ does change, but we only need to ensure that $M$ remains a subset of the left side of the cut, which we do by removing $v$ from $M$. In addition to the changes in the number of vertices and edges from case 1, the number of marked vertices has increased by $1$ and the weight increases by an additional $\wfct(v, \marked)$, to keep track of these changes we further multiply by $Z_W^{\wfct(v, \marked)} Z_M$.
  \item If $v \in Y_R \subseteq Y$, then the proof is analogous to case 1.
  \item If $v \notin Y$, then $((Y,M),(Y_L,Y_R)) \in \partialex{v}{f}$, where $f = g[v \mapsto \one]$, because there is no constraint involving vertices with state $\one$. Vice versa, we have that any partial solution $((Y,M),(Y_L,Y_R)) \in \partialex{v}{f}$ must also be in $\partialin{v}{g}$. Since $Y$ and $M$ do not change, we do not need to multiply by further formal variables.
 \end{enumerate}
 The running time and space bound follows from the general discussion in \cref{sec:gen_time_and_space}.
\end{proof}

\begin{thm}
 \label{thm:fvs_algo}
 There exists a Monte-Carlo algorithm that given an elimination forest of depth $d$ solves \textsc{Feedback Vertex Set} in time $\Oh^*(3^d)$ and polynomial space. The algorithm cannot give false positives and may give false negatives with probability at most 1/2.
\end{thm}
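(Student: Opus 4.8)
The plan is to invoke the Cut\&Count framework essentially as in the \textsc{Connected Vertex Cover} case, using the universe $U = V \times \{\forest, \marked\}$ and the counting procedure of \cref{thm:fvs_count}, with one modification: the outer loop must additionally range over the edge count $j$. First I would dispose of the ``size at most $k$'' variant in the obvious way: any superset of a feedback vertex set is again one, so $G$ has a feedback vertex set of size at most $k$ if and only if it has one of size exactly $\min(k,n)$, and we henceforth look for solutions of size exactly $k$. Then, following \cref{algo:cutncount}, sample $\wfct\colon U \to [2|U|] = [4n]$ uniformly at random, run the procedure of \cref{thm:fvs_count} once to obtain $|\cutsols_\wtarget^{j,n-k-j}| \bmod 2$ for all $0\le\wtarget\le 4n^2$ and all $j\in\{0\}\cup[n-k-1]$, and return $\mathbf{true}$ if and only if one of these parities equals $1$.

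For correctness I would mimic the proof of \cref{thm:general_setup}. If $G$ has no feedback vertex set of size $k$, then $\solutions_\wtarget^{j,n-k-j}=\emptyset$ for all $\wtarget$ and $j$, so \cref{thm:fvs_correct} forces every computed parity to be $0$ and the algorithm answers $\mathbf{false}$; hence there are no false positives. Conversely, if $X$ is a feedback vertex set with $|X| = k$, set $Y = V\setminus X$, $j^\ast = |E(G[Y])|$, and let $M$ contain exactly one vertex of each connected component of $G[Y]$. Since $G[Y]$ is a forest, $|M| = \cc(G[Y]) = |Y|-|E(G[Y])| = (n-k)-j^\ast$, so $(Y,M)$ encodes an element of the family $\family$ of all subsets $(Y'\times\{\forest\})\cup(M'\times\{\marked\})$ of $U$ with $(Y',M')\in\solutions$, $|E(G[Y'])| = j^\ast$ and $|M'| = n-k-j^\ast$; in particular $\family \neq \emptyset$. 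Applying the Isolation Lemma (\cref{thm:isolation}) to $\family$ with $N = 2|U|$, with probability at least $1 - |U|/(2|U|) = 1/2$ there is a weight $\wtarget^\ast$ with exactly one element of $\family$ of weight $\wtarget^\ast$, i.e.\ $|\solutions_{\wtarget^\ast}^{j^\ast,n-k-j^\ast}| = 1$; then \cref{thm:fvs_correct} gives $|\cutsols_{\wtarget^\ast}^{j^\ast,n-k-j^\ast}| \equiv 1$ and the algorithm returns $\mathbf{true}$. Thus false negatives occur with probability at most $1/2$. The running time $\Oh^*(3^d)$ and the polynomial space bound are inherited directly from \cref{thm:fvs_count}, the loops over $j$ and $\wtarget$ contributing only a polynomial factor.

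I expect the only real subtlety to be the choice of marker set in the yes-instance case, namely making the family to which the Isolation Lemma is applied coincide exactly with the slice $\solutions_\wtarget^{j,n-k-j}$ that appears in \cref{thm:fvs_correct}: one must take $M$ to be a transversal of the components of $G[Y]$ so that $|M| = n-k-j$, which is precisely what lets the forest-lemma argument behind \cref{thm:fvs_correct} and the isolation step speak about the same objects, and it is the reason the Cut\&Count loop has to range over $j$ in addition to $\wtarget$. Everything else is a routine adaptation of the \textsc{Connected Vertex Cover} argument.
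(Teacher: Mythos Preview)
Your proposal is correct and follows essentially the same route as the paper. The only minor difference is organizational: the paper defines $\widetilde{\solutions}=\bigcup_{j}\bigcup_{\wtarget}\solutions_\wtarget^{j,n-k-j}$ and $\widetilde{\cutsols}$ analogously, sums the parities over $j$ for each fixed $\wtarget$, and then invokes \cref{thm:general_setup} directly with $\widetilde{\solutions},\widetilde{\cutsols}$, whereas you keep the slices for each $j$ separate, inline the proof of \cref{thm:general_setup}, and apply the Isolation Lemma to the single nonempty slice determined by $j^\ast$; both variants are sound and yield the same bounds.
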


\begin{proof} 
 We set $U = V \times \{\forest, \marked\}$, but we need to slightly adapt the definition of $\solutions$ and $\cutsols$ to be able to apply \cref{thm:general_setup}. We define $\widetilde{\solutions} = \cup_{j = 0}^{n-k-1} \cup_{\wtarget = 0}^{4n^2} \solutions_w^{j,n-k-j}$ and $\widetilde{\cutsols} = \cup_{j = 0}^{n-k-1} \cup_{\wtarget = 0}^{4n^2} \cutsols_w^{j, n-k-j}$. Note that $\solutions$ is nonempty if and only if $\widetilde{\solutions}$ is nonempty by \cref{thm:forest_lemma}. The procedure $\countc$ is given by running the algorithm from \cref{thm:fvs_count} and for a given target weight $\wtarget$ adding up (modulo 2) the values of $|\cutsols_\wtarget^{j, n-k-j}|$ for $j = 0, \ldots, n-k-1$, thereby obtaining the cardinality of $\widetilde{\cutsols}_\wtarget$ modulo 2. The desired algorithm is then given by running \cref{algo:cutncount}. The correctness follows from \cref{thm:fvs_correct} and \cref{thm:general_setup} with $\widetilde{\solutions}$ and $\widetilde{\cutsols}$ instead of $\solutions$ and $\cutsols$. The running time and space bound follows from \cref{thm:fvs_count}.
\end{proof}

\cref{thm:fvs_algo} allows us to easily reobtain a result by Cygan et al.~\cite{CyganNPPRW11} on \textsc{Feedback Vertex Set} parameterized by \textsc{Feedback Vertex Set}. Recently, this result has been superseded by results of Li and Nederlof~\cite{LiN20}; they present an $\Oh^*(2.7^k)$-time and exponential-space algorithm and an $\Oh^*(2.8446^k)$-time and polynomial-space algorithm for this problem.

\begin{cor}[\cite{CyganNPPRW11}]
 \label{thm:fvs_fvs}
 There is a Monte-Carlo algorithm that given a feedback vertex set of size $s$ solves \textsc{Feedback Vertex Set} in time $\Oh^*(3^s)$ and polynomial space. The algorithm cannot give false positives and may give false negatives with probability at most 1/2.
\end{cor}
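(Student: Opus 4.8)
The plan is to reduce to \cref{thm:fvs_algo} by turning a given feedback vertex set of size $s$ into an elimination forest of $G$ of depth $s + \Oh(\log n)$, so that the $\Oh^*(3^d)$-time algorithm of \cref{thm:fvs_algo} runs in time $\Oh^*(3^{s + \Oh(\log n)}) = \Oh^*(3^s)$. Let $X$ be the given feedback vertex set of $G$ with $|X| = s$; by definition $F := G - X$ is a forest, so it suffices to build an elimination forest of $G$ that places all of $X$ above a suitable elimination forest of $F$.

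First I would construct an elimination forest $\TT_F$ of $F$ of depth at most $\lceil \log_2(n+1) \rceil$ in polynomial time, using the standard balanced-separator recursion: in each tree of the current forest pick a vertex whose removal leaves components of at most half the size, make it a root in the elimination forest, and recurse on the resulting subforest; since the number of remaining vertices at least halves along every root-to-leaf path, the depth is logarithmic and the whole construction is polynomial-time. Next I would form $\TT$ by taking an arbitrary ordering $x_1, \dots, x_s$ of $X$, making $x_1$ the root with single child $x_2$, and so on down to $x_s$, and finally attaching every root of $\TT_F$ as a child of $x_s$. One checks that $\TT$ is a valid elimination forest of $G$: an edge inside $X$ joins two vertices of the top chain, which are comparable; an edge between $X$ and $F$ has its $X$-endpoint as an ancestor of its $F$-endpoint since all of $X$ lies above all of $V(F)$ in $\TT$; and an edge inside $F$ is handled by $\TT_F$. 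The depth of $\TT$ is $s + \mathrm{depth}(\TT_F) \le s + \lceil \log_2(n+1) \rceil$.

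Finally I would run the algorithm of \cref{thm:fvs_algo} on $G$ with this elimination forest $\TT$ (handling each connected component separately if $G$ is disconnected, since a feedback vertex set of $G$ restricts to one on each component). Its running time is $\Oh^*(3^{s + \lceil \log_2(n+1) \rceil}) = \Oh^*(3^s \cdot 2^{\Oh(\log n)}) = \Oh^*(3^s)$, because the extra factor $2^{\Oh(\log n)}$ is polynomial in $n$ and thus absorbed into the $\Oh^*$-notation; the space usage remains polynomial, and the one-sided error probability of at most $1/2$ is inherited directly from \cref{thm:fvs_algo}.

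The only genuinely non-routine ingredient is the claim that a forest on $n$ vertices admits a polynomial-time-computable elimination forest of depth $\Oh(\log n)$ and that stacking $X$ on top increases the depth by exactly $|X|$; both are classical facts about treedepth, so I expect no real obstacle — the corollary is essentially the observation that a small feedback vertex set yields treedepth only $\Oh(\log n)$ above $s$.
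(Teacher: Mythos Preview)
Your proposal is correct and follows essentially the same approach as the paper: arrange the given feedback vertex set as a path at the top of the elimination forest, attach below it a depth-$\Oh(\log n)$ elimination forest of the remaining forest $G-X$, and then invoke \cref{thm:fvs_algo}. The paper states this more tersely (citing a reference for computing an optimal elimination forest of a forest in polynomial time rather than spelling out the centroid recursion), but the argument is the same.
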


\begin{proof}
 Given a feedback vertex set $X'$ of size $s$, we can construct an elimination forest $\TT$ of depth $s + \Oh(\log n)$ by arranging $X'$ in a path and attaching an optimal elimination forest for $G - X'$ below this path. Since $G - X'$ is a forest, we know that $\td(G - X') \in \Oh(\log n)$ and we can compute an optimal elimination forest for $G - X'$ in polynomial time (see, e.g., \cite{NesetrilO12}), hence we obtain an elimination forest $\TT$ of the desired depth. Now, we apply \cref{thm:fvs_algo} with the elimination forest $\TT$ and obtain the desired running time as $\Oh^*(3^{s + \Oh(\log n)}) = \Oh^*(3^s \mathrm{poly}(n)) = \Oh^*(3^s)$. 
\end{proof}

\subsection{Connected Dominating Set}

\begin{problem}[doublelined]{Connected Dominating Set}
 Input: & An undirected graph $G = (V, E)$ and an integer $k$.
 \\
 Question: & Is there a set $X \subseteq V$, $|X| = k$, such that $G[X]$ is connected and $X$ is a dominating set of $G$, i.e., $N[X] = V$?
\end{problem}

Obtaining a fast polynomial-space algorithm for \textsc{Connected Dominating Set} is interesting, because already the algorithm running in time $\Oh^*(3^{\td(G)})$ and polynomial space for \textsc{Dominating Set} by Pilipczuk and Wrochna~\cite{PilipczukW18} is nontrivial. Their algorithm actually counts dominating sets  by recursively applying a small inclusion-exclusion formula. Combining inclusion-exclusion with branching has appeared in the literature before and is also called inclusion-exclusion-branching \cite{Bax93, NederlofR10, NederlofRD14}. Unlike for \textsc{Feedback Vertex Set} we do not need any further ideas to apply the Cut\&Count technique, and it turns out that applying the inclusion-exclusion approach simultaneously is not an issue.

As for \textsc{Connected Vertex Cover}, we have that $U = V$. Our solution candidates are given by $\candidates = \{ X \subseteq V \sep X \text{ is a dominating set of } G \text{ and } |X| = k \}$, and our solutions are given by $\solutions = \{ X \in \candidates \sep G[X] \text{ is connected} \}$. Again, we need to fix a vertex $v_1$ to the left side $X_L$ of the cut, so that we do not have an even number of consistent cuts for connected candidates. We will iterate over all possible choices of $v_1$. Hence, the formal definition of the candidate-cut-pairs for \textsc{Connected Dominating Set} is
\begin{equation*}
 \cutsols = \{ (X, (X_L, X_R)) \in \cuts(V) \sep X \in \candidates \text{ and } v_1 \in X_L\}.
\end{equation*}

The proof of correctness for these definitions is completely analogous to the proof of \cref{thm:cvc_correct}. 

\begin{lem}[\cite{CyganNPPRW11}]
 \label{thm:cds_correct}
 Let $\wfct \colon V \rightarrow [N]$ be a weight function, and $\cutsols$ and $\solutions$ as defined above.
 Then we have for every $\wtarget \in \NN$ that $|\solutions_\wtarget| \equiv |\cutsols_\wtarget|$.
\end{lem}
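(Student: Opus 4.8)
The plan is to mirror the proof of \cref{thm:cvc_correct} almost verbatim, since the only structural difference between \textsc{Connected Vertex Cover} and \textsc{Connected Dominating Set} in this part of the argument is the predicate defining $\candidates$ (``is a dominating set'' instead of ``is a vertex cover''), and this predicate plays no role whatsoever in the counting-of-cuts bookkeeping. Concretely, I would start from \cref{thm:cons_cut}: for any fixed $X$ with $v_1 \in X \subseteq V$, the number of consistently cut subgraphs $(X,(X_L,X_R))$ with $v_1 \in X_L$ equals $2^{\cc(G[X])-1}$.

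Next I would partition $\cutsols_\wtarget$ according to the underlying candidate $X$. Since every element of $\cutsols_\wtarget$ is of the form $(X,(X_L,X_R))$ with $X \in \candidates_\wtarget$, $v_1 \in X_L$, and $(X_L,X_R)$ a consistent cut of $G[X]$, grouping by $X$ and applying \cref{thm:cons_cut} to each group gives $|\cutsols_\wtarget| = \sum_{X \in \candidates_\wtarget} 2^{\cc(G[X])-1}$. Reducing modulo $2$, every term with $\cc(G[X]) \geq 2$ vanishes, so $|\cutsols_\wtarget| \equiv |\{X \in \candidates_\wtarget \sep \cc(G[X]) = 1\}|$. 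By the definition of $\solutions$ as the connected candidates, this last set is exactly $\solutions_\wtarget$, which yields $|\cutsols_\wtarget| \equiv |\solutions_\wtarget|$ and completes the proof.

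I do not anticipate any real obstacle here; the argument is a direct transcription of \cref{thm:cvc_correct}, and indeed the paper itself signals this by writing that ``the proof of correctness for these definitions is completely analogous to the proof of \cref{thm:cvc_correct}.'' The only thing worth a word of care is that the relaxation step (the ``Cut'' part) — namely that $\candidates$ genuinely contains $\solutions$, so that no connected dominating set of size $k$ is lost — is built into the definitions rather than proved here; it follows immediately since a connected candidate is by definition a solution and every solution is a candidate whose induced subgraph happens to be connected. Thus the lemma reduces entirely to the parity computation above.

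\begin{proof}
 By \cref{thm:cons_cut}, for every $X \in \candidates_\wtarget$ the number of consistently cut subgraphs $(X,(X_L,X_R))$ with $v_1 \in X_L$ equals $2^{\cc(G[X])-1}$. Partitioning $\cutsols_\wtarget$ according to the candidate $X$ therefore gives $|\cutsols_\wtarget| = \sum_{X \in \candidates_\wtarget} 2^{\cc(G[X])-1}$. Taking this equality modulo $2$, every summand with $\cc(G[X]) \geq 2$ is even and hence $|\cutsols_\wtarget| \equiv |\{X \in \candidates_\wtarget \sep \cc(G[X]) = 1\}| = |\solutions_\wtarget|$.
\end{proof}
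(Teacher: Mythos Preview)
Your proposal is correct and follows exactly the approach the paper takes: the paper's own proof simply reads ``Similar to the proof of \cref{thm:cvc_correct},'' and your argument is a faithful transcription of that proof, invoking \cref{thm:cons_cut} to write $|\cutsols_\wtarget| = \sum_{X \in \candidates_\wtarget} 2^{\cc(G[X])-1}$ and then reducing modulo~$2$.
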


\begin{proof}
 Similar to the proof of \cref{thm:cvc_correct}.
\end{proof}

We proceed with describing the procedure $\countc$ for the case of \textsc{Connected Dominating Set}.

\begin{lem}
 \label{thm:cds_count}
 Given a connected graph $G = (V,E)$, a vertex $v_1 \in V$, an integer $k$, a weight function $\wfct \colon V \rightarrow [2n]$ and an elimination forest $\TT$ of $G$ of depth $d$, we can determine $|\cutsols_\wtarget|$ modulo 2 for every $0 \leq \wtarget \leq 2n^2$ in time $\Oh^*(4^d)$ and polynomial space.
\end{lem}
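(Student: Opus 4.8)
The plan is to mirror the structure of the \textsc{Connected Vertex Cover} proof in \cref{thm:cvc_count}, but to replace the plain branching with inclusion-exclusion-branching in order to handle the domination constraint, and to pay for this with a larger set of states. The key idea is that a vertex $v$ with state $\zero$ (not in the dominating set $X$) must still be dominated by some vertex in $X$; by the properties of an elimination forest, any such dominator lies in $\broom[v]$, in fact in $\tail(v) \cup \tree(v)$. We therefore refine the states: for a vertex placed in $X$ we again use $\one_L, \one_R$ to record the side of the consistent cut; for a vertex \emph{not} in $X$ we use a state $\zero_1$ meaning ``this vertex is already dominated by an ancestor in $X$'' and a state $\zero_0$ meaning ``this vertex is not yet dominated by an ancestor, so it must be dominated from within its own subtree.'' This gives $|\states| = 4$ and hence the claimed $\Oh^*(4^d)$ running time via the general analysis in \cref{sec:gen_time_and_space}.

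First I would, as before, drop the cardinality constraint and define $\widehat{\candidates}(V')$ and $\widehat{\cutsols}(V')$ for induced subgraphs, and define $\partialex{v}{f}$ and $\partialin{v}{g}$ exactly as for \textsc{Connected Vertex Cover} but with the domination requirement baked into $\widehat{\cutsols}(\broom[v])$: an assignment records both which vertices of $\tail[v]$ lie in $X$ and, for the remaining vertices of $\tail[v]$, whether they have been dominated ``from above.'' The polynomials $\Pex{v}{f}, \Pin{v}{g} \in \FF_2[Z_W, Z_X]$ track weight and size just as before. The leaf recurrence (analogue of \cref{eq:cvc_pex_leaf}) checks that the chosen left/right split of $f^{-1}(\{\one_L,\one_R\})$ is a consistent cut of $G[f^{-1}(\{\one_L,\one_R\})]$, that $v_1$ (if present in $\tail[v]$) is on the left, and that every vertex of $\tail[v]$ with a ``not yet dominated'' state actually does have a neighbour in $f^{-1}(\{\one_L,\one_R\})$ among its ancestors --- at a leaf the subtree contributes nothing, so any still-undominated vertex kills the term. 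The branching recurrence $\Pex{v}{f} = \prod_{u \in \child(v)} \Pin{u}{f}$ is unchanged and its correctness again rests on there being no edges between distinct child-subtrees, so a vertex's domination status is determined independently by its ancestors and its own subtree; the one subtlety is that the bookkeeping of ``is $w \in \tail[v]$ dominated'' must be consistent across the children, which it is because $f$ is passed down unchanged.

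The main work, and the main obstacle, is the recurrence for $\Pin{v}{g}$, which must simultaneously branch on the state of $v$ and perform the inclusion-exclusion that accounts for $v$ being dominated (when $v \notin X$) without knowing in advance whether the dominator comes from $\tail(v)$ or from $\tree(v)$. For the cases $v \in X_L$ and $v \in X_R$ the update is as in \textsc{Connected Vertex Cover}: multiply the corresponding $\Pex{v}{g[v \mapsto \one_L]}$ or $\Pex{v}{g[v \mapsto \one_R]}$ by $Z_W^{\wfct(v)} Z_X$, and additionally fix the domination flag of every neighbour of $v$ in $\tail(v)$ to ``dominated.'' For $v \notin X$ one splits according to whether $v$ is dominated from above (determined by $g$ and the ancestor-neighbours of $v$) or not: if some ancestor of $v$ in $g^{-1}(\{\one_L,\one_R\})$ is a neighbour of $v$, then $v$ gets state $\zero_1$ and one simply recurses; otherwise one must count partial solutions in which $v$ is dominated from within $\tree[v]$, which is obtained by the inclusion-exclusion identity ``(dominated from the subtree) = (no constraint) $-$ (never dominated)'' $= \Pex{v}{g[v\mapsto \zero_{?}]}$ evaluated with the flag set so that at least one descendant-neighbour of $v$ lies in $X$; concretely this is the difference of the polynomial with $v$'s flag in the ``don't care'' mode and the polynomial with $v$'s flag forced to ``forbidden to be dominated by the subtree.'' Since we are over $\FF_2$ the subtraction is addition, which is exactly why the inclusion-exclusion and the Cut\&Count cancellation coexist harmlessly. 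I would then verify the bijections in each case, note (as in \textsc{Connected Vertex Cover}) that $v = v_1$ collapses the recurrence to the single left-side term, and finally invoke \cref{sec:gen_time_and_space}: all polynomials have polynomially many monomials with $\FF_2$ coefficients, every $\Pex{v}{f}$ and $\Pin{v}{g}$ is computed at most once over the $4^d$ assignments per vertex, and the recursion depth is $2d+1$, giving time $\Oh^*(4^d)$ and polynomial space.
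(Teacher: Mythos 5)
Your overall strategy is the right one and matches the paper's: four states, with the two non-solution states used for inclusion--exclusion-branching in the style of Pilipczuk and Wrochna, and a recurrence for $\Pin{v}{g}$ of the form (allowed) $-$ (forbidden) $+$ (left) $+$ (right), the last two terms multiplied by $Z_W^{\wfct(v)}Z_X$. However, the semantics you assign to the two $\zero$-states, and the leaf condition you derive from them, do not work as stated. A state meaning ``must be dominated from within its own subtree'' is a \emph{disjunctive} constraint over the children of each node on the way down: it cannot be passed unchanged into the product $\prod_{u\in\child(\cdot)}\Pin{u}{f}$, because the product would then either require every child's subtree to dominate the vertex or none of them to, and neither expresses ``at least one''. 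This is exactly the obstacle that forces the inclusion--exclusion, and it applies to every ancestor carried in $f$, not only to the vertex currently being branched on. The correct semantics (the paper's) are $\zero_A$ = ``no constraint'' and $\zero_F$ = ``must \emph{not} be dominated by anything in $X'$'': the latter is conjunctive over children, so it survives the product, and the disjunction ``dominated by something'' is recovered once, at $v$'s own step, as $\Pex{v}{g[v\mapsto\zero_A]}-\Pex{v}{g[v\mapsto\zero_F]}$.

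Two concrete steps in your write-up would fail. First, your leaf check --- ``every vertex with a not-yet-dominated state must have a neighbour in $f^{-1}(\{\one_L,\one_R\})$ among its ancestors, else the term dies'' --- is backwards: under the allowed/forbidden semantics the leaf must verify $N[f^{-1}(\{\one_L,\one_R\})]\cap f^{-1}(\zero_F)=\emptyset$, i.e., that no \emph{forbidden} vertex is dominated; killing the undominated terms would annihilate precisely the subtrahend that the inclusion--exclusion relies on. (No positive domination check is made at leaves at all; domination of each vertex is accounted for once, at that vertex's own level, and domination of $\tree(v)$ is enforced through the predicate defining $\partialex{v}{f}$.) Second, ``fixing the domination flag of every neighbour of $v$ in $\tail(v)$'' when $v$ receives state $\one_L$ or $\one_R$ is not an operation the framework supports --- the assignment on $\tail(v)$ was fixed by earlier branching --- and under the correct semantics it is also unnecessary, since an ancestor $w\in N(v)$ with state $\zero_F$ already zeroes the term via the leaf check. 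Your explicit case split on whether $v$ is dominated from above is likewise redundant (the $\zero_F$ term vanishes automatically in that case), though not incorrect. With these repairs your argument coincides with the paper's proof.
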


\begin{proof}
 Similar to \textsc{Connected Vertex Cover} we compute a multivariate polynomial in the formal variables $Z_W$ and $Z_X$, where the coefficient of $Z_W^\wtarget Z_X^i$ is the cardinality of $\cutsols_{\wtarget}^i = \{(X,C) \in \cutsols_\wtarget \sep |X| = i\}$ modulo 2. This time, we require four different states and they are denoted $\states = \{\zero_A, \zero_F, \one_L, \one_R\}$. The states $\zero_A$ and $\zero_F$ are used for vertices not in the dominating set and related to the inclusion-exclusion approach; vertices with the state $\zero_A$ are allowed to be dominated by the considered partial solutions, whereas vertices with the state $\zero_F$ are forbidden, i.e.\ not allowed, to be dominated by the considered partial solution. One can then obtain the number of solutions that dominate a given vertex $v$ by computing the number of partial solutions where $v$ is allowed and subtracting the number of partial solutions where $v$ is forbidden. The interpretation of the states $\one_L$ and $\one_R$ is to include the vertex into the dominating set and the subscript denotes the side of the consistent cut.  
 
 Although we want to compute polynomials in $\FF_2[Z_W, Z_X]$, working in $\FF_2$ would obfuscate some details of the inclusion-exclusion approach. Hence, we choose to present the algorithm so that it computes polynomials in $\ZZ[Z_W, Z_X]$. We can apply the ring homomorphism $\ZZ \rightarrow \FF_2$, $x \mapsto x \mod 2$, to also obtain correctness for the case of $\FF_2$.
 
 For every vertex $v$ and assignment $f \colon \tail[v] \rightarrow \{\zero_A, \zero_F, \one_L, \one_R\}$ we define the partial solutions at $v$, but excluding $v$, that respect $f$ by
 \begin{align*}
  \partialex{v}{f} = \{ (X,(X_L, X_R)) \in \cuts(\tree(v)) \sep\, & X' = X \cup f^{-1}(\{\one_L, \one_R\}), \\
								  & C' = (X_L \cup f^{-1}(\one_L), X_R \cup f^{-1}(\one_R)), \\
								  & (X', C') \in \cuts(\broom[v]), \\
								  & \tree(v) \subseteq N[X'], \,\, N[X'] \cap f^{-1}(\zero_F) = \emptyset, \\
								  & v_1 \in \broom[v] \rightarrow v_1 \in X_L \cup f^{-1}(\one_L) \}.
 \end{align*}
 As usual, our partial solutions are consistently cut subgraphs of $\tree(v)$ that are extended to consistently cut subgraphs $(X', C')$ of $\broom[v]$ by the assignment $f$. The penultimate line states that $X'$ should dominate $\tree(v)$ completely and not dominate any forbidden vertex, the latter part is crucial for the inclusion-exclusion-branching. The last line ensures that $v_1$ is on the left side of the consistent cut. Note that unlike for \textsc{Connected Vertex Cover}, the extensions of the partial solutions do not need to be dominating sets of $G[\broom[v]]$ due to the inclusion-exclusion approach.
 
 Similarly, for every vertex $v$ and assignment $g \colon \tail(v) \rightarrow \{\zero_A, \zero_F, \one_L, \one_R\}$ we define the partial solutions at $v$, possibly including $v$, that respect $g$ by
 \begin{align*}
  \partialin{v}{g} = \{ (X,(X_L, X_R)) \in \cuts(\tree[v]) \sep\, & X' = X \cup g^{-1}(\{\one_L, \one_R\}), \\
								  & C' = (X_L \cup g^{-1}(\one_L), X_R \cup g^{-1}(\one_R)), \\
								  & (X', C') \in \cuts(\broom[v]), \\
								  & \tree[v] \subseteq N[X'],\,\, N[X'] \cap g^{-1}(\zero_F) = \emptyset, \\
								  & v_1 \in \broom[v] \rightarrow v_1 \in X_L \cup g^{-1}(\one_L) \}.
 \end{align*}
 
 For every vertex $v$ and assignment $f \colon \tail[v] \rightarrow \{\zero_A, \zero_F, \one_L, \one_R\}$ we will compute a polynomial $\Pex{v}{f} \in \ZZ[Z_W, Z_X]$ where the coefficient of $Z_W^\wtarget Z_X^i$ in $\Pex{v}{f}$ is given by
 \begin{equation*}
  |\{(X, C) \in \partialex{v}{f} \sep \wfct(X) = \wtarget \text{ and } |X| = i\}|.
 \end{equation*}
 Similarly, for every vertex $v$ and assignment $g \colon \tail(v) \rightarrow \{\zero_A, \zero_F, \one_L, \one_R\}$ we will compute a polynomial $\Pin{v}{g} \in \ZZ[Z_W, Z_X]$ where the coefficient of $Z_W^\wtarget Z_X^i$ in $\Pin{v}{g}$ is given by
 \begin{equation*}
  |\{(X, C) \in \partialin{v}{g} \sep \wfct(X) = \wtarget \text{ and } |X| = i\}|.
 \end{equation*}
 The exponents of the monomials $Z_W^\wtarget Z_X^i$ range between $0 \leq \wtarget \leq 2n^2$ and $0 \leq i \leq n$. By extracting the coefficients of the monomials $Z_W^\wtarget Z_X^k$ in $\Pin{r}{\emptyset}$ for $0 \leq \wtarget \leq 2n^2$, where $r$ is the root of $\TT$, we obtain the desired numbers.
 
 We will now state the recurrences used to compute these polynomials. If $v$ is a leaf node, then we compute $\Pex{v}{f}$ by
 \begin{equation}
  \label{eq:cds_pex_leaf}
 \begin{array}{lcl}
  \Pex{v}{f} & =     & [(f^{-1}(\{\one_L, \one_R\}), (f^{-1}(\one_L), f^{-1}(\one_R)) \in \cuts(\tail[v])] \\
	     & \cdot & [N[f^{-1}(\{\one_L, \one_R\})] \cap f^{-1}(\zero_F) = \emptyset] \\
	     & \cdot & [v_1 \in \tail[v] \rightarrow f(v_1) = \one_L],
 \end{array} 
 \end{equation}
 where the first line checks that $f$ defines a consistent cut, the second line checks that no forbidden vertex is dominated, and the last line ensures that $v_1$ is on the left side of the consistent cut.
 
 Otherwise, we compute $\Pex{v}{f}$ by
 \begin{equation}
  \label{eq:cds_pex_branch}
  \Pex{v}{f} = \prod_{u \in \child(v)} \Pin{u}{f}.
 \end{equation}
 To compute $\Pin{v}{g}$ we use the recurrence
 \begin{equation}
 \begin{aligned}
  \label{eq:cds_pin}
  \Pin{v}{g}  & = (\Pex{v}{g[v \mapsto \zero_A]} - \Pex{v}{g[v \mapsto \zero_F]}) \\
	         & \, + (\Pex{v}{g[v \mapsto \one_L]}  + \Pex{v}{g[v \mapsto \one_R]}) Z_W^{\wfct(v)} Z_X.
 \end{aligned} 
 \end{equation}
 In a consistently cut dominating set the vertex $v$ has three possible states: $v$ can be on the left or right side of the cut of the dominating set, or $v$ is not in the dominating set and hence dominated by some other vertex. The first two summands in equation \cref{eq:cds_pin} correspond to the first two states and to count the solutions that dominate $v$ we use an inclusion-exclusion formula, i.e., we count the partial solutions that are allowed to dominate $v$ and subtract the partial solutions that do not dominate $v$.
 
 We proceed by proving the correctness of equations \cref{eq:cds_pex_leaf} to \cref{eq:cds_pin}. As for \textsc{Connected Vertex Cover}, the polynomials $\Pex{v}{f}$ and $\Pin{v}{g}$ vanish when $v_1$ is assigned a state different from $\one_L$. Equation \cref{eq:cds_pex_leaf} is correct, because we have $\tree(v) = \emptyset$ when $v$ is a leaf and hence the predicate in the definition of $\partialex{v}{f}$ degenerates to the predicate that is checked in equation \cref{eq:cds_pex_leaf}. We omit the proof of correctness for equation \cref{eq:cds_pex_branch} as it is very similar to the proof of correctness for equation \cref{eq:cvc_pex_branch} in the case of \textsc{Connected Vertex Cover}, where the properties of an elimination forest are required to show that the union of consistent cuts remains a consistent cut and that vertices from different subtrees cannot dominate each other.
 
 To prove the correctness of equation \cref{eq:cds_pin}, consider any $(X, (X_L, X_R)) \in \partialin{v}{g}$. We distinguish three cases depending on the state of $v$.
 \begin{enumerate}
 \item If $v \notin X$, then $v$ must be dominated by a vertex in $X' = X \cup g^{-1}(\{\one_L, \one_R\})$. First of all, notice that $\partialex{v}{g[v \mapsto \zero_F]} \subseteq \partialex{v}{g[v \mapsto \zero_A]}$. Hence, it suffices to argue that $(X, (X_L, X_R)) \in \partialex{v}{g[v \mapsto \zero_A]} \setminus \partialex{v}{g[v \mapsto \zero_F]}$. Certainly, $(X, (X_L, X_R)) \in \partialex{v}{g[v \mapsto \zero_A]}$ as the constraints of $\partialin{v}{g}$ are more strict than the constraints of $\partialex{v}{g[v \mapsto \zero_A]}$. Due to $v \in \tree[v] \subseteq N[X']$ we have that $(X, (X_L, X_R)) \notin \partialex{v}{f}$ where $f = g[v \mapsto \zero_F]$, because the constraint $N[X'] \cap f^{-1}(\zero_F) = \emptyset$ is violated by $v \in N[X'] \cap f^{-1}(\zero_F)$. Vice versa, consider any $(X, (X_L, X_R)) \in \partialex{v}{g[v \mapsto \zero_A]} \setminus \partialex{v}{g[v \mapsto \zero_F]}$, this also implies that $v$ must be dominated by a vertex in $X'$ and hence we have $(X, (X_L, X_R)) \in \partialin{v}{g}$. Since $X$ remains unchanged, there is no need to update the formal variables $Z_W$ and $Z_X$.
  \item If $v \in X_L \subseteq X$, then $(X \setminus \{v\}, (X_L \setminus \{v\}, X_R)) \in \partialex{v}{f}$ where $f = g[v \mapsto \one_L]$, because the definition of $X'$ and $C'$ in $\partialin{v}{g}$ and $\partialex{v}{f}$ remain unchanged. Vice versa, any partial solution in $\partialex{v}{f}$ can be extended to such a solution of $\partialin{v}{g}$. The only constraint where a set changes is the change from $\tree(v) \subseteq N[X']$ to $\tree[v] \subseteq N[X']$, but since $v \in X'$ the former implies the latter. Since $v$ was added to the partial solution, the weight is increased by $\wfct(v)$ and the size is increased by $1$; we keep track of this change by multiplying with $Z_W^{\wfct(v)} Z_X$.  
  \item If $v \in X_R \subseteq X$, then the proof is analogous to case 2.
 \end{enumerate}

 The running time and space bound follows from the general discussion in \cref{sec:gen_time_and_space}.
\end{proof}

\begin{thm}
 There is a Monte-Carlo algorithm that given an elimination forest of depth $d$ solves \textsc{Connected Dominating Set} in time $\Oh^*(4^d)$ and polynomial space. The algorithm cannot give false positives and may give false negatives with probability at most 1/2.
\end{thm}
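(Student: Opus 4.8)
The plan is to follow the same template as the proofs of the analogous theorems for \textsc{Connected Vertex Cover} and \textsc{Feedback Vertex Set}: instantiate the generic Cut\&Count framework of \cref{thm:general_setup} with the sets $\solutions$, $\candidates$, and $\cutsols$ fixed above, and with the counting sub-procedure supplied by \cref{thm:cds_count}, and then invoke \cref{algo:cutncount}.

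The first point to settle is the choice of the anchor vertex $v_1$. Unlike a vertex cover, a connected dominating set need not contain any prescribed vertex, so we cannot fix $v_1$ once and for all; instead I would loop over all $v_1 \in V$ and, for each, run \cref{algo:cutncount} with $U = V$ and with $\countc$ the procedure from \cref{thm:cds_count}. As remarked after \cref{thm:cvc_count}, this procedure should compute the polynomial $\Pin{r}{\emptyset}$ only once and then read off the coefficients of $Z_W^\wtarget Z_X^k$ for all target weights, rather than recompute it for each $\wtarget$. The algorithm outputs $\mathbf{true}$ if and only if some iteration does.

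Next I would check the two hypotheses of \cref{thm:general_setup} for each fixed $v_1$. Condition~1, that $|\cutsols_\wtarget| \equiv |\solutions_\wtarget|$ for every weight function $\wfct \colon V \to [2|V|]$ and every target weight $\wtarget$, is precisely \cref{thm:cds_correct}, whose proof in turn relies on \cref{thm:cons_cut}: a connected candidate is consistent with exactly one cut that places $v_1$ on the left side, while a disconnected candidate is consistent with an even number of them and therefore cancels modulo $2$. Condition~2, the existence of a procedure $\countc$ with $\countc(\wfct, \wtarget, \TT) \equiv |\cutsols_\wtarget|$ that runs in time $\Oh^*(4^d)$ and polynomial space, is exactly \cref{thm:cds_count}. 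With both conditions in hand, \cref{thm:general_setup} guarantees that the iteration for a given $v_1$ returns $\mathbf{false}$ whenever its $\solutions$ is empty and returns $\mathbf{true}$ with probability at least $1/2$ otherwise.

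It remains to assemble correctness and the resource bounds. If $G$ has no connected dominating set of size $k$, then for every choice of $v_1$ the corresponding set $\solutions$ is empty, so every iteration returns $\mathbf{false}$ and there are no false positives. If $G$ does have such a set $X$, then picking any $v_1 \in X$ makes the corresponding $\solutions$ nonempty, so that iteration, and hence the whole algorithm, returns $\mathbf{true}$ with probability at least $1/2$, which gives the claimed one-sided error bound. For the complexity, we run $|V|$ iterations, each costing $\Oh^*(4^d)$ by \cref{thm:cds_count}, so the total running time is still $\Oh^*(4^d)$, and the space stays polynomial because the iterations are executed sequentially and each needs only polynomial space. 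I do not expect a genuine obstacle at this level: the whole difficulty, in particular making the inclusion-exclusion over the $\zero_A$/$\zero_F$ states coexist with the Cut\&Count branching, has already been discharged inside the proof of \cref{thm:cds_count}; the only mild subtlety here is that one must iterate over all candidate anchors $v_1$ rather than use the endpoints of a fixed edge as for \textsc{Connected Vertex Cover}.
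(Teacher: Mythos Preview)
Your proposal is correct and follows essentially the same approach as the paper: iterate over all choices of the anchor $v_1 \in V$, run \cref{algo:cutncount} with $U = V$ and the $\countc$ procedure from \cref{thm:cds_count}, and derive correctness from \cref{thm:general_setup} together with \cref{thm:cds_correct}, while the time and space bounds come from \cref{thm:cds_count}. Your write-up is more detailed than the paper's (you spell out why one must loop over all $v_1$ rather than the two endpoints of a fixed edge, and you make the no-false-positives argument explicit), but the structure and the cited lemmas are the same.
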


\begin{proof}  
 We iterate over all vertices $v \in V$ and set $v_1 := v$. We then run \cref{algo:cutncount} with $U = V$ and the procedure $\countc$ as given by \cref{thm:cds_count}. The correctness follows from \cref{thm:general_setup} and \cref{thm:cds_correct}. The running time and space bound follow from \cref{thm:cds_count}. 
\end{proof}

\subsection{Steiner Tree}

\begin{problem}[doublelined]{Steiner Tree}
 Input: & An undirected graph $G = (V, E)$, a set of terminals $T \subset V$, and an integer $k$. \\
 Question: & Is there a set $X \subseteq V$, $|X| = k$, such that $G[X]$ is connected and $T \subseteq V$?
\end{problem}

It is convenient for us to choose vertices instead of edges in \textsc{Steiner Tree}, to obtain a Steiner tree with $k$ edges one can take a solution $X$ to \textsc{Steiner Tree} of size $k + 1$ and then compute a spanning tree of $G[X]$ in polynomial time. \textsc{Steiner Tree} is very similar to \textsc{Connected Vertex Cover}, indeed only the equation for the base case is different. We choose an arbitrary terminal $t_1 \in T$ and fix it to the left side of the consistent cuts, hence we obtain the following definitions:
\begin{align*}
 \candidates & = \{ X \subseteq V \sep T \subseteq V, |T| = k \}, \\
 \solutions & = \{X \in \candidates \sep G[X] \text{ is connected}\}, \\
 \cutsols & = \{(X, (X_L, X_R)) \in \cuts(V) \sep X \in \candidates \text{ and } t_1 \in X_L \}.
\end{align*}

With $v_1 = t_1$, the correctness of these definitions, in regards to the Cut\&Count technique, is proved exactly as for \textsc{Connected Vertex Cover}.  

\begin{lem}[\cite{CyganNPPRW11}]
 \label{thm:st_correct}
 Let $\wfct \colon V \rightarrow [N]$ be a weight function, and $\cutsols$ and $\solutions$ as defined above. Then we have for every $\wtarget \in \NN$ that $|\solutions_\wtarget| \equiv |\cutsols_\wtarget|$.
\end{lem}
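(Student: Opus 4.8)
The plan is to imitate the proof of \cref{thm:cvc_correct} almost verbatim, since nothing in that argument used the vertex-cover property beyond the fact that every candidate contains the distinguished vertex. Here we have set $v_1 = t_1 \in T$, and by the definition of $\candidates$ every candidate $X$ contains all terminals; in particular $t_1 \in X$ for all $X \in \candidates$. Hence \cref{thm:cons_cut} is applicable to $G[X]$ with the distinguished vertex $v_1 = t_1$ for every $X \in \candidates$, and this is really the only point that needs to be checked before the CVC argument goes through unchanged.

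Concretely, I would first observe that a consistent cut of $G[X]$ does not affect $\wfct(X)$, so $\cutsols_\wtarget$ decomposes as the disjoint union over $X \in \candidates_\wtarget$ of the consistently cut subgraphs $(X, (X_L, X_R))$ of $G[X]$ with $t_1 \in X_L$. Applying \cref{thm:cons_cut} to each such $X$ gives that the number of these cuts equals $2^{\cc(G[X]) - 1}$, so that
\[
 |\cutsols_\wtarget| = \sum_{X \in \candidates_\wtarget} 2^{\cc(G[X]) - 1}.
\]
Reducing modulo $2$, the summand $2^{\cc(G[X]) - 1}$ is odd exactly when $\cc(G[X]) = 1$, i.e.\ exactly when $G[X]$ is connected, which by definition of $\solutions$ means $X \in \solutions_\wtarget$. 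Therefore $|\cutsols_\wtarget| \equiv |\{X \in \candidates_\wtarget \sep \cc(G[X]) = 1\}| = |\solutions_\wtarget|$, as claimed.

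I do not expect any real obstacle: the only structural difference from \textsc{Connected Vertex Cover} is the shape of $\candidates$ (terminal containment instead of the vertex-cover condition), and this is irrelevant to the counting identity since \cref{thm:cons_cut} speaks only about the induced subgraph $G[X]$ and one distinguished vertex. The single thing worth stating explicitly is that $t_1$ belongs to every candidate, which holds because all candidates contain $T$; this is what licenses the use of \cref{thm:cons_cut} and hence the cancellation of disconnected candidates modulo $2$.
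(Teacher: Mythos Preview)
Your proposal is correct and matches the paper's approach exactly: the paper's proof simply reads ``Similar to the proof of \cref{thm:cvc_correct},'' and you have spelled out precisely that argument, correctly identifying that the only point to verify is $t_1 \in X$ for every $X \in \candidates$ (which follows from $T \subseteq X$) so that \cref{thm:cons_cut} applies.
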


\begin{proof}
 Similar to the proof of \cref{thm:cvc_correct}.
\end{proof}

We present the counting procedure for \textsc{Steiner Tree} now.

\begin{lem}
 \label{thm:st_count}
 Given a connected graph $G = (V, E)$, terminals $T \subseteq V$, an integer $k$, a weight function $\wfct \colon V \rightarrow [2n]$, and an elimination forest $\TT$ of depth $d$ for $G$, we can determine $|\cutsols_\wtarget|$ modulo 2 for all $0 \leq \wtarget \leq 2n^2$ in time $\Oh^*(3^d)$ and polynomial space.
\end{lem}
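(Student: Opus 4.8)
The plan is to follow the proof of \cref{thm:cvc_count} almost verbatim, since the only essential difference between \textsc{Steiner Tree} and \textsc{Connected Vertex Cover} is which constraint on the candidate $X$ must be enforced: here we need $T \subseteq X$ rather than "$X$ is a vertex cover of $G$". We reuse the three states $\states = \{\zero, \one_L, \one_R\}$ — $\zero$ for vertices outside $X$, $\one_L$ and $\one_R$ for vertices inside $X$ with the subscript giving the side of the consistent cut — and the same two formal variables $Z_W$ and $Z_X$ to track the weight and the size of partial solutions, with $v_1 := t_1$.

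First I would drop the cardinality constraint and, for each $V' \subseteq V$, set $\widehat{\candidates}(V') = \{X \subseteq V' \sep T \cap V' \subseteq X\}$ and $\widehat{\cutsols}(V') = \{(X, (X_L, X_R)) \in \cuts(V') \sep X \in \widehat{\candidates}(V') \text{ and } (t_1 \in V' \rightarrow t_1 \in X_L)\}$, so that $\cutsols_\wtarget$ is recovered as the restriction of $\widehat{\cutsols}(V)$ to pairs with $\wfct(X) = \wtarget$ and $|X| = k$. The sets of partial solutions $\partialex{v}{f}$ (for $f \colon \tail[v] \rightarrow \states$) and $\partialin{v}{g}$ (for $g \colon \tail(v) \rightarrow \states$) are then defined exactly as for \textsc{Connected Vertex Cover}, but referring to this new $\widehat{\cutsols}(\broom[v])$, and likewise the polynomials $\Pex{v}{f}, \Pin{v}{g} \in \FF_2[Z_W, Z_X]$ whose coefficient of $Z_W^\wtarget Z_X^i$ counts, modulo $2$, the partial solutions $(X, C)$ with $\wfct(X) = \wtarget$ and $|X| = i$. \cref{algo:cvc_countc}, with the base case adapted as below, computes $\Pin{r}{\emptyset}$ for the root $r$ and reads off the coefficients of $Z_W^\wtarget Z_X^k$.

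The recurrences \cref{eq:cvc_pex_branch} (the product over children of $v$) and \cref{eq:cvc_pin} (the case distinction on the state of $v$, multiplying by $Z_W^{\wfct(v)} Z_X$ whenever $v$ is added to $X$) are reused without change, and their correctness proofs carry over verbatim: the branch recurrence only uses that an elimination forest has no edges between distinct subtrees, so that a union of consistent cuts stays consistent, and the recurrence for $\Pin{v}{g}$ only manipulates $X$, $X_L$, $X_R$, and the membership of $v$ — neither argument uses anything about the vertex-cover property. The sole change is the leaf case: when $\tree(v) = \emptyset$ the only consistently cut subgraph of $G[\tree(v)]$ is $(\emptyset,(\emptyset,\emptyset))$, so $\Pex{v}{f}$ becomes the product of the indicators that this is a valid partial solution, namely
\begin{equation*}
 [(f^{-1}(\one_L), f^{-1}(\one_R)) \text{ is a consistent cut of } G[f^{-1}(\{\one_L, \one_R\})]] \cdot [T \cap \tail[v] \subseteq f^{-1}(\{\one_L, \one_R\})] \cdot [t_1 \in \tail[v] \rightarrow f(t_1) = \one_L].
\end{equation*}
The one point that needs genuine verification is that enforcing $T \cap \tail[v] \subseteq f^{-1}(\{\one_L,\one_R\})$ only at leaves already guarantees $T \cap \broom[v] \subseteq X'$ at every node, exactly as the leaf vertex-cover check does for \textsc{Connected Vertex Cover}: every $t \in T \cap \broom[v]$ is an ancestor of some leaf $\ell \in \tree[v]$ (or lies in $\tail(v) \subseteq \tail[\ell]$ for any such leaf, which exists), and for that $\ell$ we have $\tail[\ell] \subseteq \broom[v]$, so the constraint on $t$ is already checked at $\ell$; the branch recurrence then assembles these leaf checks along $\tree[v]$. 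Since the state of a terminal is fixed by $f$ as soon as the terminal becomes an ancestor, nothing further is required, and I expect this routine bookkeeping — not any new idea — to be the only mildly delicate step.

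Finally, the time and space analysis is identical to \cref{sec:gen_time_and_space}: each equation is evaluated in polynomial time, every polynomial $\Pex{v}{f}$ and $\Pin{v}{g}$ is computed at most once, there are at most $3^d$ assignments $f$ (resp.\ $g$) per vertex, and the recursion has depth $2d+1$ while storing only $\Oh(1)$ polynomials of polynomial size per call, which yields time $\Oh^*(3^d)$ and polynomial space.
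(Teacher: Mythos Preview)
Your proposal is correct and follows essentially the same approach as the paper: the same three states, the same two formal variables, identical branch and $\Pin{v}{g}$ recurrences, and the leaf predicate you wrote is exactly the paper's \cref{eq:st_pex_leaf}. Your explicit check that enforcing the terminal constraint only at leaves propagates to every $\broom[v]$ is a small elaboration the paper leaves implicit (it only remarks that $\Pex{v}{f}=0$ whenever some terminal has state $\zero$), but otherwise the arguments coincide.
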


\begin{proof}
 As usual, we drop the cardinality constraint and define the candidates and candidate-cut-pairs for induced subgraphs $G[V']$, where $V' \subseteq V$, by 
 \begin{align*}
  \widehat{\candidates}(V') & = \{ X \subseteq V' \sep T \cap V' \subseteq X\} \\
  \widehat{\cutsols}(V') & = \{ (X, (X_L, X_R)) \in \cuts(V') \sep X \in \candidates(V') \text{ and } t_1 \in V' \Rightarrow t_1 \in X_L \}.
 \end{align*}
 The states are given by $\states = \{\zero, \one_L, \one_R\}$ with the usual interpretation. For every vertex $v$ and assignment $f \colon \tail[v] \rightarrow \{\zero,\one_L, \one_R\}$ we define the partial solutions at $v$, but excluding $v$, that respect $f$ by
 \begin{align*}
  \partialex{v}{f} = \{ (X, (X_L, X_R)) \in \widehat{\cutsols}(\tree(v)) \sep\, & X' = X \cup f^{-1}(\{\one_L, \one_R\}), \\
                                                                & C' = (X_L \cup f^{-1}(\one_L), X_R \cup f^{-1}(\one_R)), \\
                                                                & (X', C') \in \widehat{\cutsols}(\broom[v]) \}.
 \end{align*}
 So, the partial solutions in $\partialex{v}{f}$ are given by consistently cut subgraphs of $G[\tree(v)]$ that are extended to candidate-cut-pairs for $G[\broom[v]]$ by $f$, i.e.\ consistently cut subgraphs of $G[\broom[v]]$ that contain all terminals in $\broom[v]$. The coefficient of $Z_W^\wtarget Z_X^i$ in the corresponding polynomial $\Pex{v}{f} \in \FF_2[Z_W, Z_X]$ is given by
 \begin{align*}
  |\{(X, C) \in \partialex{v}{f} \sep \wfct(X) = \wtarget \text{ and } |X| = i\}| \mod 2.
 \end{align*}

 Similarly, for every vertex $v$ and assignment $g \colon \tail(v) \rightarrow \{\zero, \one_L, \one_R\}$ we define the partial solutions at $v$, possibly including $v$, that respect $g$ by
 \begin{align*}
  \partialin{v}{g} = \{ (X, (X_L, X_R)) \in \widehat{\cutsols}(\tree[v]) \sep\, & X' = X \cup g^{-1}(\{\one_L, \one_R\}), \\
                                                                & C' = (X_L \cup g^{-1}(\one_L), X_R \cup g^{-1}(\one_R)), \\
                                                                & (X', C') \in \widehat{\cutsols}(\broom[v]) \}. 
 \end{align*}
 The coefficient of $Z_W^\wtarget Z_X^i$ in the corresponding polynomial $\Pin{v}{g} \in \FF_2[Z_W, Z_X]$ is given by
 \begin{align*}
  |\{(X, C) \in \partialin{v}{g} \sep \wfct(X) = \wtarget \text{ and } |X| = i\}| \mod 2.
 \end{align*}
 
 The coefficients of the monomials $Z_W^\wtarget Z_X^k$ for $0 \leq \wtarget \leq 2n^2$ in the polynomial $\Pin{r}{\emptyset}$ at the root node $r$ will then yield the desired numbers. We compute this polynomial by applying the equations \cref{eq:st_pex_leaf} to \cref{eq:st_pin}.
 In the base case, when $v$ is a leaf node in $\TT$, we can compute $\Pex{v}{f}$ by
 \begin{equation}
  \label{eq:st_pex_leaf}
  \begin{array}{lcl}
   \Pex{v}{f} & = & [(f^{-1}(\one_L), f^{-1}(\one_R)) \text{ is a consistent cut of } G[f^{-1}(\{\one_L, \one_R\})]] \\
              & \cdot & [T \cap \tail[v] \subseteq f^{-1}(\{\one_L, \one_R\})] [t_1 \in \tail[v] \rightarrow f(t_1) = \one_L]
  \end{array}
 \end{equation}
 which verifies that $f$ induces a valid partial solution. If $v$ is not a leaf node, then we compute $\Pex{v}{f}$ by the usual recurrence
 \begin{equation}
  \label{eq:st_pex_branch}
  \Pex{v}{f} = \prod_{u \in \child(v)} \Pin{u}{f}.
 \end{equation}
 To compute $\Pin{v}{g}$ we employ the recurrence
 \begin{equation}
  \label{eq:st_pin}
  \Pin{v}{g} = \Pex{v}{g[v \mapsto \zero]} + \Pex{v}{g[v \mapsto \one_L]}\, Z_W^{\wfct(v)} Z_X + \Pex{v}{g[v \mapsto \one_R]}\, Z_W^{\wfct(v)} Z_X.
 \end{equation}
 
 We proceed by proving the correctness of the equations \cref{eq:st_pex_leaf} to \cref{eq:st_pin}. Analogous to \textsc{Connected Vertex Cover}, whenever $f(t_1) \neq \one_L$ or $g(t_1) \neq \one_L$, we have for appropriate vertices $v$ that $\Pex{v}{f} = 0$ or $\Pin{v}{g} = 0$ respectively. Similarly, the corresponding polynomial will be $0$ whenever $f(t) = \zero$ or $g(t) = \zero$ for some terminal $t \in T$ due to Equation \cref{eq:st_pex_leaf}. Equation \cref{eq:st_pex_leaf} is correct, because when $v$ is a leaf in $\TT$ we have that $\tree(v) = \emptyset$ and hence $\partialex{v}{f}$ contains at most $(\emptyset, (\emptyset, \emptyset))$ whose membership is decided by the predicate in equation \cref{eq:st_pex_leaf}.
 
 The proof of correctness for equation \cref{eq:st_pex_branch} is straightforward and hence omitted. For the correctness of equation \cref{eq:st_pin} we consider a partial solution $(X, (X_L, X_R)) \in \partialin{v}{g}$ and distinguish three cases.
 \begin{enumerate}
  \item If $v \notin X$, then we have that $(X, (X_L, X_R)) \in \partialex{v}{f}$, where $f = g[v \mapsto \zero]$. Vice versa, any $(X, (X_L, X_R)) \in \partialex{v}{f}$ must also be in $\partialin{v}{g}$. Since $X$ does not change, we do not need to update the size or weight, hence we do not multiply by further formal variables in this case.
  
  \item If $v \in X_L \subseteq X$, then we claim that $(X \setminus \{v\}, (X_L \setminus \{v\}, X_R)) \in \partialex{v}{f}$, where $f = g[v \mapsto \one_L]$, because the definition of $X'$ and $C'$ in $\Pex{v}{f}$ remains unchanged. Vice versa, any partial solution in $\partialex{v}{f}$ can be extended to such a partial solution in $\partialin{v}{g}$ by adding $v$ to $X_L$. Since $|X| - |X \setminus \{v\}| = 1$ and $\wfct(X) - \wfct(X \setminus \{v\}) = \wfct(v)$, multiplication by $Z_W^{\wfct(v)} Z_X$ updates the size and weight correctly.
  
  \item If $v \in X_R \subseteq X$, the proof is analogous to case 2.
 \end{enumerate}
 Note that when $v = v_1$ equation \cref{eq:cvc_pin} simplifies and only the second case occurs. The running time and space bound follows from the general discussion in \cref{sec:gen_time_and_space}.
\end{proof}

\begin{thm}
 There exists a Monte-Carlo algorithm that given an elimination forest of depth $d$ solves \textsc{Steiner Tree} in time $\Oh^*(3^d)$ and polynomial space. The algorithm cannot give false positives and may give false negatives with probability at most 1/2.
\end{thm}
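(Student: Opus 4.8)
The plan is to instantiate the generic Cut\&Count framework of \cref{thm:general_setup} with the sets $\solutions$, $\candidates$, and $\cutsols$ fixed above for \textsc{Steiner Tree} and with \cref{thm:st_count} as the counting subroutine. The first step is to choose the vertex forced to the left side of every consistent cut: since every connected solution $X \in \solutions$ contains all of $T$ by definition, it suffices to pick one arbitrary terminal $t_1 \in T$ and set $v_1 := t_1$. Unlike for \textsc{Connected Vertex Cover}, there is no need to branch over several candidates for $v_1$, because $t_1$ lies in \emph{every} element of $\solutions$; the degenerate case $T = \emptyset$ can be handled directly.

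Next I would check the two hypotheses of \cref{thm:general_setup} with $U = V$. Condition~1, that $|\cutsols_\wtarget| \equiv |\solutions_\wtarget|$ for every weight function $\wfct \colon V \to [2n]$ and every target $\wtarget$, is exactly \cref{thm:st_correct}; its proof copies that of \cref{thm:cvc_correct}, invoking \cref{thm:cons_cut} with $v_1 = t_1$ so that each disconnected candidate is consistent with an even number of cuts and cancels modulo~2. Condition~2, a procedure $\countc(\wfct, \wtarget, \TT)$ returning $|\cutsols_\wtarget| \bmod 2$, is provided by \cref{thm:st_count}: compute $\Pin{r}{\emptyset}$ with the branching algorithm of that lemma and read off the coefficient of $Z_W^\wtarget Z_X^k$.

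With both conditions verified, I would run \cref{algo:cutncount} with this $U$ and $\countc$. By \cref{thm:general_setup} it returns \textbf{false} whenever $\solutions = \emptyset$ and \textbf{true} with probability at least $1/2$ otherwise, matching the claimed one-sided error. For the resource bounds, \cref{algo:cutncount} calls $\countc$ for each $\wtarget \in [2n^2]$, each call costing $\Oh^*(3^d)$ time and polynomial space by \cref{thm:st_count} (and, exactly as remarked for \textsc{Connected Vertex Cover}, one may compute $\Pin{r}{\emptyset}$ once and then just look up the required coefficient for each $\wtarget$); the reduction from ``size at most $k$'' to ``size exactly $k$'' adds only polynomial overhead. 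Hence the algorithm runs in time $\Oh^*(3^d)$ and polynomial space.

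I do not expect a genuine obstacle: the only point needing explicit justification is that fixing $v_1 = t_1$ without branching is sound, which is immediate from $T \subseteq X$ for all $X \in \solutions$. Everything else is a direct application of \cref{thm:general_setup}, \cref{thm:st_correct}, and \cref{thm:st_count}, in complete analogy with the \textsc{Connected Vertex Cover} theorem.
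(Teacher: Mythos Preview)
Your proposal is correct and follows essentially the same approach as the paper: fix an arbitrary terminal $t_1 \in T$ as $v_1$, then invoke \cref{thm:general_setup} with $U = V$, Condition~1 supplied by \cref{thm:st_correct} and Condition~2 by \cref{thm:st_count}. Your additional remarks (why no branching over $v_1$ is needed, the degenerate case $T=\emptyset$, and the size-exactly-$k$ reduction) are sound elaborations that the paper leaves implicit.
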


\begin{proof}
 We choose an arbitrary terminal $t_1 \in T$ and run \cref{algo:cutncount} with $U = V$ and the procedure $\countc$ as given by \cref{thm:st_count}. The correctness follows from \cref{thm:general_setup} and \cref{thm:st_correct}. The running time and space bound follow from \cref{thm:st_count}.
\end{proof}

Whereas for all other considered problems vertex costs are the more natural variant, for \textsc{Steiner Tree} we consider edge costs to be more natural. If we allow only polynomially-sized integral edge costs, we can reduce to the unweighted variant by replacing each edge $e$ by a path of length $c(e)$, where $c(e)$ is the cost of edge $e$.  This can only increase the treedepth by an additive term of $\Oh(\log(\max_{e \in E} c(e)))$, because, like in the proof of \cref{thm:fvs_fvs}, we can copy the previous elimination forest and attach at the leaves optimal elimination forests for the paths and we know that a path of length $\ell$ has treedepth at most $\Oh(\log \ell)$.

One can also consider \textsc{Connected Vertex Cover} as a special case of \textsc{Steiner Tree} by observing that the reduction presented by Cygan et al.~\cite{CyganNPPRW11} only increases the treedepth by 1.

\begin{lem}
 There is a polynomial-time algorithm that given a \textsc{Connected Vertex Cover} instance $(G,k)$ with an elimination forest $\TT$ of depth $d$ for $G$ constructs an equivalent \textsc{Steiner Tree} instance $(G', T, k')$ with an elimination forest $\TT'$ of depth $d + 1$ for $G'$.
\end{lem}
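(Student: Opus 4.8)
The plan is to use the classical polynomial-time reduction from \textsc{Connected Vertex Cover} to \textsc{Steiner Tree} due to Cygan et al.~\cite{CyganNPPRW11} and then to argue that, given the elimination forest $\TT$, the reduction can be carried out so that the treedepth grows by only one. Concretely, from $(G,k)$ with $G = (V,E)$ I would build $G' = (V',E')$ where $V' = V \cup \{x_e : e \in E\}$ and $E' = E \cup \bigcup_{e = \{u,v\} \in E}\{\{x_e,u\},\{x_e,v\}\}$, i.e.\ each edge of $G$ gets a fresh private degree-two vertex subdividing-but-not-deleting it; set $T = \{x_e : e \in E\}$ and $k' = k + |E|$. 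This is clearly computable in polynomial time, and the trivial degenerate cases (e.g.\ $E = \emptyset$) can be dispatched directly.

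For equivalence I would verify both directions. If $X$ is a connected vertex cover of $G$ with $|X| = k$, then $X \cup T$ has size $k'$, contains $T$, and $G'[X \cup T]$ is connected because $G[X]$ is connected and every $x_e$, with $e = \{u,v\}$, has a neighbour in $X$ (as $X$ covers $e$) and hence attaches to $G[X]$. Conversely, if $X'$ solves the \textsc{Steiner Tree} instance, then $T \subseteq X'$ forces $|X' \cap V| = k' - |T| = k$; each $x_e$ has degree at most two in $G'$ and $G'[X']$ is connected, so $x_e$ has a neighbour in $X'$, whence $X' \cap V$ is a vertex cover of $G$; and $G[X' \cap V]$ is connected since deleting each $x_e$ from $G'[X']$ cannot disconnect it (either $x_e$ is a leaf of $G'[X']$, or both endpoints $u,v$ of $e$ lie in $X'$ and are already adjacent in $G$). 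So $X' \cap V$ is a connected vertex cover of $G$ of size $k$. This is precisely the argument of Cygan et al., which I would cite rather than fully reproduce.

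The genuinely new point is the treedepth bound. Recall that for every edge $e = \{u,v\} \in E$ exactly one endpoint is a proper ancestor of the other in $\TT$; write $u_e$ for the ancestor endpoint and $v_e$ for the descendant endpoint. I would obtain $\TT'$ from $\TT$ by inserting, for each $e \in E$, the vertex $x_e$ as a new leaf whose parent is $v_e$. Then $\TT'$ is an elimination forest of $G'$: an edge of $G$ still satisfies the ancestor–descendant condition because $\TT'$ restricted to $V$ equals $\TT$; and for a new edge incident to $x_e$, namely $\{x_e,v_e\}$ or $\{x_e,u_e\}$, the vertex $x_e$ is a descendant of $v_e$ by construction and therefore also of $u_e$, since $v_e$ is a descendant of $u_e$. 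Each $x_e$ sits at depth $\mathrm{depth}_{\TT}(v_e)+1 \le d+1$ and nothing else changes, so $\mathrm{depth}(\TT') \le d+1$; computing $\TT'$ takes polynomial time once the ancestor relation of $\TT$ is available.

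The only real subtlety, and the step I would flag, is the placement of $x_e$ within $\TT'$: it must be hung below the \emph{deeper} endpoint $v_e$, because only then do both incident edges $\{x_e,v_e\}$ and $\{x_e,u_e\}$ respect the ancestor–descendant requirement of an elimination forest; attaching $x_e$ below $u_e$ instead, or as a new root, would break it. Everything else is routine bookkeeping.
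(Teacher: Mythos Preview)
Your proposal is correct and follows essentially the same approach as the paper: add one terminal per edge of $G$, set $k' = k + |E|$, and obtain $\TT'$ by hanging each new terminal as a leaf below the deeper endpoint of its edge in $\TT$. The only difference is cosmetic: the paper \emph{subdivides} each edge (the original edge $\{u,v\}$ is removed and replaced by the path $u\,w_e\,v$), whereas you keep the original edge and add $x_e$ adjacent to both endpoints; this makes your backward connectivity argument slightly cleaner (since $u$ and $v$ remain adjacent in $G'$), but the elimination-forest construction and depth bound are identical in both versions.
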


\begin{proof}
 To construct the instance $(G', T, k')$ we subdivide each edge of $G$ with a terminal, formally we replace $e = \{u, v\} \in E$ by a new vertex $w_e$ and two edges $\{u, w_e\}, \{w_e, v\}$ and take $T = \{w_e \sep e \in E(G)\}$. Setting $k' = |T| + k$, it is easy to see that $G$ contains a connected vertex cover of size $k$ if and only if $(G', T)$ contains a Steiner tree of cardinality at most $k' = |T| + k = E(G) + k$. 
 
 To obtain the elimination forest $\TT'$ of depth $d + 1$ for $G'$ first observe that $G' - T$ is an independent set. Hence, $\TT$ is also a valid elimination forest for $G' - T$. For an edge $e = \{u, v\} \in E(G)$, we can assume without loss of generality that $u$ is a descendant of $v$ in $\TT$ and attach a node corresponding to $w_e$ directly below $u$. Then, the endpoints of $\{u, w_e\}$ and $\{w_e, v\}$ are in an ancestor-descendant-relationship respectively. We repeat this process for all $w_e$ to obtain an elimination forest $\TT'$ of depth $d + 1$ for $G'$.
\end{proof}

\subsection{Connected Odd Cycle Transversal}

\begin{problem}[doublelined]{Connected Odd Cycle Transversal}
 Input: & An undirected graph $G = (V, E)$ and an integer $k$. \\
 Question: & Is there a set $X \subseteq V$, $|X| = k$, such that $G[X]$ is connected and $G - X$ is bipartite, i.e.\ there are $A$ and $B$ so that $V \setminus X = A \cupdot B$ and $E(G[A]) = E(G[B]) = \emptyset$?
\end{problem}

The algorithm for \textsc{Connected Odd Cycle Transversal} is mostly straightforward, but we need to be careful when defining the solutions and candidates. We say that $(A, B)$ is a \emph{bipartition} of $G = (V, E)$, denoted by $(A, B) \in \bip(G)$, if $A \cupdot B = V$ and $E(G[A]) = E(G[B]) = \emptyset$. For a set $X \subseteq V$ the counting procedure will test all ways of assigning the vertices in $G - X$ to $A$ and $B$ so that $(A, B)$ is a bipartition of $G - X$. This always results in an even number of possibilities and hence we cannot consider solely $X$ as a solution, because then also connected solutions would be counted an even number of times and thus cancel. To avoid this issue, we consider pairs $(X, A)$ as solutions and candidates, where $X$ is an odd cycle transversal and $A$ one side of a bipartition of $G - X$. Formally, we make the following definitions:
\begin{align*}
 \candidates & = \{ (X, A) \in 2^V \times 2^V \sep X \cap A = \emptyset, (A, V \setminus (X \cup A)) \in \bip(G - X), |X| = k \}, \\
 \solutions & = \{ (X, A) \in \candidates \sep G[X] \text{ is connected} \}.
\end{align*}

As in the case of \textsc{Connected Dominating Set}, we choose a vertex $v_1$ and fix it to the left side of the consistent cut and the algorithm will try all choices for $v_1$. Hence, we define the candidate-cut-pairs by
\begin{equation*}
 \cutsols = \{ (X,(X_L, X_R)) \in \cuts(V) \sep X \in \candidates \text{ and } v_1 \in X_L \}.
\end{equation*}

To make the Isolation Lemma (\cref{thm:isolation}) work in this case, we choose, similarly to \textsc{Feedback Vertex Set}, the universe $U = V \times \{ \deleted, \remain \}$. To interpret our candidates as subsets of $U$ we identify a pair $(X, A)$ with $(X \times \{\deleted\}) \cup (A \times \{\remain\})$ and a weight function $\wfct \colon U \rightarrow [N]$ associates to a candidate $(X, A)$ the weight $\wfct((X \times \{ \deleted \}) \cup (A \times \{ \remain \}))$. 

\begin{thm}[\cite{CyganNPPRW11}]
 \label{thm:coct_correct}
 Let $\wfct \colon U \rightarrow [N]$ be a weight function, and $\cutsols$ and $\solutions$ as defined above. Then we have for every $\wtarget \in \NN$ that $|\solutions_\wtarget| \equiv |\cutsols_\wtarget|$, where $\solutions_\wtarget = \{(X, A) \in \solutions \sep \wfct((X \times \{\deleted\}) \cup (A \times \{\remain\})) = \wtarget\}$ and $\cutsols_\wtarget = \{((X,A),C) \in \cutsols \sep \wfct((X \times \{\deleted\}) \cup (A \times \{\remain\})) = \wtarget\}$.
\end{thm}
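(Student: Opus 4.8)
The plan is to follow the proof of \cref{thm:cvc_correct} almost verbatim, using \cref{thm:cons_cut} to count, for each candidate, how many consistent cuts it is compatible with. The only genuinely new feature is that candidates and solutions are pairs $(X,A)$ rather than single vertex sets, so the point to check is that this extra coordinate plays no role in the cut-counting.

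Fix the distinguished vertex $v_1$ and recall that, under the identification of a pair $(X,A)$ with $(X \times \{\deleted\}) \cup (A \times \{\remain\}) \subseteq U$, the sets $\candidates_\wtarget$, $\solutions_\wtarget$, and $\cutsols_\wtarget$ are the weight-$\wtarget$ slices of $\candidates$, $\solutions$, and $\cutsols$. The key observation is that in a candidate-cut-pair $((X,A),(X_L,X_R)) \in \cutsols$ the pair $(X_L,X_R)$ is a consistent cut of $G[X]$ only, while the bipartition side $A$ is merely carried along to make the Isolation Lemma work and never constrains the cut. Hence, for a candidate $(X,A) \in \candidates$ with $v_1 \in X$, applying \cref{thm:cons_cut} to $G[X]$ shows that $(X,A)$ is the first coordinate of exactly $2^{\cc(G[X]) - 1}$ elements of $\cutsols$ (those whose cut puts $v_1$ on the left side), whereas a candidate with $v_1 \notin X$ contributes nothing. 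Summing over the candidates of weight $\wtarget$,
\[
 |\cutsols_\wtarget| = \sum_{(X,A) \in \candidates_\wtarget} 2^{\cc(G[X]) - 1},
\]
and reducing modulo $2$ kills every summand with $\cc(G[X]) \geq 2$. The remaining summands are exactly the pairs $(X,A) \in \candidates_\wtarget$ for which $G[X]$ is connected, i.e., the elements of $\solutions_\wtarget$, so $|\cutsols_\wtarget| \equiv |\solutions_\wtarget|$.

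I do not expect a real obstacle: \cref{thm:cons_cut} does all the work, exactly as for \textsc{Connected Vertex Cover}. The only things meriting a moment's care are, first, to state explicitly that the bipartition coordinate is decoupled from the cut (so that \cref{thm:cons_cut} applies to $G[X]$ regardless of $A$, and distinct pairs $(X,A_1),(X,A_2)$ with the same connected $G[X]$ are, correctly, counted separately), and second, as in the other problems, that the displayed identity only accounts for candidates containing $v_1$; this is harmless because \cref{algo:cutncount} is executed for every $v_1 \in V$, so any connected solution $(X,A)$ (which has $X \neq \emptyset$ whenever $k \geq 1$) is seen for at least one choice of $v_1$.
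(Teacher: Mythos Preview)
Your proposal is correct and follows exactly the paper's own argument: apply \cref{thm:cons_cut} to obtain $|\cutsols_\wtarget| = \sum_{(X,A) \in \candidates_\wtarget} 2^{\cc(G[X])-1}$ and reduce modulo~$2$. You are simply more explicit than the paper about two points it leaves implicit, namely that the bipartition coordinate~$A$ is inert for the cut count and that candidates with $v_1 \notin X$ contribute zero to the sum.
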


\begin{proof}
 \cref{thm:cons_cut} implies that $|\cutsols_\wtarget| = \sum_{(X, A) \in \candidates_\wtarget} 2^{\cc(G[X])-1}$, where $\candidates_\wtarget = \{(X,A) \in \candidates \sep \wfct((X \times \{\deleted\}) \cup (A \times \{\remain\})) = \wtarget \}$. Hence, $|\cutsols_\wtarget| \equiv |\{(X,A) \in \candidates_\wtarget \sep \cc(G[X]) = 1\}| = |\solutions_\wtarget|$.
\end{proof}

We can now present the counting procedure for \textsc{Connected Odd Cycle Transversal}.

\begin{lem}
 \label{thm:coct_count}
 Given a connected graph $G = (V, E)$, terminals $T \subseteq V$, an integer $k$, a weight function $\wfct \colon U \rightarrow [4n]$, and an elimination forest $\TT$ of depth $d$ for $G$, we can determine $|\cutsols_\wtarget|$ modulo 2 for all $0 \leq \wtarget \leq 4n^2$ in time $\Oh^*(4^d)$ and polynomial space.
\end{lem}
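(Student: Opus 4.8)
The plan is to follow the template used for \textsc{Connected Vertex Cover}, enlarging the state set so that it also records, for each vertex outside the transversal, its side of the bipartition. First I would drop the cardinality constraint and put, for every $V' \subseteq V$, $\widehat{\candidates}(V') = \{(X,A) \sep X \cap A = \emptyset,\ (A,\, V' \setminus (X \cup A)) \in \bip(G[V'] - X)\}$ and $\widehat{\cutsols}(V') = \{((X,A),(X_L,X_R)) \sep (X,A) \in \widehat{\candidates}(V'),\ (X,(X_L,X_R)) \in \cuts(V'),\ (v_1 \in V' \to v_1 \in X_L)\}$, so that restricting $\widehat{\cutsols}(V)$ to $|X| = k$ recovers $\cutsols$. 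The algorithm computes a polynomial in $\FF_2[Z_W, Z_X]$ whose coefficient of $Z_W^\wtarget Z_X^i$ equals, modulo $2$, the number of $((X,A),C) \in \widehat{\cutsols}(V)$ with $\wfct((X \times \{\deleted\}) \cup (A \times \{\remain\})) = \wtarget$ and $|X| = i$; the coefficients of $Z_W^\wtarget Z_X^k$ are then the sought values $|\cutsols_\wtarget| \bmod 2$. The state set is $\states = \{\zero_A, \zero_B, \one_L, \one_R\}$: the states $\one_L, \one_R$ put a vertex into $X$ and record its side of the consistent cut, while $\zero_A$ and $\zero_B$ put a vertex outside $X$ onto the $A$- resp.\ $B$-side of the bipartition of $G - X$. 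Unlike the marker vertices of \textsc{Feedback Vertex Set}, which are stateless, both $\zero_A$ and $\zero_B$ are genuinely needed, since whether $G - X$ is bipartite depends on which side each retained vertex sits on, not just on how many vertices are retained.

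Next I would define $\partialex{v}{f}$ and $\partialin{v}{g}$ in the by-now-standard way: an element is a consistently cut subgraph of $G[\tree(v)]$ (resp.\ $G[\tree[v]]$) enriched by a choice of $A$-side $A$ among its vertices outside $X$ (the $B$-side being the complement within $\tree(v)$), such that, writing $X' = X \cup f^{-1}(\{\one_L,\one_R\})$, $A' = A \cup f^{-1}(\zero_A)$, and $C' = (X_L \cup f^{-1}(\one_L),\, X_R \cup f^{-1}(\one_R))$, one has $((X',A'),C') \in \widehat{\cutsols}(\broom[v])$, and analogously for $g$ on $\tail(v)$. The polynomials $\Pex{v}{f},\Pin{v}{g} \in \FF_2[Z_W, Z_X]$ track $|X|$ and $\wfct((X \times \{\deleted\}) \cup (A \times \{\remain\}))$ over the $\tree(v)$- resp.\ $\tree[v]$-part only. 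The recurrences then parallel \textsc{Connected Vertex Cover}: for a leaf $v$, $\Pex{v}{f}$ is the indicator of the conjunction ``$(f^{-1}(\one_L),f^{-1}(\one_R))$ is a consistent cut of $G[f^{-1}(\{\one_L,\one_R\})]$'', ``$(f^{-1}(\zero_A),f^{-1}(\zero_B)) \in \bip(G[\tail[v]] - f^{-1}(\{\one_L,\one_R\}))$'', and ``$v_1 \in \tail[v] \to f(v_1) = \one_L$''; for an internal $v$, $\Pex{v}{f} = \prod_{u \in \child(v)} \Pin{u}{f}$; and
\begin{equation*}
 \Pin{v}{g} = \Pex{v}{g[v \mapsto \zero_A]}\, Z_W^{\wfct(v, \remain)} + \Pex{v}{g[v \mapsto \zero_B]} + \bigl(\Pex{v}{g[v \mapsto \one_L]} + \Pex{v}{g[v \mapsto \one_R]}\bigr)\, Z_W^{\wfct(v, \deleted)}\, Z_X,
\end{equation*}
reflecting that an $A$-vertex contributes weight $\wfct(v,\remain)$ and no size, a $B$-vertex contributes neither, and an $X$-vertex contributes weight $\wfct(v,\deleted)$ and size one.

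For correctness, the leaf and product equations are argued as for \textsc{Connected Vertex Cover}: the elimination-forest property ensures that $G[\broom[v]]$ has no edge between distinct child-subtrees, so both consistency of the cut and bipartiteness of $G[\broom[v]] - X'$ decompose into the conjunction of the same properties over the $G[\broom[u]]$, $u \in \child(v)$, while all edges inside $\tail[v]$ or between $\tail[v]$ and $\tree(v)$ are checked at the leaves. For the recurrence for $\Pin{v}{g}$ I would run the usual bijection argument split into the four cases $v \in A$, $v \in B$, $v \in X_L$, $v \in X_R$, in each case removing $v$ from the subtree part and checking that $X'$, $A'$, and $C'$ are preserved; only the case $v \in B$ requires a moment's care, namely verifying that moving $v$ from the implicit $B$-side of $\tree[v]$ into $g^{-1}(\zero_B)$ leaves $\broom[v] \setminus (X' \cup A')$ unchanged, so that no tracker is updated. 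When $v = v_1$ only the $v \in X_L$ branch survives. I expect the main (and rather mild) obstacle to be exactly this bookkeeping of the implicit $B$-side, together with applying the weight asymmetry between $\zero_A$ and $\zero_B$ consistently; there is no new algorithmic idea beyond \textsc{Connected Vertex Cover}. Finally, since $|\states| = 4$ and all polynomials still have only polynomially many monomials, the running time $\Oh^*(4^d)$ and the polynomial space bound follow verbatim from the general analysis in \cref{sec:gen_time_and_space}.
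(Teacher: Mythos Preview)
Your proposal is correct and follows essentially the same approach as the paper: the same state set $\{\zero_A,\zero_B,\one_L,\one_R\}$, the same definitions of $\widehat{\candidates}(V')$, $\widehat{\cutsols}(V')$, $\partialex{v}{f}$, $\partialin{v}{g}$, the identical three recurrences (leaf check, product over children, and the four-term recurrence for $\Pin{v}{g}$ with exactly your weight/size updates), and the same four-case bijection argument for correctness. Your side remark on why $\zero_A$ and $\zero_B$ must both be explicit states (in contrast to the stateless markers of \textsc{Feedback Vertex Set}) is a nice addition but does not alter the argument.
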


\begin{proof}
 For induced subgraphs $G[V']$, where $V' \subseteq V$, we define the set of candidates and candidate-cut-pairs by:
 \begin{equation*}
 \begin{array}{lll}
  \widehat{\candidates}(V') & = \{ (X, A) \in 2^{V'} \times 2^{V'} \sep & X \cap A = \emptyset, (A, V' \setminus (X \cup A)) \in \bip(G[V'] - X)\}, \\
  \widehat{\cutsols}(V') & = \{ ((X, A), (X_L, X_R)) \sep & (X, A) \in \widehat{\candidates}(V'),  (X, (X_L, X_R)) \in \cuts(V'), \\
 & & \text{ and } (v_1 \in V' \rightarrow v_1 \in X_L)\}.
 \end{array}
 \end{equation*}

 The states are given by $\states = \{\zero_A, \zero_B, \one_L, \one_R\}$, where $\zero_A$ and $\zero_B$ represent vertices that are not in the odd cycle transversal and the subscript denotes to which side of the bipartition they belong to. The states $\one_L$ and $\one_R$ represent vertices inside the odd cycle transversal on the respective side of the consistent cut. For every vertex $v$ and assignment $f \colon \tail[v] \rightarrow \{\zero_A, \zero_B, \one_L, \one_R\}$ we define the partial solutions at $v$, but excluding $v$, that respect $f$ by 
 \begin{align*}
  \partialex{v}{f} = \{ ((X, A), (X_L, X_R)) \in \widehat{\cutsols}(\tree(v)) \sep\, & X' = X \cup f^{-1}(\{\one_L, \one_R\}), \\
                                                                & C' = (X_L \cup f^{-1}(\one_L), X_R \cup f^{-1}(\one_R)), \\
                                                                & A' = A \cup f^{-1}(\zero_A), \\
                                                                & ((X', A'), C') \in \widehat{\cutsols}(\broom[v]) \}.
 \end{align*}
 The coefficient of $Z_W^\wtarget Z_X^i$ in the associated polynomial $\Pex{v}{f} \in \FF_2[Z_W, Z_X]$ is given by
 \begin{equation*}
  |\{((X,A),(X_L, X_R)) \in \partialex{v}{f} \sep \wfct((X \times \{\deleted\}) \cup (A \times \{\remain\})) = \wtarget, |X| = i \}| \mod 2.
 \end{equation*}
 
 Similarly, for every vertex $v$ and assignment $g \colon \tail(v) \rightarrow \{\zero_A, \zero_B, \one_L, \one_R\}$ we define the partial solutions at $v$, possibly including $v$, that respect $g$ by
 \begin{align*}
  \partialin{v}{g} = \{ ((X, A), (X_L, X_R)) \in \widehat{\cutsols}(\tree[v]) \sep\, & X' = X \cup g^{-1}(\{\one_L, \one_R\}), \\
                                                                & C' = (X_L \cup g^{-1}(\one_L), X_R \cup g^{-1}(\one_R)), \\
                                                                & A' = A \cup g^{-1}(\zero_A), \\
                                                                & ((X', A'), C') \in \widehat{\cutsols}(\broom[v]) \}.
 \end{align*}
 The coefficient of $Z_W^\wtarget Z_X^i$ in the associated polynomial $\Pin{v}{g} \in \FF_2[Z_W, Z_X]$ is given by
 \begin{equation*}
  |\{((X,A),(X_L, X_R)) \in \partialin{v}{g} \sep \wfct((X \times \{\deleted\}) \cup (A \times \{\remain\})) = \wtarget, |X| = i \}| \mod 2.
 \end{equation*}
 The exponents of the monomials $Z_W^\wtarget Z_X^i$ range between $0 \leq \wtarget \leq 4n^2$ and $0 \leq i \leq n$. Extracting the coefficients of the monomials $Z_W^\wtarget Z_X^k$ for $0 \leq \wtarget \leq 4n^2$ in the polynomial $\Pin{r}{\emptyset}$ at the root node $r$ yields the desired numbers. The polynomials $\Pex{v}{f}$ and $\Pin{v}{g}$ are computed by applying the three upcoming equations. To compute $\Pex{v}{f}$ in the case that $v$ is a leaf node in $\TT$, we only need to verify that $f$ induces a valid partial solution, yielding the equation 
 \begin{equation}
  \label{eq:coct_pex_leaf}
  \begin{array}{lcl}
  \Pex{v}{f} & = & [(f^{-1}(\one_L), f^{-1}(\one_R)) \text{ is a consistent cut of } G[f^{-1}(\{\one_L, \one_R\})]] \\     
             & \cdot & [(f^{-1}(\zero_A), f^{-1}(\zero_B)) \text{ is a bipartition of } G[f^{-1}(\{\zero_A, \zero_B\})]] \\
             & \cdot & [v_1 \in \tail[v] \rightarrow f(v_1) = \one_L].
  \end{array}
 \end{equation}
 When $v$ is not a leaf, we can compute $\Pex{v}{f}$ by the usual recurrence
 \begin{equation}
  \label{eq:coct_pex_branch}
  \Pex{v}{f} = \prod_{u \in \child(v)} \Pin{u}{f}.
 \end{equation}
 The polynomial $\Pin{v}{g}$ is computed by the recurrence
 \begin{equation}
  \label{eq:coct_pin}
  \begin{array}{rlcl}
   \Pin{v}{g} = & \Pex{v}{g[v \mapsto \zero_A]} Z_W^{\wfct(v, \remain)} & + & \Pex{v}{g[v \mapsto \zero_B]}\\
              + & \Pex{v}{g[v \mapsto \one_L]} Z_W^{\wfct(v, \deleted)} Z_X & + & \Pex{v}{g[v \mapsto \one_R]} Z_W^{\wfct(v, \deleted)} Z_X,
  \end{array}
 \end{equation}
 which tests all four states of $v$ in a candidate-cut-pair.

 We proceed by proving the correctness of equations \cref{eq:coct_pex_leaf} to \cref{eq:coct_pin}. Similar to \textsc{Connected Vertex Cover}, if $f(v_1) \neq \one_L$ or $g(v_1) \neq \one_L$ for appropriate $f$ or $g$, then $\Pex{v}{f} = 0$ or $\Pin{v}{g} = 0$ respectively. To see the correctness of equation \cref{eq:coct_pex_leaf}, note that $\partialex{v}{f}$ can contain at most $((\emptyset, \emptyset), (\emptyset, \emptyset))$ when $v$ is a leaf and verifying its containment reduces to the three predicates on the right-hand side of equation \cref{eq:coct_pex_leaf}. 
 
 The proof of correctness for equation \cref{eq:coct_pex_branch} proceeds as usual. Here, when combining disjoint partial solutions we take the union of the associated bipartitions and hence crucially rely on the property that there are no edges between $\tree[u]$ and $\tree[u']$ for $u \neq u' \in \child(v)$, as otherwise the result need not be a bipartition again. We omit further details.
 
 It remains to prove the correctness of equation \cref{eq:coct_pin}. We consider a partial solution $((X, A), (X_L, X_R)) \in \partialin{v}{g}$ and distinguish between four cases based on the state of $v$ in this partial solution.
 \begin{enumerate}
  \item If $v \in A$, then $((X, A \setminus \{v\}), (X_L, X_R)) \in \partialex{v}{f}$, where $f = g[v \mapsto \zero_A]$, because this preserves the definition of $X'$, $C'$, and $A'$ in the predicate of $\partialin{v}{g}$ and $\partialex{v}{f}$. Vice versa, any partial solution in $\partialex{v}{f}$ can be extended to a partial solution in $\partialin{v}{g}$ by adding $v$ to $A$. This increases the weight of $(X,A)$ by $\wfct(v, \remain)$ and thus we must multiply by $Z_W^{\wfct(v, \remain)}$ to keep track of this change.
  \item If $v \notin X \cup A$, then $((X,A),(X_L, X_R)) \in \partialin{v}{g}$ if and only if $((X,A),(X_L, X_R)) \in \partialex{v}{f}$, where $f = g[v \mapsto \zero_B]$. Since $(X,A)$ is not changed, we do not need to multiply by formal variables in this case.
  \item If $v \in X_L \subseteq X$, then $((X \setminus \{v\}, A), (X_L \setminus \{v\}, X_R)) \in \partialex{v}{f}$, where $f = g[v \mapsto \one_L]$, because as in case 1 and 2 this preserves the definition of $X'$, $C'$, and $A'$. Vice versa, any partial solution in $\partialex{v}{f}$ can be extended to a partial solution in $\partialin{v}{g}$ by adding $v$ to $X_L$. This increases the weight of $(X, A)$ by $\wfct(v,\deleted)$ and the size of $X$ by 1, hence this is tracked correctly by multiplication with $Z_W^{\wfct(v, \deleted)} Z_X$.
  \item If $v \in X_R \subseteq X$, then the proof is analogous to case 3.
 \end{enumerate}
 If $v = v_1$, then only the third case occurs and, by an earlier discussion, equation \cref{eq:coct_pin} simplifies to $\Pin{v}{g} = \Pex{v}{g[v \mapsto \one_L]} Z_W^{\wfct(v, \deleted)} Z_X$. The running time and space bound follows from the general discussion in \cref{sec:gen_time_and_space}. 
\end{proof}

\begin{thm}
 There exists a Monte-Carlo algorithm that given an elimination forest of depth $d$ solves \textsc{Connected Odd Cycle Transversal} in time $\Oh^*(4^d)$ and polynomial space. The algorithm cannot give false positives and may give false negatives with probability at most 1/2.
\end{thm}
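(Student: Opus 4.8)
The plan is to plug the pieces just assembled into the generic Cut\&Count reduction of \cref{thm:general_setup}: run \cref{algo:cutncount} with universe $U = V \times \{\deleted, \remain\}$, with the families $\solutions$ and $\cutsols$ defined above, and with the counting routine of \cref{thm:coct_count} in the role of $\countc$. Since a connected odd cycle transversal need not contain any prescribed vertex, I would---exactly as in the \textsc{Connected Dominating Set} algorithm---run this procedure once for each choice $v_1 := v$ with $v \in V$ and answer $\mathbf{true}$ as soon as one of these $\Oh(n)$ runs does; this multiplies the running time by only a polynomial factor and uses no extra space. Enlarging the universe from $V$ to $V \times \{\deleted,\remain\}$ is what lets the Isolation Lemma (\cref{thm:isolation}), invoked inside the proof of \cref{thm:general_setup}, isolate a \emph{pair} $(X,A)$ rather than a bare set $X$: a bipartite graph on a nonempty vertex set has an even number of bipartitions, so without the second coordinate the connected solutions would also cancel modulo $2$.

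Next I would check the two hypotheses of \cref{thm:general_setup} for a fixed $v_1$. Condition~1 is \cref{thm:coct_correct}, whose proof applies \cref{thm:cons_cut}: a candidate $(X,A)$ is consistent with $2^{\cc(G[X])-1}$ cuts that place $v_1$ on the left side, an even number unless $G[X]$ is connected, so the parity count over candidates of a fixed weight keeps precisely the connected ones. Condition~2 is supplied verbatim by \cref{thm:coct_count}, which evaluates $|\cutsols_\wtarget|$ modulo $2$ for all relevant $\wtarget$ in time $\Oh^*(4^d)$ and polynomial space. The output guarantees then follow as in the proof of \cref{thm:general_setup}: if there is no connected odd cycle transversal of size $k$, then $\solutions = \emptyset$ for every $v_1$, so every $\countc$ call is $0$ modulo $2$ and \cref{algo:cutncount} answers $\mathbf{false}$ in each run---there are no false positives; if such a transversal $X$ exists (the degenerate case $k = 0$ being handled directly), fix a bipartition $(A, V \setminus (X \cup A))$ of $G - X$ so that $(X,A) \in \solutions$ and fix some $v_1 \in X$, and note that in the corresponding run the subfamily of solutions containing $v_1$ is nonempty, so \cref{thm:isolation} yields, with probability at least $1/2$, a target weight with a unique such solution, whence that $\countc$ value is odd and \cref{algo:cutncount} returns $\mathbf{true}$. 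For the resource bounds, the $\Oh(n)$ runs are carried out sequentially, each computing the root polynomial $\Pin{r}{\emptyset}$ once (or, as \cref{algo:cutncount} is written, calling $\countc$ for $\Oh(n^2)$ target weights) at cost $\Oh^*(4^d)$ and polynomial space by \cref{thm:coct_count}; hence we spend time $\Oh^*(4^d)$ and polynomial space in total.

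I do not expect a genuine obstacle here: relative to \textsc{Connected Dominating Set} the only changes are the four states $\{\zero_A, \zero_B, \one_L, \one_R\}$ and carrying the bipartition side $A$ along the recursion, and both are already absorbed into \cref{thm:coct_count}. The point I would double-check is the familiar interplay between pinning $v_1$ and our ignorance of the solution---pinning a single vertex of $X$ to the left side is precisely what makes connected candidates count oddly while disconnected ones cancel (\cref{thm:cons_cut}), but it only detects solutions containing that vertex, so letting $v_1$ range over all of $V$ is both necessary and sufficient---which, together with the $k=0$ corner case, is handled exactly as in the \textsc{Connected Dominating Set} argument.
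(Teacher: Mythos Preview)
Your proposal is correct and follows essentially the same approach as the paper: iterate over all choices of $v_1$, run \cref{algo:cutncount} with $U = V \times \{\deleted,\remain\}$ and the counting procedure of \cref{thm:coct_count}, and appeal to \cref{thm:general_setup} together with \cref{thm:coct_correct} for correctness and to \cref{thm:coct_count} for the time and space bounds. Your write-up is in fact more careful than the paper's, explicitly addressing why the universe must be enlarged, why one must range over all $v_1$, and the $k=0$ corner case.
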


\begin{proof}
 We let $v_1$ iterate through all vertices. For each choice of $v_1$, we run \cref{algo:cutncount} with $U = V \times \{\deleted, \remain\}$ and the procedure $\countc$ as given by \cref{thm:coct_count}. The correctness follows from \cref{thm:general_setup} and \cref{thm:coct_correct}. The running time and space bound follows from \cref{thm:coct_count}.
\end{proof}

\section{Conclusion}
\label{sec:conclusion}

The Cut\&Count technique of Cygan et al.~\cite{CyganNPPRW11} has provided single-exponential-time and -space algorithms for many connectivity problems parameterized by treewidth. We have shown that this technique is just as useful for parameterization by treedepth, where we have obtained single-exponential-time and \emph{polynomial-space} algorithms. Our algorithms run in time $\Oh^*(\alpha^{d})$, where $\alpha$ is a small constant and $d$ is the depth of a given elimination forest. The base $\alpha$ matches that obtained by Cygan et al.~\cite{CyganNPPRW11} for parameterization by treewidth. Assuming SETH, this base is optimal for treewidth, or even pathwidth \cite{CyganNPPRW11}. In principle, since treedepth is a larger parameter than both treewidth and pathwidth, it may be possible to obtain better running times when parameterizing by treedepth, possibly at the cost of using exponential space. The style of construction, used to obtain lower bounds relative to treewidth, used by Lokshtanov et al.~\cite{LokshtanovMS18} and Cygan et al.~\cite{CyganNPPRW11}, necessitates long paths and is thereby unsuitable for bounds relative to treedepth. Thus, the question remains whether our running times are optimal; it is tempting to conjecture that they are.

While we have not given the proofs, our techniques also extend to other problems like \textsc{Connected Feedback Vertex Set} and \textsc{Connected Total Dominating Set}. However, there are several problems, including \textsc{Cycle Cover} and \textsc{Longest Cycle}, for which Cygan et al.~\cite{CyganNPPRW11} obtain efficient algorithms, where it is yet unclear how to solve them in polynomial space when parameterizing by treedepth. In particular, \textsc{Hamiltonian Path} and \textsc{Hamiltonian Cycle} share the same issues, namely that the algorithms parameterized by treewidth keep track of the degrees in the partial solutions and it is not clear how to do that when branching on the elimination forest while only using polynomial space. Belbasi and F\"urer~\cite{BelbasiF19} can count Hamiltonian cycles in polynomial space, but their running time also depends on the width of a given tree decomposition. An algorithm for any of these problems parameterized by treedepth, with single-exponential running time and requiring only polynomial space, would be quite interesting. As mentioned in \cref{followup} this has by now been solved by Nederlof et al.~\cite{NederlofPSW20}.

\newpage

\bibliography{cutandcount}

\end{document}